\begin{document}
\title{Combinatorial Properties of Self-Overlapping Curves and Interior Boundaries\thanks{Both authors were supported by NSF grant CCF-1618469. Parker Evans was supported by a Goldwater Scholarship from the Goldwater Foundation. An extended abstract of this paper has been accepted for publication at SoCG \cite{efw20}.}}

\author{Parker Evans\footnote{Department of Mathematics, Rice University, Houston, TX, USA; Parker.G.Evans@rice.edu}
  \and
Carola Wenk\footnote{Department of Computer Science, Tulane University, New Orleans, LA, USA;cwenk@tulane.edu}}

\date{}

\maketitle

\begin{abstract}
We study the interplay between the recently defined
concept of \emph{minimum homotopy area}
and the classical topic of \emph{self-overlapping curves}.  The latter are
plane curves which are the image of the boundary of an immersed disk.
Our first contribution is to prove new
sufficient combinatorial conditions for a curve to be \SO.
%
We show that a curve~$\gamma$ with Whitney index 1 and without any
\SO\ subcurves is \SO.
As a corollary, we obtain sufficient conditions for
\SO ness solely in terms of the Whitney index
of the curve and its subcurves.
These results follow from our second contribution, which shows that
any plane curve $\gamma$, modulo a basepoint condition, is transformed
into an \emph{interior boundary} by wrapping around $\gamma$ with
Jordan curves. Equivalently, the minimum homotopy area of $\gamma$ is
reduced to the minimal possible threshold, namely the winding area, through 
wrapping. In fact, we show that $n+1$ wraps suffice, where $\gamma$ has $n$ vertices.
Our third contribution is to prove the equivalence
of various definitions of \SO\ curves and interior boundaries, often
implicit in the literature. We also introduce and characterize \emph{zero-obstinance
curves}, further generalizations of interior boundaries defined
by optimality in minimum homotopy area.
\end{abstract}

\section{Introduction}\label{sec:intro}
Classically, a curve $\gamma: \mathbb{S}^1 \rightarrow \mathbb{R}^2$ is called
\myemph{\SO} if there is an orientation-preserving immersion $F:
\mathbb{D}^2 \rightarrow \mathbb{R}^2$ of the unit disk $\mathbb{D}^2$, a map of full rank on the entire unit
disk $\mathbb{D}^2$, such that $F|_{\partial \mathbb{D}^2} = \gamma$.
One can think of such an immersion as distorting a unit
disk that lies flat in the plane and stretching and pulling it continuously   without leaving the plane
and without twisting or pinching it~\cite{Immersion}.
If the disk is painted blue on top and pink on the
bottom, then one only sees blue. 
\begin{figure}[htbp]
\centering
    \includegraphics[width=.35\textwidth]{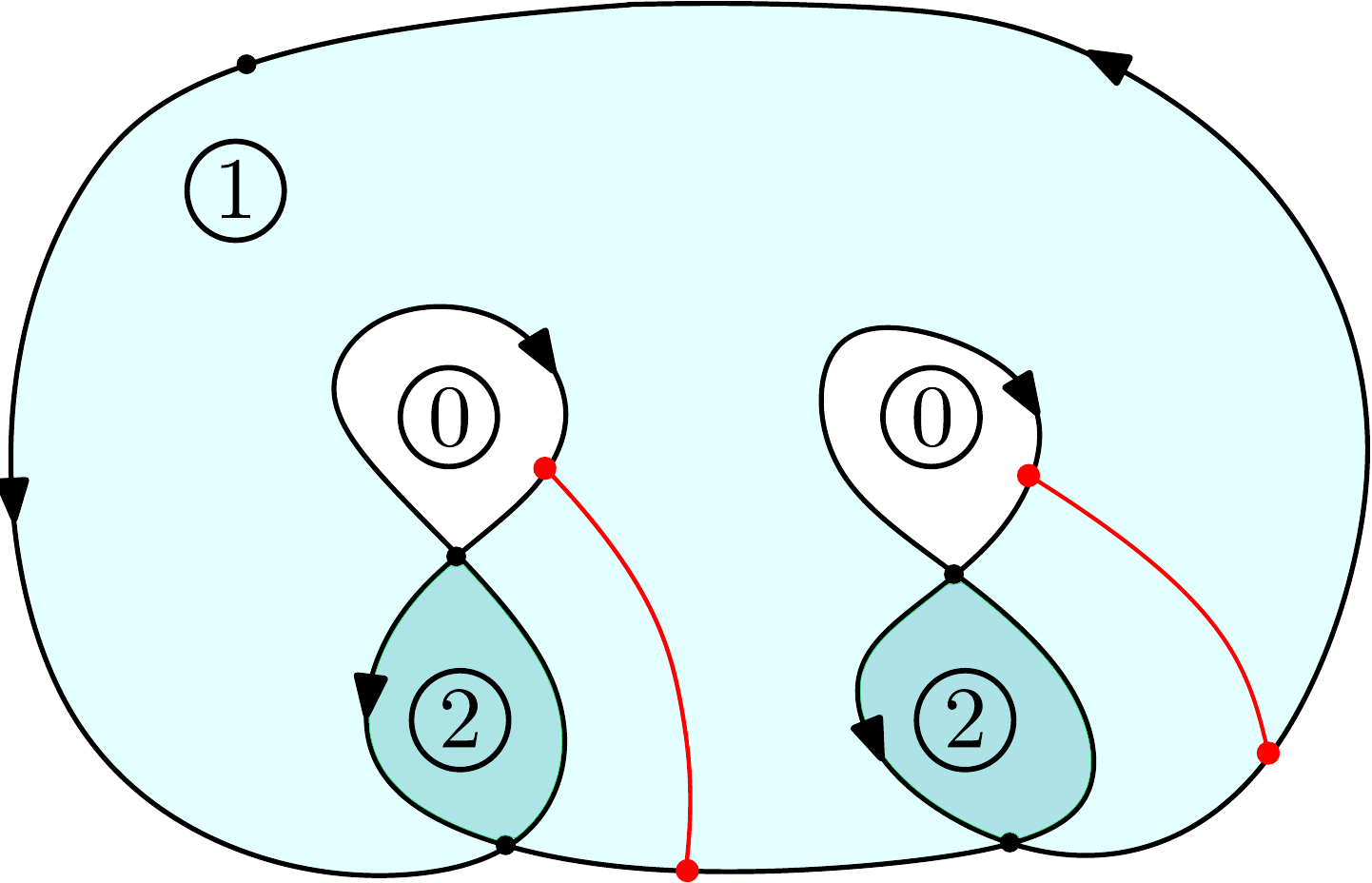}
    \caption{A \SO\ curve $\gamma$ with
      winding numbers for the faces circled. The Blank cuts, shown in red, slice $\gamma$
      into a collection of simple positively oriented (counterclockwise) Jordan~curves.
      }
    \label{fig:NiceSO}
\end{figure}
If we also imagine the
disk being semi-transparent, then
the blue will appear darker
in the regions where it overlaps itself; see \figref{NiceSO}.
That means, any \SO \,
curve~$\gamma$ must have non-negative winding numbers, $wn(x, \gamma) \geq
0$ for every~$x \in \R^2$. We call this condition \myemph{positive
  consistent}.
Another simple
and intuitive view originates from Blank \cite{BL67}: The curve is
\SO\ when we can cut it along simple curves into simple
positively oriented Jordan curves, i.e., a collection
of blue topological disks.
\myemph{Interior boundaries} are generalizations of \SO\ curves that
are defined similarly, except that $F$ is an interior map which allows
finitely many branch points \cite{MarxBranchpoints}.
%
Interior boundaries are  composed of multiple \SO\
curves (of the same orientation); see \figref{smallerCurveLattice} for an example.
In this paper, all curves $\gamma: [0,1] \rightarrow \R^2$ are
assumed to be closed, immersed, and generic, i.e., with only finitely
many intersection points, each of which are transverse double
points. We also assume $\gamma'(t)$ exists and is nonzero for all
$t\in[0,1]$.
%
%
We show new combinatorial properties of \SO\ curves and interior boundaries
by revealing new connections to the minimum homotopy area of curves.

\begin{figure}[hbt]
\centering
    \includegraphics[width=\textwidth]{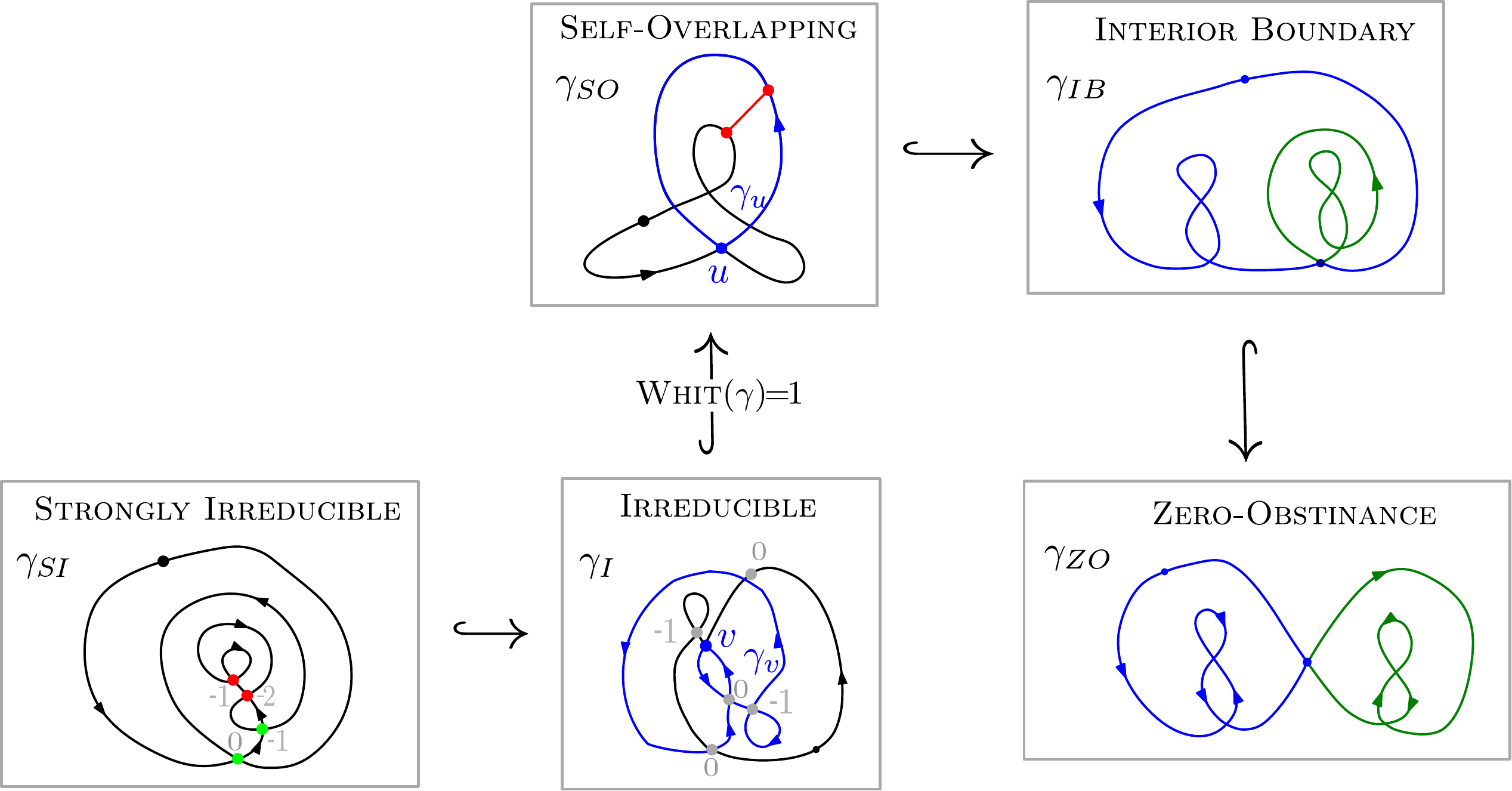}
    \caption{Example curves of different curve classes and inclusion relationships between the
      classes.
       $\gamma_{SO}$ is \SO\ as indicated by the Blank cuts in red.
      $\gamma_{IB}$ is an interior boundary consisting of two
      \SO\ curves (of the same orientation), one in blue the other in
      green.
      The bottom row shows curve classes that are introduced in this paper:
      $\gamma_{SI}$ is strongly irreducible as can be seen from the
      non-positive Whitney indices (shown in gray) of its direct
      split subcurves. Similarly, $\gamma_I$ is irreducible; note that
      $\gamma_{v}$ has Whitney index $1$ but is not \SO.
      Also note that $\gamma_{SO}$ is not irreducible since $\gamma_u$ is \SO.
      $\gamma_{ZO}$ also consists of two \SO\ curves but of different orientation and is therefore not an interior boundary, but it has zero obstinance.
    }
   \label{fig:smallerCurveLattice}
\end{figure}

\subsection{Related work}
\paragraph{Self-Overlapping Curves and Interior Boundaries.}
Self-overlapping curves and interior boundaries have
a rather rich history, and have been studied
under the lenses of analysis, topology, geometry, combinatorics, and graph
theory~\cite{BL67,Eppstein,Graver,MarxBranchpoints,Marx,Mukherjee,Immersion,SVW92,TT61}.
In the 1960s, Titus \cite{TT61} provided the first algorithm to test whether a curve is \SO\ (or an interior boundary),
by defining a set of cuts that must cut the curve into smaller subcurves that are \SO\ (or interior boundaries).
%
In a 1967 PhD thesis \cite{BL67}, Blank proved that a curve is \SO\ iff there
is a sequence of cuts (different from Titus cuts) that completely
decompose the curve into simple pieces. He represents plane curves with words
and showed that one can determine the existence
of a cut decomposition by looking for algebraic decompositions of the
word. In the 1970s, Marx \cite{Marx} extended Blank's work to give an algorithm to test if a curve is an interior boundary.
In the 1990s, Shor and Van Wyk \cite{SVW92} expedited Blank's algorithm
to run in~$O(N^3)$ time for a polygonal curve with $N$ line segments. Their dynamic programming algorithm is currently the fastest algorithm to test for \SO ness. It is not known whether this runtime bound is tight or whether a faster runtime might be achievable.
In distantly related work, Eppstein and Mumford \cite{Eppstein} showed
that it is NP-complete to determine whether a fixed \SO\ curve $\gamma$
is the 2D projection of an immersed surface in $\R^3$ defined over a compact two-manifold with boundary.
Graver and Cargo \cite{Graver} approached the problem from a
graph-theoretical perspective using so-called covering graphs.
All of these algorithms also compute the number of inequivalent immersions.
Another fact we glean from Blank is that any \SO\ curve $\gamma$
necessarily makes one full turn, i.e., it has Whitney index
$\whit{\gamma} = 1$.  The necessity of Whitney index 1 and positive
consistency to be \SO\, are well-known and date back to
\cite{TT61}.

\paragraph{Minimum Homotopy Area.}
The \myemph{minimum homotopy area} $\mathemph{\sigma(\gamma)}$ is
the infimum of areas swept out by nullhomotopies of a closed plane
curve $\gamma$. The key link between minimum area homotopies and \SO\ curves arose in
\cite{fkw_mahncp_17,Karakoc}, where the authors showed that any curve $\gamma$ has a
minimum area homotopy realized by a sequence of nullhomotopies of
\SO\ subcurves (direct split subcurves; see \secref{directsplits} for the definition).
The minimum homotopy area was introduced by Chambers and Wang
\cite{CW13} as a more robust metric for curve comparison than homotopy
width (i.e., Fr\'echet distance or one of its variants) or homotopy height
\cite{Chambers}.  The minimum homotopy area can be computed in
$O(N^2\log N)$ time for consistent curves \cite{CW13}. For general
curves, Nie gave an algorithm to compute $\sigma(\gamma)$ based on an
algebraic interpretation of the problem that runs in~$O(N^6)$ time,
while the \SO\ decomposition result of \cite{fkw_mahncp_17} yields an
exponential-time algorithm.
%
%
The \myemph{winding area} $\mathemph{W(\gamma)}$ is the integral over all
winding numbers in the plane. A simple argument shows that~$\sigma(\gamma) \geq
W(\gamma)$; see \cite{CW13}.  Both
\SO\ curves and interior boundaries are characterized by positive consistency and
optimality in minimum homotopy area, $\sigma(\gamma) = W(\gamma)$. A curve
possessing both of these properties is \SO\ when~$\whit{\gamma} = 1$
and an interior boundary when $\whit{\gamma} \geq 1$.

\subsection{New Results}
We ask
the following question: what are the sufficient combinatorial conditions
for a plane curve to be \SO?
 Such conditions provide novel mathematical foundations that
 could pave the way for speeding up algorithms for related problems, such as
 deciding \SO ness or computing the minimum homotopy area of a curve.
The first contribution of this paper is to answer this question in the
affirmative (Theorems \ref{thm:ISO} and \ref{thm:THSO} in \secref{wraps}):
We show that a curve $\gamma$ with Whitney index $1$
and without any \SO\ subcurves is \SO, and
we obtain sufficient conditions for a
curve to be \SO\ solely in terms of the Whitney index of the curve and its subcurves.
%
Here, we only consider \myemph{direct split} subcurves $\gamma_v$ that traverse $\gamma$ between the first and second appearance of vertex $v$ in the plane graph induced by $\gamma$.
Our results apply to (strongly) irreducible curves; see \figref{smallerCurveLattice}:
We call $\gamma$ \myemph{irreducible},
if every (proper) direct split is not \SO; if the Whitney index of each such
direct split is non-positive, then we call $\gamma$ \myemph{strongly irreducible}.

These results follow from our second contribution (Theorems
\ref{thm:WRB} and \ref{thm:WRB2} in \secref{wraps}), which shows that any plane curve
$\gamma$
is transformed into an
interior boundary by wrapping around $\gamma$ with Jordan curves.
Equivalently, this means that the minimum homotopy
area of~$\gamma$ is reduced to the minimal possible threshold, namely
the winding area, through wrapping.
See \figref{wrapIntro} for an example of wrapping.
Of course it is easy to see that---after repeated wrapping---a curve becomes positive consistent,
since a single wrap increases the winding numbers of each face by one. However, our result shows a new and non-trivial connection between wrapping and the minimum homotopy area.

\begin{figure}[hbt]
\centering
    \includegraphics[width=.5\textwidth]{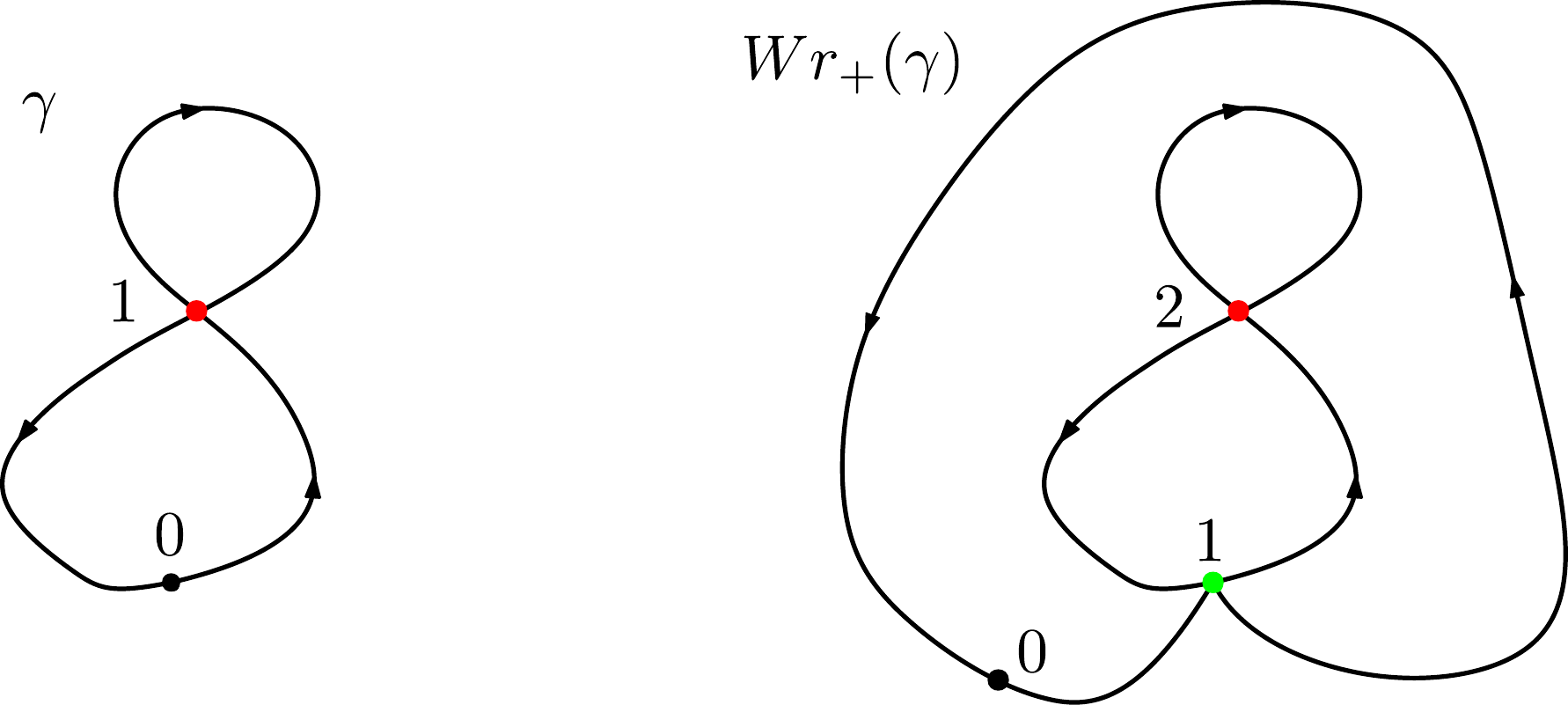}
    \caption{The curve $\gamma$ is not \SO, but its wrap $\Wr_+(\gamma)$ is \SO.
    }
   \label{fig:wrapIntro}
\end{figure}
%


The third contribution of this paper (in \secref{SO_Int}) is to unite the various
definitions and perspectives on \SO\ curves and interior
boundaries. We prove the equivalence of five definitions of
\SO\ curves and four of interior boundaries (Theorems \ref{thm:ESO} and \ref{thm:EIB}).
To this end, we define the new concept of \myemph{obstinance} of a
curve $\gamma$ as $\mathemph{\obs(\gamma)} = \sigma(\gamma)-W(\gamma)
\geq 0$, and characterize \myemph{zero-obstinance} curves (\thmref{Ideal}), see \figref{smallerCurveLattice}.
Rephrasing our earlier characterization, \SO\ curves and interior
boundaries are positive-consistent curves with zero-obstinance and
positive Whitney index.

We conclude by defining a new operation called
\emph{balanced loop insertion}, a complementary notion to that of
\emph{balanced loop deletion}, the key trick to proving \thmref{ISO}.
As a parallel to our results on wraps, we show in
\thmref{GBLI} that careful iteration of balanced loop insertion turns
any curve $\gamma$ with $\whit{\gamma} = 1$ (and positive outer
basepoint) into a \SO \, curve.

More supplementary details on the relationship between different curve classes studied in this paper are provided in \appendixref{lattice}.

%



%

\section{Preliminaries} \label{sec:preliminaries}

\subsection{Regular and Generic Curves}\label{sec:regularGenericCurves}

We work with regular, generic, closed plane curves
$\gamma: [0,1] \rightarrow \mathbb{R}^2$ with basepoint
$\gamma(0)=\gamma(1)$.  Let $\mathemph{\C}$ denote the set of such curves.
A curve is \myemph{regular} if~$\gamma'(t)$ exists and
is non-zero for all $t$; a curve is \myemph{generic} (or
normal) if the embedding has only a finite number of
intersection points, each of which are transverse crossings.
Being generic is a weak restriction, as
normal curves are dense in the space of regular curves~\cite{WH37}.
Viewing a generic curve $\gamma$ by its image $ \mathemph{[\gamma]} \subseteq
\mathbb{R}^2 $, we can treat $\gamma$ as a directed plane
multigraph $\mathemph{G(\gamma)}=(V(\gamma),E(\gamma))$.
Here, $\mathemph{V(\gamma)}=\{p_0, p_1, \ldots, p_n\}$ is the set of ordered \myemph{vertices} (points)
of $\gamma$, with basepoint $p_0=\gamma(0)$ regarded as a vertex as well.
An \myemph{edge} $(p_i, p_j)$ corresponds to a simple path along $\gamma$
between $p_i$ and $p_j$. The \myemph{faces} of $G(\gamma)$
are the maximally connected components of $\R^2 \setminus[\gamma]$. Each $\gamma \in \C$ has exactly one unbounded face,
the exterior face $\mathemph{F_{ext}}$. See \figref{SO_curve}.
Two curves are combinatorially equivalent when their planar multigraphs are isomorphic.
We may therefore define a curve just by its image, orientation, and basepoint.
A curve is \myemph{simple} if it has no intersection points. We notate
\mathemphformula{$\abs{\gamma}$}$=\abs{V(\gamma) \setminus \{ p_0\}}$
as the complexity of $\gamma$.

For any $x \in \R^2 \setminus [\gamma]$, the \myemph{winding number}
$\mathemph{wn(x,\gamma)} = \sum_{i} a_i$ is defined using a simple path $P$ from $x$
to $F_{ext}$ that avoids the intersection points of $ \gamma$. Here,
$a_i=+1$ if $P$ crosses $\gamma$ from left to right at the $i$-th
intersection of $P$ with $\gamma$, and $a_i=-1$ otherwise.
Since this number is independent of the path chosen and is constant over
each face~$F$ of of $G(\gamma)$, we
write $\mathemph{wn(F,\gamma)}$.  
 If $wn(F, \gamma) \geq 0$ for every face
$F$ on $G(\gamma)$, then we call $\gamma$ \myemph{positive
  consistent}.  If $wn(F, \gamma) \leq 0$ for every face, then $\gamma$
is \myemph{negative consistent}.  See \figref{SO_curve} for an example curve illustrating these concepts.
The \myemph{winding area} of a curve~$ \gamma$ is given~by
$\mathemph{W(\gamma)} = \int_{\mathbb{R}^2} \abs{wn(x, \gamma)} dx = \sum_{F } A(F) \abs{wn(F, \gamma)}$,
where $A(F)$ is the area of the face $F$
and $wn(x, \gamma) = 0$ for $x \in [\gamma]$.
The \myemph{depth} $D(F, \gamma)$ of a face $F$ is the minimum
number of crossings of any simple path $P$ from $F$ to $F_{ext}$
with $\gamma$.
The \myemph{depth} of $\gamma$ is the sum $D(\gamma) =
\sum_{F} A(F) D(F, \gamma)$.
The \myemph{Whitney index} $\mathemph{\whitbold{\gamma}}$ of a curve~$\gamma$ is
the winding number of the derivative~$\gamma'$ about the
origin.
A curve $\gamma$ is \myemph{positively oriented} if $\whit{\gamma}>0$ and \myemph{negatively oriented} if $\whit{\gamma}<0$.

\begin{figure}[ht]
    \centering
    \includegraphics[width=.4\textwidth]{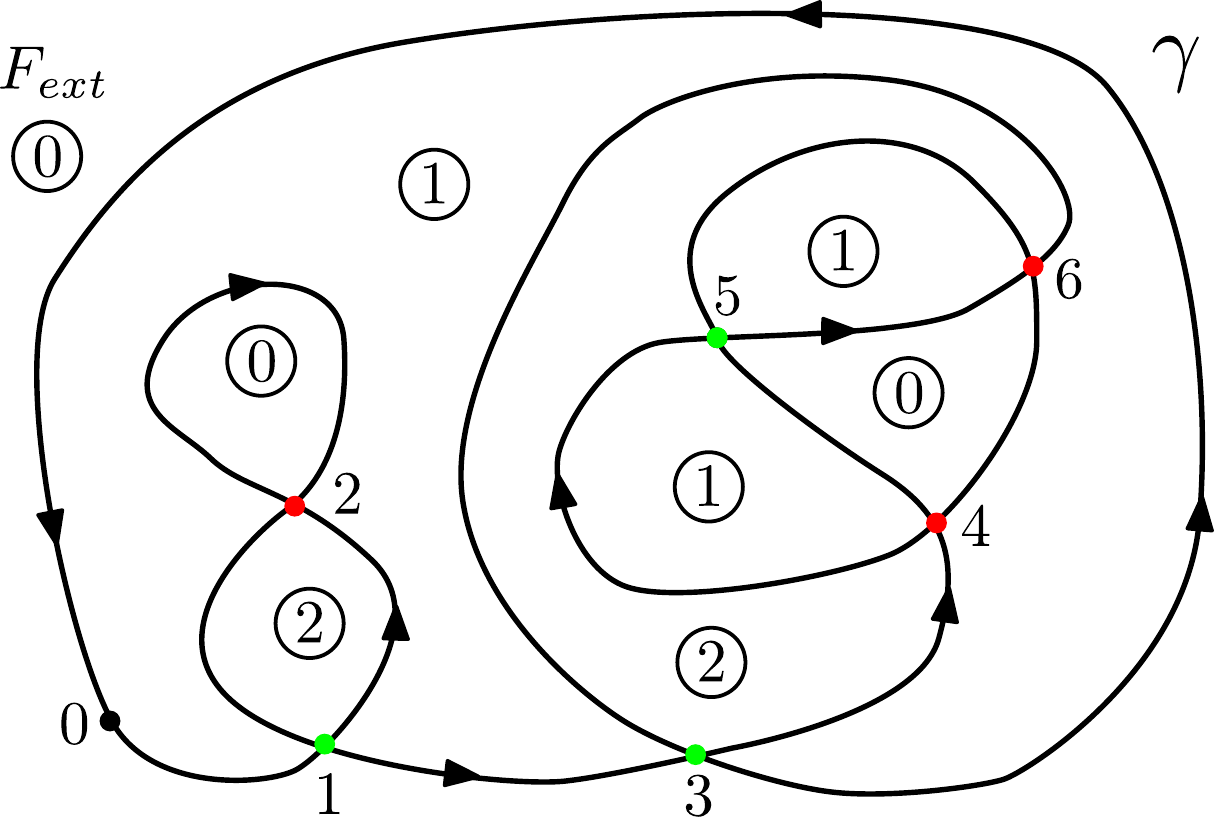}
    \caption{A curve $\gamma$ that is \SO.  The winding numbers of each
    face are enclosed by circles. The signed
    intersection sequence of $\gamma$ is $0, 1_+, 2_-, 2_+, 1_-, 3_+, 4_-, 5+_, 6_-, 4_+, 5_-, 6_+, 3_-, 0$; vertex labels are shown, and the sign of each vertex is indicated with green (positive) or red (negative). The combinatorial relations are: $p_2\subset p_1$; $p_4,p_5,p_6\subset p_3$; $p_1,p_2 \separate p_3,p_4,p_5,p_6$; $p_4,p_5 \links p_6$; $p_4 \links p_5$.
   }
   \label{fig:SO_curve}
\end{figure}

A basepoint $p_0= \gamma(0)$ is a \myemph{positive outer basepoint} if
$p_0$ is incident to the two faces $F_{ext}$ and $F$, and $wn(F,\gamma)=1$.
If $wn(F,\gamma)=-1$, then $p_0$ is a negative outer basepoint.
Several of our results require $\gamma$ to have a positive outer basepoint.
A curve $\gamma: \mathbb{S}^1 \rightarrow \mathbb{R}^2$ is \myemph{(positive) \SO} when there is
an orientation-preserving immersion $F: \mathbb{D}^2 \rightarrow \mathbb{R}^2$, a map of full rank,
extending $\gamma$ to a map on the entire two-dimensional unit disk
$\mathbb{D}^2$. If the reversal $\overline{\gamma}$ of
a curve is \SO, then we call $\gamma$ \myemph{negative \SO}.  Unless stated otherwise, the term \SO\ is used
only to mean positive \SO.
%

\subsection{Combinatorial Relations
  and Intersection Sequences}

Following Titus \cite{TitusNC}, we now describe how the intersection
points of a curve $\gamma\in\C$ relate to each other.
See \figref{SO_curve} for an illustration of these concepts.
%
  Let
  $p_i, p_j\in V(\gamma)$ be two vertices such that
  $p_i=\gamma(t_i)=\gamma(t_i^*)$ and $p_j=\gamma(t_j)=\gamma(t_j^*)$
  with $t_i < t_i^*$ and $t_j < t_j^* $.
  Then, one of the following must~hold:
  %
\begin{itemize}
\item $p_i$ \myemph{links} $p_j$, or $p_i\mathemphtxt{\links}p_j$, iff $t_i < t_j < t_i^* < t_j^*$ or $t_j < t_i < t_j^* < t_i^*$
\item $p_i$ is \myemph{separate} from $p_j$, or $p_i\mathemphtxt{\separate}p_j$, iff  $t_i < t_i^* < t_j < t_j^*$ or $t_j < t_j^* < t_i < t_i^*$
\item $p_i$ is \myemph{contained in} $p_j$, or $p_i~\mathemph{\subset}~p_j$, iff  $t_j \leq t_i < t_i^* \leq t_j^*$
\end{itemize}
%
%
To define the intersection sequence of $\gamma$,
the vertices are labeled in the order they appear on $\gamma$,
starting with $0$ for the basepoint $\gamma(0)$, and increasing
by one each time an unlabeled vertex is encountered. The
\myemph{signed intersection sequence} consists of the sequence of all
vertex labels along $\gamma$ starting at the basepoint; the first time
vertex $p_i$ is visited, the label is augmented with $\sgn(p_i)$, and
the second time with $-\sgn(p_i)$. Here,
${\sgn(p_i)} :=\sgn(p_i,\gamma)$ is the  \myemph{sign} of vertex
          $p_i=\gamma(t_i)=\gamma(t_i^*)$, and
is $1$ if the vector~$\gamma'$ rotates clockwise from $t_i$ to $t_i^*$, and
$-1$ otherwise. Note that $\sgn(p_i)$ depends on the basepoint of the
curve. 
%
%
As proved by Titus, interior boundariness is invariant with respect to
signed intersection sequences ~\cite{TT61}.

\subsection{Minimum Homotopies}

A \myemph{homotopy} between two generic curves $\gamma$ and
$\gamma'$ is a continuous function $H: [0,1]^2 \rightarrow
\mathbb{R}^2$ such that $H(0, \cdot)= \gamma$ and $H(1,\cdot) =
\gamma' $. In $\mathbb{R}^2 $, any curve is null-homotopic, i.e.,
homotopic to a constant map.  Given a sequence of homotopies
$(H_i)_{i=1}^k$, we notate the concatenation of these homotopies in
order as \mathemphformula{$\sum_{i=1}^k H_i$}. We use the notation $
\overline{H}$ for the reversal $\overline{H}(i,t) = H(1-i, t)$ of a
homotopy. If $H(0, \cdot) = \gamma$ and $H(1, \cdot) = \gamma'$, we
may write \mathemphformula{$\gamma \homotopyarrow{H} \gamma'$}.

\myemph{Homotopy moves} are basic local alterations to a
curve defined by their action on~$G(\gamma)$. These moves come in
three pairs ~\cite{fkw_mahncp_17}, see~\figref{SimpleTitusMoves}: The I-moves destroy/create
an empty loop, II-moves destory/create a bigon, and III-moves flip a
triangle.
We denote the moves that remove
vertices as $\mathemphtxt{\text{I}}_{\mathemph{a}}$ and $\mathemphtxt{\text{II}}_{\mathemph{a}}$, and moves that
create vertices as $\mathemphtxt{\text{I}}_{\mathemph{b}}$ and $\mathemphtxt{\text{II}}_{\mathemph{b}}$.
See \figref{SimpleTitusMoves}.
It is well-known that any homotopy such that each intermediate curve
is piecewise regular and generic, or almost generic,
can be achieved by a sequence of homotopy moves.
Thus, without loss of generality, we assume that each time
the curve $H(i, \cdot)$ combinatorially changes is through a single
homotopy move.

\begin{figure}[htbp]
    \centering
    \includegraphics{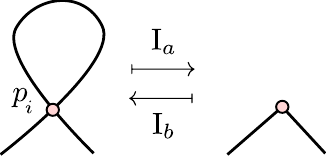}
    \hspace*{1ex}
    \includegraphics{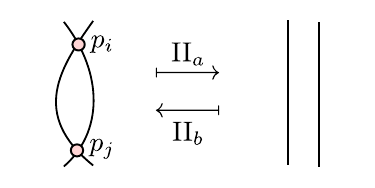}
    \hspace*{1ex}
    \includegraphics{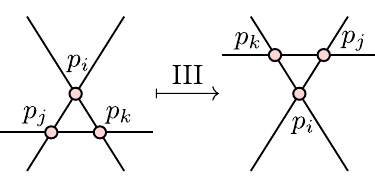}
    \caption{All three homotopy moves and their reversals.
    Figure from \cite{fkw_mahncp_17}.}
    \label{fig:SimpleTitusMoves}
\end{figure}

%
%
  Let $\gamma \in \C$ and $H$ be a nullhomopty of $\gamma$.
Define~$E_H(x)$ as the number of connected components of
$H^{-1}(x)$. Intuitively, this counts the number of times that $H$ sweeps over $x$.
The \myemph{minimum homotopy area} of $\gamma$ is defined as
$\mathemph{\sigma(\gamma)} = \inf_H \{ \int_{\mathbb{R}^2} E_H(x) \text{ } dx\;|\; H \;\text{is a}$ $\text{nullhomotopy of} \; \gamma\}$.
%
The following was shown in \cite{CW13,fkw_mahncp_17}:


\begin{lemma}[Homotopy Area $\geq$ Winding Area]\label{lem:minHomotopy_and_windingAreas}
Let $\gamma \in \C$. Then $\sigma(\gamma) \geq W(\gamma)$.
\end{lemma}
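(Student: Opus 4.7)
My plan is to establish the pointwise lower bound $E_H(x) \geq |wn(x,\gamma)|$ for almost every $x \in \R^2$ and every nullhomotopy $H$ of $\gamma$, then conclude by integration. Once the pointwise bound is in hand, Fubini gives $\int_{\R^2} E_H(x)\, dx \geq \int_{\R^2} |wn(x,\gamma)|\, dx = W(\gamma)$, and taking the infimum over all $H$ yields $\sigma(\gamma) \geq W(\gamma)$.

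The pointwise bound is essentially a degree-theoretic statement. First I would reduce to the case where $H$ is piecewise smooth, which incurs no loss in the infimum via a standard approximation argument. Sard's theorem then ensures that almost every $x \in \R^2$ is a regular value of $H$; for such $x$, $H^{-1}(x)$ is a finite set of points in $[0,1]^2$, each corresponding to a transverse crossing of $x$ by the one-parameter family of curves $H(t,\cdot)$. A generic choice of $x$ also places it off $[\gamma]$ and off the constant value $H(1,\cdot)$, so no preimages fall on the sides $\{0\} \times [0,1]$ or $\{1\} \times [0,1]$ of the boundary of the parameter square.

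Next, define $\phi_x(t) = wn(x, H(t,\cdot))$ whenever $x \notin H(t,\cdot)$. This integer-valued function is locally constant on its domain and jumps by $\pm 1$ at each time $t$ in the projection of $H^{-1}(x)$ onto $[0,1]$, since at such a time the moving curve crosses $x$ transversely. Because $\phi_x(0) = wn(x,\gamma)$ and $\phi_x(1) = 0$, the function must make at least $|wn(x,\gamma)|$ jumps. Each connected component of $H^{-1}(x)$ (an isolated point, for regular $x$) accounts for at most one such jump, so $E_H(x) \geq |wn(x,\gamma)|$.

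The main obstacle will be the technical regularity: justifying the reduction to piecewise-smooth $H$ (so that Sard's theorem applies) without increasing the swept area, and handling non-generic phenomena (tangencies, multiple simultaneous crossings, or preimages on $\partial [0,1]^2$) so that the inequality holds on a full-measure set of points $x$. I would address these issues by a generic perturbation of $H$ whose swept area exceeds that of the original by an arbitrarily small amount, combined with the choice of $x$ from the full-measure set of regular values of the perturbed homotopy. A limit argument then yields the bound $\sigma(\gamma) \geq W(\gamma)$.
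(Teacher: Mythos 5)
The paper does not prove this lemma itself but imports it from \cite{CW13,fkw_mahncp_17}, and your argument is precisely the standard one given there: the winding number $wn(x,H(t,\cdot))$ must travel from $wn(x,\gamma)$ to $0$, each unit change requires a transverse crossing of $x$, and each crossing lies in a distinct component of $H^{-1}(x)$, giving $E_H(x)\geq |wn(x,\gamma)|$ pointwise before integrating. Your plan is correct, and you have rightly flagged the only real work as the regularity/genericity bookkeeping (reduction to smooth $H$ without increasing area, Sard, and boundary preimages), which is handled the same way in the cited source.
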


A straightforward proof by induction, similar to that of Lemma \ref{lem:minHomotopy_and_windingAreas} shows the following.

\begin{lemma}[Homotopy Area $\leq$ Depth]\label{lem:minHomotopy_and_depth}
Let $\gamma \in \C$. Then $\sigma(\gamma) \leq D(\gamma)$. 
\end{lemma}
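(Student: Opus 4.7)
The plan is to induct on the complexity $|\gamma|$, constructing an explicit nullhomotopy of $\gamma$ whose total swept area is bounded by $D(\gamma)$. For the base case, any simple closed curve $\gamma$ bounds a single disk $F$ of depth one, so $D(\gamma)=A(F)$, and the straightforward contraction of $\gamma$ to its basepoint sweeps $F$ exactly once, giving $\sigma(\gamma) \le A(F) = D(\gamma)$.

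For the inductive step, I would locate a face $F$ of depth one (i.e., adjacent to $F_{ext}$) that is bounded by a monogon or bigon, and apply a homotopy move of type $\text{I}_a$ or $\text{II}_a$ to contract $F$. If no such ``ear'' is immediately present, I would first apply a short sequence of area-preserving $\text{III}$-moves to expose one, which is always possible by an innermost-face argument on the depth-one faces of $G(\gamma)$. The resulting move yields a curve $\gamma'$ with $|\gamma'|<|\gamma|$, sweeps only the face $F$, and contributes exactly $A(F)$ to the nullhomotopy.

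Next I would verify the key inequality $D(\gamma') \le D(\gamma) - A(F)$: every face of $\gamma$ other than $F$ (and, in the bigon case, a partner face $F''$ with $D(F'',\gamma)=1$) persists in $\gamma'$ with depth unchanged or decreased, because removing an edge cannot lengthen a shortest path to the exterior. Since the removed face(s) have depth one and are merged into the new exterior face of $\gamma'$, the net decrease in $D$ equals the area swept by the move. Concatenating the homotopy of this move with a nullhomotopy of $\gamma'$ of area at most $D(\gamma')$ (supplied by the inductive hypothesis) yields $\sigma(\gamma) \le A(F) + D(\gamma') \le D(\gamma)$.

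The hard part will be the ear-existence step together with the depth accounting, particularly in the bigon case where both faces involved in the $\text{II}_a$ move contribute to $D(\gamma)$. This parallels the case analysis used for the winding area bound in \cite{CW13,fkw_mahncp_17}: handling it requires enumerating the possible arrangements of the monogon or bigon relative to $F_{ext}$, confirming that a sequence of $\text{III}$-moves can always expose an ear without changing $D$, and checking that the faces created in $\gamma'$ inherit the correct depth.
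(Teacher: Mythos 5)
There is a genuine gap, and it sits exactly where you flag "the hard part." First, type $\text{III}$ moves are not free for the swept-area functional $\int_{\R^2} E_H(x)\,dx$: flipping a triangle requires the moving strand to pass over the triangular face, so each $\text{III}$-move adds at least the area of that face to the homotopy's cost, and the budget $D(\gamma)$ has no slack to absorb this. So the premise "area-preserving $\text{III}$-moves" is false, and any argument that first spends $\text{III}$-moves to expose an ear breaks the inequality you are trying to prove. Second, even setting cost aside, a depth-one monogon or bigon face need not exist and cannot always be manufactured: take a figure-eight and wrap it with a disjoint Jordan curve glued at the basepoint. The unique depth-one face is the high-degree region between the wrap and the figure-eight, the two monogon faces sit at depth two, and there are no triangular faces at all, so no $\text{III}$-move is even available. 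Your induction stalls on such curves.

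The straightforward induction the paper has in mind (it only sketches it, citing similarity to Lemma~\ref{lem:minHomotopy_and_windingAreas}) avoids faces entirely and contracts a \emph{loop}, i.e.\ a simple direct split $\gamma_w$, which the paper notes always exists for a non-simple curve (take $w$ to be the highest-index vertex). Contract $\gamma_w$ to $w$ inside its own disk, sweeping over whatever other strands of $\gamma$ happen to lie there; this costs exactly $A(\operatorname{int}(\gamma_w))$ and reduces the complexity by one. The depth accounting is then clean: for any face $F\subseteq\operatorname{int}(\gamma_w)$, a $\gamma$-shortest path from $F$ to $F_{ext}$ must cross the simple closed curve $[\gamma_w]$ at least once, so $D(F',\gamma\setminus\gamma_w)\le D(F,\gamma)-1$, while no face's depth increases when edges are deleted; hence $A(\operatorname{int}(\gamma_w))+D(\gamma\setminus\gamma_w)\le D(\gamma)$ and induction on $|\gamma|$ finishes. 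The essential point you are missing is that the object to contract is a loop of the curve (always present, at any depth, possibly with strands inside it), not an empty face of the arrangement at depth one.
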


On the directed multigraph $G(\gamma)$, we can  define
the left and right face of any edge.
We call a homotopy \myemph{left (right) sense-preserving} if $H(i + \epsilon, t)$ lies on or
to the left (right) of  the oriented curve $H(i,\cdot)$
for any $i,t \in [0,1]$ and any $\epsilon > 0$.
%
The following two lemmas provide useful properties about sense-preserving homotopies; the first was proven in \cite{CW13}, the second in \cite{fkw_mahncp_17}.
\begin{lemma}[Monotonicity of Winding Numbers]\label{lem:sense-preserving-monotone}
  Let $H$ be a homotopy. If $H$ is left (right) sense-preserving, then for any $x\in\R^2$ the function $a(i) = wn(x, H(i, \cdot))$ is monotonically decreasing (increasing).
\end{lemma}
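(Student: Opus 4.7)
The approach is to reduce the statement to a local case analysis over homotopy moves. As recalled before Figure~\ref{fig:SimpleTitusMoves}, any homotopy between generic curves can be realized by a finite sequence of homotopy moves with combinatorially trivial deformations in between. Since $wn(x,\cdot)$ depends only on which face of $G(H(i,\cdot))$ contains $x$, the function $a(i)=wn(x,H(i,\cdot))$ is piecewise constant and can only jump at a homotopy move. It therefore suffices to verify $a(i_+)-a(i_-)\le 0$ across each left sense-preserving move; the right sense-preserving case then follows by reversing all orientations.

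A $\text{III}$-move (triangle flip) leaves the winding numbers of all four incident faces unchanged, so $\Delta a = 0$ regardless of sense. For an $\text{I}$-move, only the winding number of the interior of the loop being destroyed or created can change. A direct tangent-vs-normal computation shows that the inward normal of a CCW (resp.\ CW) loop lies to its left (resp.\ right). Hence the left sense-preserving $\text{I}$-moves are exactly (i) destroying a CCW loop, whose interior face has winding number one greater than the surrounding face, and (ii) creating a CW loop, whose new interior has winding number one less. In both subcases, any $x$ whose enclosing face is affected sees its winding number drop by exactly $1$.

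The $\text{II}$-moves are handled analogously. Classify a bigon by whether its boundary, traversed in the order induced by $\gamma$, forms a CCW or CW loop; equivalently, by whether the interior winding number exceeds the exterior by $+1$ or $-1$. Applying the same inward-left/inward-right dichotomy to the bigon boundary identifies the left sense-preserving $\text{II}_a$-moves with destructions of CCW bigons and the left sense-preserving $\text{II}_b$-moves with creations of CW bigons. In each case the winding number of any $x$ in the affected face drops by exactly $1$. Combining all three move types, every left sense-preserving move contributes $\Delta a\le 0$, so $a$ is monotonically decreasing.

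The main obstacle is the $\text{II}$-move orientation bookkeeping: regardless of which arc is viewed as physically moving (inner, outer, or both symmetrically toward the bigon), one must verify that any left sense-preserving realization of the move lands in exactly the two orientation classes above. This reduces to a routine local sign computation comparing the tangent of the moving arc with its direction of motion, but it is the only step that requires genuine care; the $\text{I}$- and $\text{III}$-move cases are immediate once the CCW/CW dichotomy is set up.
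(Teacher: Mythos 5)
The paper does not prove this lemma itself; it is quoted from Chambers and Wang \cite{CW13}, so there is no in-paper argument to match, and your proof must stand on its own. Your move-by-move strategy is a legitimate route, but it contains a concrete false step: the claim that a $\text{III}$-move leaves all winding numbers unchanged. A $\text{III}$-move sweeps one strand across the triangle (and across the crossing of the other two strands), so every point $x$ in the swept region changes sides with respect to that strand, and $wn(x,\cdot)$ changes by exactly $\pm 1$; also, a triangle-flip configuration has seven local faces, not four, so it appears you analyzed the wrong local picture. The lemma survives only because, in a left sense-preserving realization, the strand sweeps leftward over such $x$, taking $x$ from the strand's left side to its right side and hence decreasing $wn(x,\cdot)$ by $1$ --- but that is not the reason you give. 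Similarly, your identification of the left sense-preserving $\text{II}_a$-moves with destructions of coherently oriented (CCW) bigons is incomplete: a bigon whose two arcs have the lens on opposite sides can also be destroyed left-sense-preservingly, by sweeping the arc that has the lens on its left all the way across it.

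All of these cases, as well as the reduction to generic position, are subsumed by the direct argument that is presumably the one in \cite{CW13}: the function $a(i)=wn(x,H(i,\cdot))$ can change only at parameters where the curve passes over $x$, and since the face to the left of an oriented arc has winding number one greater than the face to its right, each leftward pass of the curve over $x$ decreases $a$ by exactly $1$ (and each rightward pass increases it by $1$). A left (right) sense-preserving homotopy passes over $x$ only leftward (rightward), so $a$ is monotonically decreasing (increasing). You should either replace the case analysis with this observation or, if you keep the case analysis, correct the $\text{III}$- and $\text{II}$-move cases by applying this same left-to-right transition argument to the swept regions rather than asserting that their winding numbers are unchanged.
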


\begin{lemma}[Sense-Preserving Homotopies are Optimal]\label{lem:sensePreservingOptimal}
  Let $\gamma\in C$ be consistent. Then a nullhomotopy H of $\gamma$ is optimal if and only if it is sense-preserving. 
\end{lemma}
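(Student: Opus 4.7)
The plan is to prove both directions by comparing $\int_{\mathbb{R}^2} E_H(x)\,dx$ to the winding area $W(\gamma)$, exploiting the pointwise bound $E_H(x) \geq |wn(x,\gamma)|$. This bound is elementary: for a.e.\ $x$, each connected component of $H^{-1}(x)$ is a transversal sweep that changes $a(i) := wn(x, H(i,\cdot))$ by $\pm 1$, and the $E_H(x)$ resulting $\pm 1$ contributions must sum to $a(1) - a(0) = -wn(x,\gamma)$; hence $E_H(x) \geq |wn(x,\gamma)|$, with equality iff all contributions share a common sign.

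For the easy direction ($\Leftarrow$), I would reduce to the case where $\gamma$ is positive consistent and $H$ is left sense-preserving (otherwise reverse orientations). Lemma \ref{lem:sense-preserving-monotone} then guarantees that $a(i)$ is monotonically decreasing from $wn(x,\gamma) \geq 0$ to $0$ at every $x$, so monotonicity forces equality in the bound above and yields $\int E_H(x)\,dx = W(\gamma)$. Combined with Lemma \ref{lem:minHomotopy_and_windingAreas}, this gives $W(\gamma) \leq \sigma(\gamma) \leq \int E_H(x)\,dx = W(\gamma)$, so $H$ is optimal.

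For the converse, the first step is to establish $\sigma(\gamma) = W(\gamma)$ for consistent $\gamma$; this follows from the previous direction as soon as a single sense-preserving nullhomotopy of $\gamma$ is exhibited, which can be done by iteratively collapsing an innermost face of $G(\gamma)$ towards the exterior via appropriately oriented homotopy moves. With $\sigma(\gamma) = W(\gamma)$ in hand, optimality of $H$ forces $\int E_H(x)\,dx = \int |wn(x,\gamma)|\,dx$, and the pointwise bound forces $E_H(x) = |wn(x,\gamma)|$ for a.e.\ $x$; equivalently, $a(i) = wn(x, H(i,\cdot))$ is a monotone function of $i$ for almost every $x$.

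The main obstacle is promoting this pointwise monotonicity into the \emph{global} sense-preserving property of $H$. My plan is to work at the level of homotopy moves: decompose $H$ into a finite sequence of I-, II-, and III-moves, each of which has a well-defined local sense (left or right, determined by which side of the current curve the move pushes), and show that two moves of opposing senses must together create a small region whose winding number first increases and later decreases, producing at least two extra sweeps on a set of positive area. This would contradict $E_H(x) = |wn(x,\gamma)|$ a.e., so every move in $H$ must share a common sense, making $H$ globally sense-preserving. The technical core of this direction is verifying that every move type has a definite sense and that sense-mismatched moves are forced to introduce redundant sweeps --- essentially a local-to-global converse of Lemma \ref{lem:sense-preserving-monotone}.
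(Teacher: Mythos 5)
First, note that the paper does not actually prove this lemma; it quotes it from \cite{fkw_mahncp_17}, so your argument has to stand on its own. Your backward direction (sense-preserving $\Rightarrow$ swept area $=W(\gamma)$, hence optimal by Lemma~\ref{lem:minHomotopy_and_windingAreas}) is fine and is the standard argument. The forward direction, however, rests on a false claim: that every consistent curve admits a sense-preserving nullhomotopy, and hence that $\sigma(\gamma)=W(\gamma)$ for every consistent $\gamma$. This is contradicted by the paper itself. If every positive consistent $\gamma$ with positive outer basepoint had a left sense-preserving nullhomotopy, then by Theorem~\ref{thm:EIB}(\ref{EIB:iv}) every such curve with $\whit{\gamma}\geq 1$ would be an interior boundary, and the zero-obstinance condition in the definition of a $k$-interior boundary would be vacuous. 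The paper exhibits explicit counterexamples: a positive consistent curve with $\whit{\gamma}=1$, a positive outer basepoint, and an empty positively oriented loop is not \SO\ by Lemma~\ref{lem:EmptyLoop}, hence not a $1$-interior boundary, hence has $\obs(\gamma)>0$ (this is the curve $\gamma_C$ in \figref{CurveClasses}). Your proposed construction --- ``iteratively collapsing an innermost face via appropriately oriented homotopy moves'' --- necessarily gets stuck or violates sense-preservation on such curves; that obstruction is exactly what obstinance measures.

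A consequence you should confront: if ``optimal'' is read as $A(H)=\sigma(\gamma)$, the forward implication is actually false for a consistent curve with $\obs(\gamma)>0$, since its minimum-area nullhomotopy cannot be sense-preserving (sense-preservation would force area $W(\gamma)<\sigma(\gamma)$). The way the lemma is applied in the paper (``since sense-preserving homotopies are optimal, we have $\sigma(\gamma)=W(\gamma)$'' in the proof of Theorem~\ref{thm:EIB}) shows that ``optimal'' must be read as $A(H)=W(\gamma)$. Under that reading your detour through $\sigma=W$ is unnecessary: $A(H)=W(\gamma)$ already yields $E_H(x)=|wn(x,\gamma)|$ for a.e.\ $x$ and hence pointwise monotonicity of $i\mapsto wn(x,H(i,\cdot))$. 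What then remains --- and what you only sketch --- is the genuinely technical local-to-global step that pointwise monotone winding numbers force $H$ to be sense-preserving; your plan via homotopy moves of a definite sense is a reasonable outline, but it is the core of this direction and is not carried out.
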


\section{Equivalences}\label{sec:SO_Int}
In this section, we show the equivalence of different characterizations of interior boundaries
(\thmref{EIB}) and of \SO\ curves (\thmref{ESO}). Our analysis of curve classes hinges around
the concept of obstinance.
 In \thmref{Ideal} we classify zero obstinance curves, which are
generalizations of interior boundaries and of \SO\ curves.

\subsection{Direct Splits}\label{sec:directsplits}


Let $\gamma \in \script$ and $p_i \in V(\gamma)$
with $p_i=\gamma(t_i) = \gamma(t_i^*)$ and $t_i < t_i^*$.
Then, $\gamma$ can be split into two subcurves at $p_i$:
The \myemph{direct split}
is the curve with image $[\gamma|_{[t_i, t_i^*]} ]$ with basepoint $p_i$,
and the \myemph{indirect split}
is the curve with image
$[\gamma|_{[t_i^*, 1]}] \cup [\gamma|_{[0, t_i]}]$ with basepoint $\gamma(0)$.
We endow both of these curves with the same orientation as $\gamma$.
Given a direct (or indirect) split $\widetilde{\gamma}$ on a curve
$\gamma$, we write $\gamma \setminus \widetilde{\gamma}$ for the indirect (or direct) split complementary to $\widetilde{\gamma}$.
We call 
a direct split \myemph{proper} if it is not the entire curve~$\gamma$.
See \figref{decomps}.
If $v=p_i\in V(\gamma)$, we may notate the direct split as $\gamma_i$ or $\gamma_v$.
%
When removing multiple splits iteratively
we write~$\gamma \setminus \left(\cup_{i=1}^n \gamma_i \right)$, where
we require that $\gamma_i$ is a direct split of $\gamma \setminus \left(\cup_{j=1}^{i-1} \gamma_j \right)$.
The direct splits carry a great deal of information about the curve. In fact,
one can recover the combinatorial relations from the direct splits:
$p_i \links p_j$ iff~$p_i \in [\gamma_j]$ and~$p_j \in [\gamma_i]$;
$p_i \separate p_j$ iff $p_i \notin [\gamma_{j}]$
and~$p_j \notin [\gamma_{i}]$;
and $p_i \subset p_j$ iff $p_i$ is a vertex of $\gamma_{j }.$
Being a direct split of a curve is a transitive property. I.e.,
if~$\gamma_i \in \script$ is a direct split on~$\gamma$, and $\gamma_j$ is a direct split on $\gamma_i$, then
$ \gamma_j$ is a direct split on~$\gamma$.
The parallel statement on indirect splits, however, is false.

\begin{figure*}[htbp]
    \centering
    \includegraphics[width=.70\textwidth]{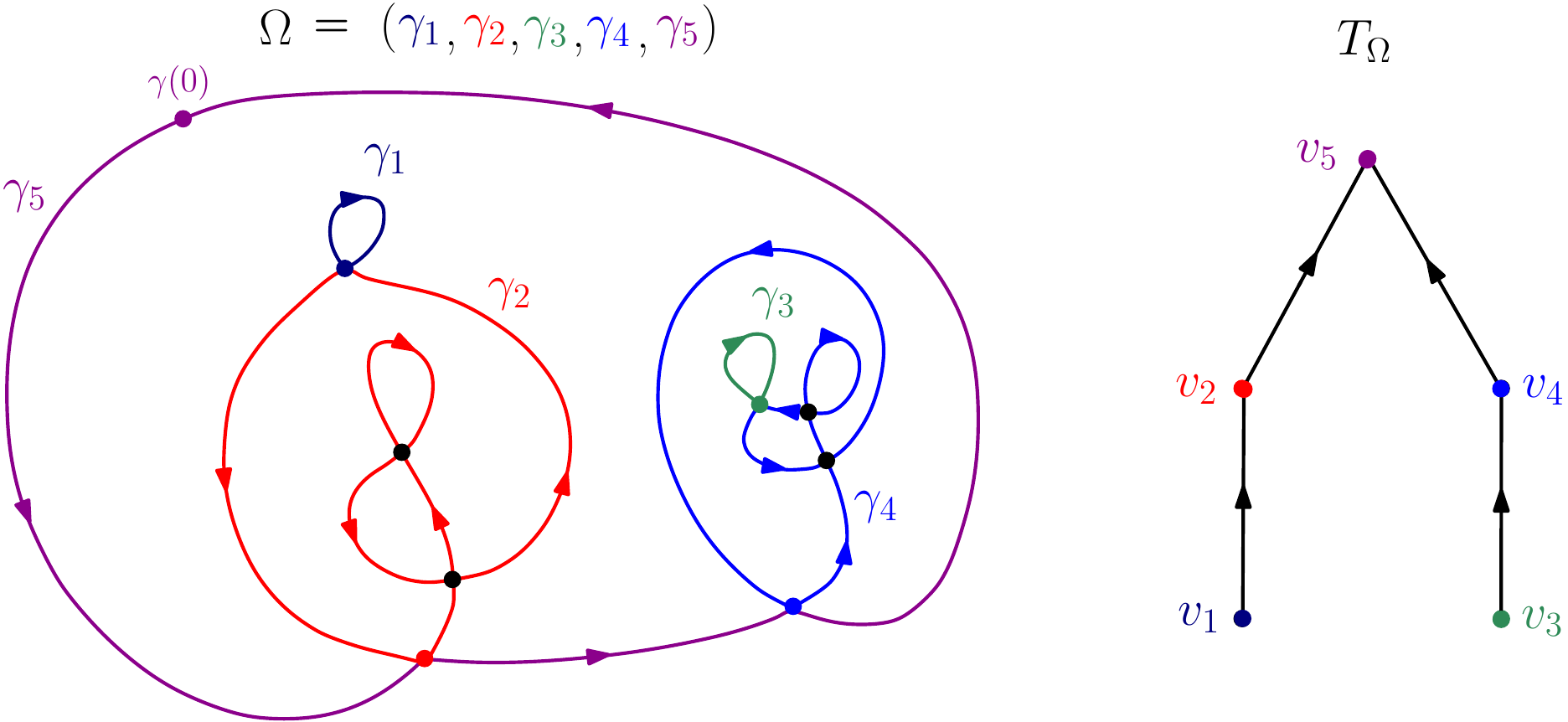}
    \caption{A \SO\ decomposition of a \SO\ curve $\gamma$.
Here,
    $\gamma_1$ and $\gamma_3$ are (proper) direct splits of $\gamma$, while
    $\gamma_2, \gamma_4,$ and $\gamma_5$ are neither direct nor
    indirect splits of $\gamma$.}
    \label{fig:decomps}
\end{figure*}

\subsection{Decompositions and Loops}

A curve $\gamma \in \C$ can be
entirely decomposed by iteratively removing direct splits.
%
For a sequence of subcurves $\Omega = (\gamma_i)_{i=1}^k$,
define $C_0 = \gamma$ and inductively
$C_i = C_{i-1} \setminus \gamma_i$ for~$i\geq 1$;
the basepoint of $\gamma_i$ is $v_i=C_i\cap\gamma_i$.
We call $\Omega$ a \myemph{direct split
decomposition} if $\gamma_i$ is a direct split of $C_{i-1}$, for all
$i \in \{1, 2, \dots, k\}$, and $\gamma_{k} = C_{k-1}$.
Note that
$\Omega$ nearly induces a partition of $\gamma$ in the sense that
$[\gamma] = \cup_{i=1}^k [\gamma_i]$ and 
$\gamma_i \cap \gamma_j \subset V(\gamma)$ for any $i\neq j$.
Given a direct split decomposition $\Omega = (\gamma_i)_{i=1}^k$, we
write~$\mathemph{V(\Omega)}$ for the set of basepoints of all
$\gamma_i \in \Omega$. See Figure \ref{fig:decomps}. 

Observe that no two vertices $v_i, v_j\in V(\Omega)$ may be linked.
Hence, we obtain a partial order~$\prec$ on~$V(\Omega)$ by declaring
$v_i \prec v_j$ whenever $v_i \subset v_j$.
We define $T_{\Omega}$ to be the rooted, directed tree with vertex
set~$V(T_{\Omega}) = V(\Omega)$ and edges $e=(v_i, v_j)$ whenever
$v_i \subset v_j$ and there is no other vertex~$v_k \neq v_i, v_j$
such that~$v_i \subset v_k \subset v_j$.
The root of $T_{\Omega}$ corresponds to the basepoint of $\gamma$.
We consider two subcurve decompositions~$\Omega, \Gamma$
equivalent,~$\Omega \sim \Gamma$, when $T_{\Omega} = T_{\Gamma}$.
One can easily verify that $\sim$ is
an equivalence relation on the set of direct split decompositions of
$\gamma$.
This means that $\Omega$ and $\Gamma$
contain the same set of subcurves, just in a different order.
If every $\gamma_i$ is \SO, we call $\Omega$
a \myemph{\SO\ decomposition};
it may contain \SO\ subcurves of positive and negative orientations.
We now observe that the vertex set of a
decomposition already determines the subcurves in the decomposition:

\begin{observation}
Given a curve $\gamma \in \C$ and a subset $S \subset V(\gamma)$ such
that $p_0 \in S$ and no two vertices in $S$ are linked, there is a
unique equivalence class $\mathscr{E}$
of direct split decompositions with $V(\Omega) = S$ for all $\Omega\in\mathscr{E}$.
\end{observation}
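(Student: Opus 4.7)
My plan is to prove existence and uniqueness separately; uniqueness is essentially forced by the definition of $T_{\Omega}$, while the real content lies in the existence construction. For uniqueness, observe that $T_{\Omega}$ is determined entirely by the vertex set $V(\Omega)$ together with the containment relation $\subset$ on $V(\gamma)$: its vertex set is $V(\Omega)$, and its edges record immediate $\subset$-containment. Hence if $\Omega$ and $\Gamma$ are direct split decompositions with $V(\Omega)=V(\Gamma)=S$, then $T_{\Omega}=T_{\Gamma}$ by inspection, so $\Omega\sim\Gamma$, and all such decompositions lie in a single equivalence class $\mathscr{E}$.

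For existence, I construct a representative $\Omega$ by processing the elements of $S\setminus\{p_0\}$ innermost-first. Since no two elements of $S$ are linked, the relation $\subset$ restricted to $S$ is a partial order whose maximum is $p_0$ (which contains every vertex of $\gamma$, since $0\leq t_i<t_i^*\leq 1$ always). Choose any linear extension $v_1,\ldots,v_{k-1}$ of this partial order on $S\setminus\{p_0\}$ satisfying: $v_i\subset v_j$ implies $i<j$. Set $C_0:=\gamma$, let $\gamma_i$ be the direct split of $C_{i-1}$ at $v_i$ for $i=1,\ldots,k-1$, and put $\gamma_k:=C_{k-1}$ (which is the direct split of itself at its inherited basepoint $p_0$).

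The crux is showing that $v_i$ remains a self-intersection of $C_{i-1}$, so that the direct split at $v_i$ is well defined. Writing $v_i=\gamma(t_i)=\gamma(t_i^*)$ with $t_i<t_i^*$, it suffices to show that for every $j<i$ the previously removed parameter interval $[t_j,t_j^*]$ contains neither $t_i$ nor $t_i^*$. The ordering precludes $v_i\subset v_j$, and the unlinking hypothesis precludes $v_i$ and $v_j$ from being linked, so either $v_j$ is separate from $v_i$ or $v_j\subset v_i$. In the first case $[t_j,t_j^*]\cap[t_i,t_i^*]=\emptyset$. In the second, $t_i\leq t_j<t_j^*\leq t_i^*$, with strictness at both ends because $v_i\neq v_j$ forces all four parameters to be distinct; hence $t_i$ and $t_i^*$ lie strictly outside $[t_j,t_j^*]$. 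Either way $v_i$ survives in $C_{i-1}$. Since $C_{k-1}$ inherits basepoint $p_0$, one obtains $V(\Omega)=\{v_1,\ldots,v_{k-1},p_0\}=S$.

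I expect the main obstacle to be precisely this survival claim: verifying that the parameters $t_i,t_i^*$ are untouched by every earlier removal. All of the work is concentrated in the two-case analysis above, which is enabled by combining the unlinking condition with the innermost-first ordering; the rest of the argument is bookkeeping about the basepoint of $C_{k-1}$ and the intrinsic nature of the tree structure.
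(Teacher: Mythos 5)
Your proof is correct. The paper states this as an unproven \emph{observation} and supplies no argument of its own, so there is nothing to compare against; your write-up is a legitimate way to fill the gap. You correctly identify that uniqueness is immediate from the definition of $\sim$ (since $T_\Omega$ depends only on $V(\Omega)$ and the containment relation on $V(\gamma)$), and that the only real content is existence: the innermost-first linear extension of $\subset$ on $S\setminus\{p_0\}$, together with the two-case check (separate, or $v_j\subset v_i$ with strict nesting of parameters) showing that $t_i$ and $t_i^*$ survive every earlier removal, does establish that each successive direct split is well defined and that the final piece carries basepoint $p_0$.
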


%
The observation below follows directly from the definition of winding numbers.

\begin{observation}\label{obs:wnDecomp}
Let $\Omega$ be a direct split decomposition of a curve $\gamma \in \C$. Then for any face $F$ in the plane multigraph $G(\gamma)$, $wn(F, \gamma) = \sum_{\gamma_i \in \Omega} wn(F, \gamma_i)$.
\end{observation}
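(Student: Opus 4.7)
The plan is to prove the observation directly from the signed-crossing definition of winding number, exploiting the "nearly partition" property $[\gamma]=\bigcup_{i=1}^k[\gamma_i]$ with pairwise intersections contained in $V(\gamma)$ that was established just above. Fix any $x$ in the interior of the face $F$ and choose a simple path $P$ from $x$ to a point in $F_{ext}$ that avoids $V(\gamma)$ and meets $[\gamma]$ only in finitely many transverse crossings at interior edge points; such a path exists by genericity. Because every point of $[\gamma]\setminus V(\gamma)$ lies on exactly one $\gamma_i$, the multiset of transverse crossings of $P$ with $\gamma$ partitions into the transverse crossings of $P$ with the individual $\gamma_i$.

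Next I would verify that the same path $P$ legitimately computes $wn(x,\gamma_i)$ for every $i$, with consistent signs. Each $\gamma_i$ inherits its orientation from $\gamma$, so the tangent direction at any non-vertex point of $[\gamma_i]$ is unchanged when the point is regarded as belonging to $\gamma$; hence the signed contribution of each crossing is the same whether it is attributed to $\gamma$ or to the unique $\gamma_i$ containing it. For the endpoint condition, the inclusion $[\gamma_i]\subseteq[\gamma]$ yields $F_{ext}\subseteq F_{ext}(\gamma_i)$, so the terminal point of $P$ lies in the exterior face of each $\gamma_i$ and $P$ may validly be used to compute $wn(x,\gamma_i)$ by signed crossing count. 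Summing the crossing contributions over $i$ then gives $wn(x,\gamma)=\sum_{\gamma_i\in\Omega}wn(x,\gamma_i)$, and since winding numbers are constant on the faces of $G(\gamma)$ this establishes the equality for the face $F$.

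I do not expect a substantive obstacle here: the observation is essentially a bookkeeping consequence of the signed-crossing formula together with the edge-disjointness (off vertices) of direct-split decompositions. The only mild subtlety is the endpoint/exterior-face compatibility addressed above, and this is handled by the containment $F_{ext}(\gamma)\subseteq F_{ext}(\gamma_i)$. One could alternatively prove the statement by induction on $k$, using only the base case of a single direct split $\gamma=\gamma_i\cup(\gamma\setminus\gamma_i)$, but the one-shot crossing argument is more transparent and avoids any issues with the induction hypothesis interacting with further splitting.
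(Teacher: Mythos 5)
Your argument is correct and is precisely the "follows directly from the definition of winding numbers" justification that the paper gives (the paper offers no further proof); the crossing-partition bookkeeping and the exterior-face compatibility check are exactly the right details to make that assertion rigorous.
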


We define a \myemph{loop} as a simple direct split $\gamma_v$ of a curve $\gamma\in\C$.
Intersection points of~$\gamma$ may lie on $\gamma_v$, but none occur
as intersections of $\gamma_v$ with itself.  Every non-simple plane
curve has a loop; e.g., the direct split $\gamma_w$, where $w$ is
the highest index vertex on $\gamma$ in the signed intersection sequence.
A loop $\gamma_v$ is \myemph{empty} if $v$ links no vertex $w \in
V(\gamma)$. Let $\text{int}(\gamma_v)$ denote its interior.  We call
$\gamma_v$ an \myemph{outwards loop} if the edges $e_1,e_4$, that are
incident on $v$ and lie on $\gamma\setminus\gamma_v$,
both lie outside $\text{int}(\gamma_v)$.  Otherwise
$\gamma_v$ is an \myemph{inwards loop}.  See \figref{loopEx}.

\begin{figure}[htbp]
\center
\includegraphics[height=2.2cm]{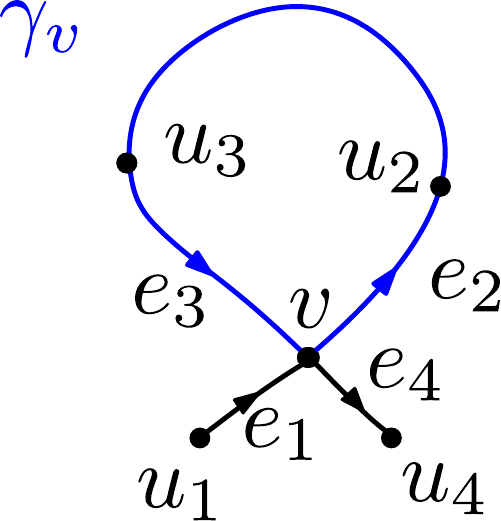}
\hspace*{2cm}
\includegraphics[height=2.1cm]{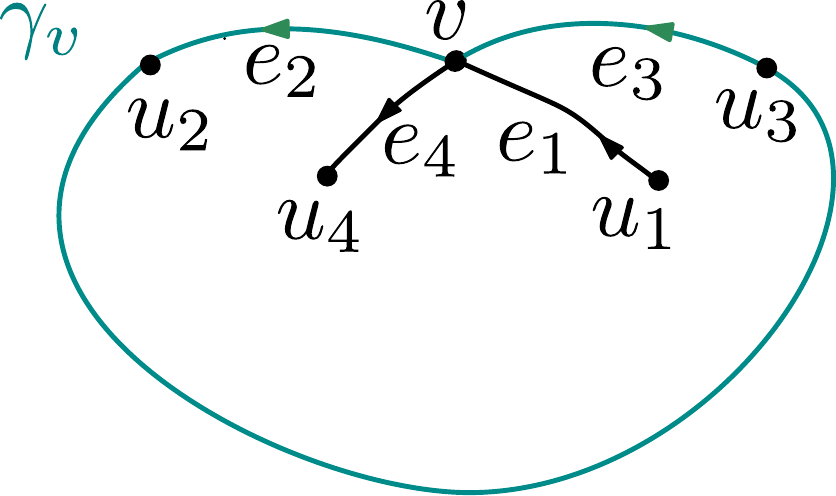}
\caption{An outwards loop (left) and an inwards loop (right).}
\label{fig:loopEx}
\end{figure}

The lemma below follows from \cite{Gauss,Seifert}. Since it requires a digression from our 
main focus, its proof is given in \appendixref{whit}.

\begin{restatable}[Whitney Index Through Decompositions]{lemma}{WHSSD}\label{lem:WHSSD}
 Let $\gamma \in \C$ and $\Omega$ be a direct split decomposition of $\gamma$.
 Then $\whit{\gamma} = \sum_{C \in \Omega} \whit{C}$.
 \end{restatable}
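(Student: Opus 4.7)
The plan is to induct on $k := |\Omega|$. The base case $k = 1$ is immediate. For the inductive step, $\gamma_1$ is a direct split of $\gamma$ and $(\gamma_2, \dots, \gamma_k)$ is a direct split decomposition of $C_1 := \gamma \setminus \gamma_1$, so applying the inductive hypothesis to $C_1$ reduces the claim to a single two-piece identity: for any proper direct split $\gamma_d$ of $\gamma$ at a self-intersection $v$, with complementary indirect split $\gamma_i$, one has $\whit{\gamma} = \whit{\gamma_d} + \whit{\gamma_i}$.

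For this two-piece identity I would analyze the tangent behavior near $v$. Let $t_v < t_v^*$ denote the two parameter values with $\gamma(t_v) = \gamma(t_v^*) = v$, and set $\vec u = \gamma'(t_v)$, $\vec w = \gamma'(t_v^*)$; by transversality these are linearly independent. Viewed as piecewise smooth closed curves, both $\gamma_d$ (based at $v$) and $\gamma_i$ (based at $\gamma(0)$) acquire exactly one corner, located at $v$. The corner in $\gamma_d$ turns from incoming tangent $\vec w$ to outgoing tangent $\vec u$, while the corner in $\gamma_i$ turns from $\vec u$ to $\vec w$. Adopting the canonical exterior-angle convention $\alpha \in (-\pi,\pi)$ at the corner of $\gamma_d$ forces the corner of $\gamma_i$ to have exterior angle $-\alpha$.

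Next I would invoke the piecewise smooth Umlaufsatz, the classical result from Gauss and Seifert cited in the lemma statement, which expresses the Whitney index of a piecewise smooth closed curve as the sum of smooth tangent rotations plus exterior angles, all divided by $2\pi$, and guarantees integrality. Adding $\whit{\gamma_d}$ and $\whit{\gamma_i}$ in this form, the $\alpha$ and $-\alpha$ corner contributions cancel, and the smooth tangent rotations along $[t_v, t_v^*]$ and along $[0,t_v] \cup [t_v^*, 1]$ reassemble to the total smooth tangent rotation of $\gamma$ (which has no corner at $v$, being smooth there), giving the desired equality $\whit{\gamma_d} + \whit{\gamma_i} = \whit{\gamma}$.

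The main obstacle will be pinning down the exterior-angle convention at $v$ consistently across $\gamma_d$ and $\gamma_i$, and justifying integrality of $\whit{\gamma_d}$ and $\whit{\gamma_i}$ individually as piecewise smooth closed curves; once this is fixed by the choice $\alpha \in (-\pi,\pi)$, the cancellation is automatic and the remainder of the argument is essentially bookkeeping.
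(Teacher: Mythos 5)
Your proof is correct, but it takes a genuinely different route from the paper's. The paper does not argue via exterior angles at all: it first passes to a \emph{loop decomposition} (refining each $C\in\Omega$ into simple loops via Seifert's uncrossing moves) and then invokes the classical Seifert--Gauss fact that $\whit{\gamma}=n_+-n_-$, the signed count of loops in any loop decomposition (\lemref{SSDsign}); additivity over $\Omega$ then follows by grouping the loops according to which $C_i$ they refine. You instead induct on $|\Omega|$, reduce to the two-piece identity $\whit{\gamma}=\whit{\gamma_d}+\whit{\gamma_i}$ for a direct/indirect split pair, and prove that identity directly from the piecewise-smooth Umlaufsatz, with the two exterior angles $\pm\alpha$ at the split vertex cancelling. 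Your argument is more self-contained -- it derives the additivity mechanism from the analytic definition of the Whitney index rather than citing the loop-decomposition theorem (whose standard proof is essentially your cancellation argument applied repeatedly) -- while the paper's version is shorter given the classical result and stays closer to the combinatorial framework used elsewhere in the paper. One small repair you should make: in the inductive step you split $C_1=\gamma\setminus\gamma_1$, which is already only piecewise smooth, so the claim that each piece ``acquires exactly one corner'' holds literally only at the first split. You should state the two-piece identity for piecewise smooth closed curves with transverse (non-cusp) corners; the pre-existing corners are simply distributed between $\gamma_d$ and $\gamma_i$ with unchanged exterior angles and contribute identically to both sides, so only the new corner at $v$ requires the $\pm\alpha$ cancellation, and integrality of each piece's turning number is guaranteed by the piecewise-smooth Umlaufsatz since transversality rules out exterior angles of $\pm\pi$.
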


A consequence of \lemref{WHSSD} is that iteratively removing loops and summing $\pm 1$ for their
signs allows one to quickly compute Whitney indices.  
Assuming $\gamma$ is given as a directed plane multigraph,
one can adapt a depth-first traversal to compute such a \decomp\ of $\gamma$ in $O(|\gamma|)$ time, which yields the following corollary:
%

\begin{corollary}[Compute Whitney Index]\label{cor:computeWI}
Let $\gamma \in \C$ be of complexity $n=|\gamma|=|V(\gamma)|$. One can compute a \decomp\ of $\gamma$, and $\whit{\gamma}$, in $O(n)$ time.
\end{corollary}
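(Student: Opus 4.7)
By \lemref{WHSSD}, for any direct split decomposition $\Omega$ of $\gamma$ we have $\whit{\gamma}=\sum_{C\in\Omega}\whit{C}$, and each loop (simple direct split) has Whitney index $\pm 1$ equal to its orientation sign. So it suffices to produce, in $O(n)$ time, a direct split decomposition $\Omega$ of $\gamma$ all of whose constituents are loops, together with the $\pm 1$ sign of each, and then add.

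Given $G(\gamma)$, first read off the signed intersection sequence of $\gamma$, a word of length $O(n)$, in $O(n)$ time by walking the curve once. Process this word from left to right while maintaining a doubly linked list $L$ of currently ``open'' vertices (those whose first occurrence has been read but whose second has not) together with an array of pointers into $L$ indexed by vertex label. On reading the first occurrence of $v$, append $v$ to the tail of $L$; on reading the second occurrence of $v$, delete $v$'s node from $L$ via its stored pointer and emit the direct split $\gamma_v$, recording its orientation $\pm 1$ in $O(1)$ from the cyclic order of edges at $v$ in $G(\gamma)$. Every symbol is handled in $O(1)$, for a total of $O(n)$.

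It remains to argue that these emissions, followed by the residual simple curve through the basepoint, form a valid direct split decomposition of $\gamma$ into loops. Containment is respected automatically: if $w\subset v$, then $w$'s second occurrence precedes $v$'s, so $\gamma_w$ is emitted before $\gamma_v$, and when $v$ is closed every vertex it contains has already been removed, making $\gamma_v$ a loop in the intermediate curve $C_{i-1}=\gamma\setminus(\gamma_1\cup\cdots\cup\gamma_{i-1})$. Linked vertices cause no harm: if $w\links v$ and $v$ is closed first, only one of $w$'s occurrences lies in $[t_v,t_v^*]$, so $\gamma_v$ passes through $p_w$ without a self-crossing (still a loop in $C_{i-1}$); after its removal, $p_w$ becomes a pass-through in $C_i$ rather than a self-intersection, so $w$'s eventual second-occurrence symbol matches no open vertex and is safely skipped. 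Summing the $\pm 1$ orientation signs of all emitted loops, plus that of the final residual simple loop, yields $\whit{\gamma}$ by \lemref{WHSSD}.

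The main obstacle in tightening this sketch is formalizing the correspondence between the algorithm's state and the intermediate curve $C_i$, namely showing by induction on the number of processed symbols that the unread suffix of the sequence together with $L$ faithfully encodes $C_i$ and that each emitted $\gamma_i$ really is a direct split of $C_{i-1}$. This is a routine check from the characterization of $\subset$ and $\links$ in terms of the signed intersection sequence, together with the transitivity of the direct split relation already noted in the paper.
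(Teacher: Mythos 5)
Your algorithm is, in substance, exactly what the paper has in mind: the paper's entire justification for this corollary is the one-sentence remark that a depth-first traversal of the plane multigraph yields a \decomp\ in $O(|\gamma|)$ time, and your left-to-right scan of the intersection sequence---closing a vertex at its second occurrence and emitting the resulting simple direct split, then summing $\pm 1$ via \lemref{WHSSD} (equivalently \lemref{SSDsign})---is a concrete and essentially correct instantiation of that traversal. Two steps need repair, though neither changes the approach or the $O(n)$ bound. First, when you close $v$ you must also purge from $L$ every still-open vertex that was opened strictly after $v$ (these are precisely the vertices linked with $v$ whose first passage lies on the emitted loop); as written, such a $w$ remains in $L$, so at $t_w^*$ your scan would find $w$ ``open'' and emit a bogus loop even though $p_w$ is no longer a crossing of the intermediate curve. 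Since $L$ is ordered by insertion time, the purge is a truncation of $L$ at $v$'s node and costs $O(1)$ amortized per vertex. Second, the orientation of the emitted loop is \emph{not} computable in $O(1)$ from the cyclic order of the four edge-ends at $v$: that local data determines the vertex sign $\sgn(v)$, but the paper's correspondence between $\sgn(v)$ and the loop's orientation holds only for \emph{outwards} loops (and only the first loop is guaranteed outwards, by \lemref{OL}, and only under an outer-basepoint hypothesis the corollary does not make); which of the two local sectors at $v$ is the loop's interior is a global property of the embedding. The fix is to compute each emitted loop's orientation globally---total turning, or the turn at its leftmost vertex---in time linear in the loop's length; since the emitted loops are edge-disjoint and cover $[\gamma]$, this still totals $O(n)$.
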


\subsection{Well-Behaved Homotopies}

Let $H$ be a nullhomotopy of a curve $\gamma$, and consider all the
points $A = \{v_i\}_{i=1}^k$ of~$\R^2$ such that $H$ performs a
$\text{I}_a$ move to contract a loop to that point.
All such points are called \myemph{anchor points} of the homotopy $H$.
Following \cite{fkw_mahncp_17} we call a homotopy $H$
\myemph{well-behaved} when the anchor points $A$ of $H$ satisfy
$A \subseteq V(\gamma)$, i.e., $H$ only contracts loops to vertices of the original curve, not to new vertices created along the way by $H$.
The theorem below from \cite{fkw_mahncp_17} shows that computing minimum homotopy area is reduced to
finding an optimal \SO\ decomposition.
The homotopy
$H$ guaranteed in the following theorem is well-behaved.

\begin{theorem}\label{thm:mhd} [Minimum Homotopy Decompositions]
Let $\gamma \in \C $. Then there is a \SO\ decomposition $\Omega = (\gamma_i)_{i=1}^k$ of $\gamma$
as well as an associated minimum homotopy $H_{\Omega}$ of $\gamma$ such that
$H_{\Omega} = \sum_{i=1}^k H_i$ and each $H_i$ is a nullhomotopy
of $\gamma_i$.
In particular,~$\sigma(\gamma) = \min_{\Omega \in \mathscr{D}(\gamma)} \sum_{C \in \Omega} W(C)$,
where $\mathscr{D}(\gamma)$ is the set of all \SO\ decompositions of $\gamma$.
\end{theorem}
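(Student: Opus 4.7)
My plan is to prove the theorem in two matching halves: a constructive upper bound built from any given \SO\ decomposition, and a lower bound extracted from an optimal nullhomotopy. Together they give the area formula $\sigma(\gamma)=\min_{\Omega\in\mathscr{D}(\gamma)}\sum_{C\in\Omega}W(C)$, and the decomposition attaining the minimum, together with its assembled nullhomotopy, is the $\Omega$, $H_\Omega$ promised in the statement.

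For the upper bound $\sigma(\gamma)\le \min_{\Omega}\sum_{C\in\Omega}W(C)$, I would fix any \SO\ decomposition $\Omega=(\gamma_i)_{i=1}^k$, linearize the tree $T_\Omega$ so that each child is processed before its parent, and nullhomotope the subcurves one at a time. Because each $\gamma_i$ is \SO, the characterization $\sigma(\gamma_i)=W(\gamma_i)$ supplies an optimal nullhomotopy $H_i$ of area $W(\gamma_i)$ whose support is confined to $[\gamma_i]$ together with its bounded complementary region, so $H_i$ does not disturb the rest of the running curve. Concatenating yields $H_\Omega=\sum_i H_i$, a nullhomotopy of $\gamma$ of total area $\sum_i W(\gamma_i)$.

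For the lower bound, I would start with an area-optimal nullhomotopy $H$ of $\gamma$ and, by a standard perturbation argument, put it into well-behaved form so every $\text{I}_a$ move contracts a loop to a vertex of the original curve $\gamma$. List the anchor vertices $v_1,\dots,v_k$ in the order of their contractions, and for each $i$ pair the $\text{I}_a$ event at $v_i$ with the earlier moment at which the collapsing loop first emerges in the running curve (via an $\text{I}_b$ or $\text{II}_a$ move that severs it from the rest). The subinterval $H_i\subseteq H$ between those two events is then a nullhomotopy of the direct split $\gamma_{v_i}$ of $\gamma$; tracking combinatorial identities through the intervening moves (legitimate because $H$ is well-behaved) shows that the $v_i$ are pairwise unlinked in $\gamma$, so they form the vertex set of a direct split decomposition $\Omega$, and the $H_i$ reassemble to $H$. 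Optimality of $H$ forces each $H_i$ to be area-optimal on $\gamma_{v_i}$; applying \lemref{sensePreservingOptimal} piecewise then gives that each $\gamma_{v_i}$ is consistent and each $H_i$ is sense-preserving, and combined with \lemref{sense-preserving-monotone} a sense-preserving nullhomotopy sweeps each point $x$ exactly $|wn(x,\gamma_{v_i})|$ times, so $\text{Area}(H_i)=W(\gamma_{v_i})$. Together with the Whitney-index bookkeeping of \lemref{WHSSD}, this certifies that each $\gamma_{v_i}$ is \SO\ (positive or negative), making $\Omega$ an \SO\ decomposition with $\sigma(\gamma)=\sum_i W(\gamma_{v_i})\ge \min_{\Omega'}\sum_{C\in\Omega'}W(C)$.

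The hardest step is the middle one: converting the time-ordered list of $\text{I}_a$ events of $H$ into a bona fide direct split decomposition of the \emph{original} curve $\gamma$. It requires verifying that (i) each contracting loop can be traced through the intervening homotopy moves back to a well-defined direct split of $\gamma$, (ii) the anchor vertices are pairwise unlinked in $\gamma$, and (iii) the carving of $H$ is exhaustive and the pieces do not overlap. All three claims rely critically on well-behavedness of $H$, and their careful verification is where the main technical content of the proof lies.
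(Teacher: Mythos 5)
A point of comparison first: the paper does not prove this theorem at all --- it is imported from \cite{fkw_mahncp_17}, where it is the main result, so your proposal has to be measured against that paper's proof rather than anything in the present one. Your two-bound architecture (upper bound by concatenating optimal nullhomotopies of the pieces of an arbitrary \SO\ decomposition; lower bound by carving an optimal well-behaved nullhomotopy at its $\text{I}_a$ events into subhomotopies of direct splits) is indeed the shape of the argument in \cite{fkw_mahncp_17}, and the upper-bound half is essentially complete as you state it.

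There are, however, two genuine gaps in your lower bound. First, you dispose of the reduction to a well-behaved optimal nullhomotopy as ``a standard perturbation argument.'' It is not: proving that the infimum defining $\sigma(\gamma)$ is attained by a combinatorial homotopy at all, and then that among optimal homotopies there is one whose $\text{I}_a$ moves all contract loops to vertices of the \emph{original} curve, is the central technical content of \cite{fkw_mahncp_17}; it cannot be treated as a black box inside a proof of the very statement it is used to establish. Second, your deduction that each piece $\gamma_{v_i}$ is \SO\ misapplies \lemref{sensePreservingOptimal}: that lemma takes consistency of the curve as a \emph{hypothesis} and concludes that optimal nullhomotopies are exactly the sense-preserving ones; it cannot be run in reverse to conclude that an area-optimal $H_i$ is sense-preserving, nor that $\gamma_{v_i}$ is consistent. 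Zero obstinance of a piece does not imply its consistency (see $\gamma_{ZO}$ in \figref{smallerCurveLattice}), so the step from ``$H_i$ is area-optimal with exactly one $\text{I}_a$ move'' to ``$\gamma_{v_i}$ is \SO'' needs an independent argument --- in \cite{fkw_mahncp_17} this is done by induction on the homotopy moves of $H_i$, not via the sense-preserving characterization. Your instinct about the difficulty of tracing contracting loops back to direct splits of $\gamma$ and verifying the anchors are pairwise unlinked is correct, but it is not the only hard step.
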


\subsection{Equivalence of Interior Boundaries}

In this section, we unify different definitions and
characterizations of interior boundaries by showing their
equivalence.
%
%
We call a curve $\gamma$ a \myemph{$\mathbf{k}$-interior boundary} when
(1)~$\obs(\gamma) = 0$,
(2)~$\whit{\gamma} = k>0$, and
(3)~$\gamma$ is positive consistent.
We call~$\gamma$ a \myemph{$(\mathbf{-k})$-interior boundary} when its reversal $\overline{\gamma}$ is a $k$-interior boundary. 
In accordance with Titus~\cite{TT61}, we call a curve $\zeta: [0,1] \rightarrow \mathbb{R}^2$
a \myemph{Titus interior boundary}   if there exists a map $F:
\mathbb{D}^2 \rightarrow \mathbb{R}^2$ such that $F$ is continuous, \myemph{light}
(defined as: pre-images are totally disconnected), open, orientation-preserving, and
$F|_{\partial \mathbb{D}^2} = \zeta $. The map $F$ is called \myemph{properly
interior}.

We prove the equivalence of these definitions and of two further characterizations below. 

\begin{theorem}[Equivalence of Interior Boundaries]\label{thm:EIB}
    Let $\gamma \in \C$ have a positive outer basepoint~$\gamma(0)$,
    and suppose $\whit{\gamma} = k>0$. Then, the following are equivalent:
    \begin{enumerate}
        \item $\gamma$ is an interior boundary.\label{EIB:i}
        \item $\gamma$ is a Titus interior boundary.\label{EIB:ii}
        \item $\gamma$ admits a \SO\ decomposition $\Omega = (\gamma_i)_{i=1}^k$, where
            each $\gamma_i$ is positive \SO. \label{EIB:iii}
        \item $\gamma$ admits a well-behaved left sense-preserving nullhomotopy $H$ with exactly $k$
            $\text{I}_a$-moves. \label{EIB:iv}
    \end{enumerate}
\end{theorem}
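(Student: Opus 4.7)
The plan is to prove the cycle $(\ref{EIB:i}) \Rightarrow (\ref{EIB:ii}) \Rightarrow (\ref{EIB:iii}) \Rightarrow (\ref{EIB:iv}) \Rightarrow (\ref{EIB:i})$, exploiting the hypothesis that $\gamma$ has a positive outer basepoint and $\whit{\gamma} = k > 0$. For $(\ref{EIB:i}) \Rightarrow (\ref{EIB:ii})$, the defining interior map $F: \mathbb{D}^2 \to \mathbb{R}^2$ is, near each branch point, locally modeled on $z \mapsto z^m$ for some $m \geq 2$, and on the complement of the branch points is an orientation-preserving immersion. Both local models are continuous, open, orientation-preserving, and light (preimages are discrete, hence totally disconnected), so $F$ is properly interior.

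For $(\ref{EIB:ii}) \Rightarrow (\ref{EIB:iii})$, I would extract two combinatorial consequences of any properly interior $F$: (a) by the area formula, each $x \in \mathbb{R}^2$ is covered by $F$ with multiplicity $wn(x,\gamma)$, so $F$ realizes a nullhomotopy of area $W(\gamma)$; combined with \lemref{minHomotopy_and_windingAreas} this gives $\sigma(\gamma) = W(\gamma)$, i.e., $\obs(\gamma) = 0$; and (b) orientation-preservation forces $wn(F,\gamma) \geq 0$ on every face, so $\gamma$ is positive consistent. Applying \thmref{mhd} produces a \SO\ decomposition $\Omega = (\gamma_i)_{i=1}^k$ with $\sum_i W(\gamma_i) = \sigma(\gamma) = W(\gamma)$. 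Any negative \SO\ piece would contradict this area equality under positive consistency of $\gamma$ (via Observation \ref{obs:wnDecomp}), so every $\gamma_i$ is positive \SO; since each such piece has Whitney index $1$, \lemref{WHSSD} forces exactly $k$ pieces.

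For $(\ref{EIB:iii}) \Rightarrow (\ref{EIB:iv})$, each positive \SO\ subcurve $\gamma_i$ has a defining orientation-preserving immersion $F_i: \mathbb{D}^2 \to \mathbb{R}^2$, and radially contracting $\mathbb{D}^2$ through $F_i$ produces a nullhomotopy $H_i$ of $\gamma_i$ of area $W(\gamma_i)$, which is left sense-preserving by \lemref{sensePreservingOptimal}, well-behaved (its unique anchor is the basepoint of $\gamma_i$, itself a vertex of $\gamma$), and contains exactly one $\mathrm{I}_a$-move. Concatenating $H_1,\ldots,H_k$ in the order dictated by $\Omega$ yields the required $H$ with exactly $k$ $\mathrm{I}_a$-moves. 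For $(\ref{EIB:iv}) \Rightarrow (\ref{EIB:i})$, I would construct $F: \mathbb{D}^2 \to \mathbb{R}^2$ from $\overline{H}$ by parameterizing $\mathbb{D}^2$ with concentric level curves indexed by homotopy time and placing the $k$ anchor points of $H$ in the interior of the disk. Between consecutive $\mathrm{I}_b$-moves of $\overline{H}$, the sweep consists only of $\mathrm{I}$, $\mathrm{II}$, $\mathrm{III}$ moves performed in a left sense-preserving manner, so by \lemref{sense-preserving-monotone} no folding occurs and $F$ is a local immersion on this part of the disk; at each anchor point the sequential creation of loops produces local behavior of the form $z \mapsto z^m$, giving a branch point. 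Hence $F$ is an orientation-preserving interior map with finitely many branch points and $F|_{\partial \mathbb{D}^2} = \gamma$.

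The main obstacle is the final implication $(\ref{EIB:iv}) \Rightarrow (\ref{EIB:i})$: turning purely combinatorial homotopy data into a genuine interior map requires verifying that between $\mathrm{I}_a$-moves the sweep truly defines an immersion (no pinching or folding) and that at each anchor point the local structure realizes a branch point of the correct ramification order, determined by the type and number of loops collapsed there. This local-to-global analytic construction, carried out carefully under sense-preservation and well-behavior, is the technical core underlying the entire equivalence.
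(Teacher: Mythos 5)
There is a genuine gap, and it stems from which definition of ``interior boundary'' item~\ref{EIB:i} refers to. In this paper, a $k$-interior boundary is \emph{defined} combinatorially: $\obs(\gamma)=0$, $\whit{\gamma}=k>0$, and $\gamma$ positive consistent. Your cycle instead treats \ref{EIB:i} as the classical branch-point definition (an interior map $F:\mathbb{D}^2\to\R^2$ that is an immersion away from finitely many points where it is locally $z\mapsto z^m$). Under that reading, your \ref{EIB:i}~$\Rightarrow$~\ref{EIB:ii} is almost a tautology (checking that $z\mapsto z^m$ is light, open, and orientation-preserving), whereas the paper's \ref{EIB:i}~$\Rightarrow$~\ref{EIB:ii} is the genuinely hard direction: it proceeds by induction on $k$, locating a vertex $v$ whose two direct splits both have positive Whitney index, invoking the inductive hypothesis on each, and gluing the two properly interior maps along an interior arc via Titus' construction. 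Your proposal never produces a properly interior map from the hypotheses $\obs(\gamma)=0$, positive consistency, and $\whit{\gamma}=k$, so the chain does not actually connect to \ref{EIB:i} as the theorem intends it.

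The same mismatch makes your final implication \ref{EIB:iv}~$\Rightarrow$~\ref{EIB:i} both harder than necessary and incomplete. You correctly identify that manufacturing a genuine interior map from combinatorial homotopy data (no folding between $\text{I}_a$-moves, correct ramification at anchor points) is delicate --- but you then leave exactly that step as an acknowledged obstacle rather than carrying it out, so the cycle is not closed even on your own terms. Under the paper's definition the implication is short: the reversal $\overline{H}$ is right sense-preserving, so by \lemref{sense-preserving-monotone} the winding numbers increase monotonically from the constant curve, giving positive consistency; left sense-preserving nullhomotopies are optimal by \lemref{sensePreservingOptimal}, giving $\obs(\gamma)=0$; and \lemref{WHSSD} applied to the induced decomposition gives $\whit{\gamma}=k$. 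Your middle implications \ref{EIB:ii}~$\Rightarrow$~\ref{EIB:iii} (covering multiplicity equals winding number, hence $\sigma=W$, then \thmref{mhd} plus exclusion of negative pieces) and \ref{EIB:iii}~$\Rightarrow$~\ref{EIB:iv} (concatenating the radial contractions) are sound and match the paper's arguments; the fix is to restate \ref{EIB:i} with the paper's definition, replace your first implication with the inductive Titus-gluing argument, and replace your last with the sense-preservation argument above.
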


\begin{proof}
%
{\bf \ref{EIB:i} $\Rightarrow$ \ref{EIB:ii}}: We proceed by induction on $k$.
    If $k=1$, let $\gamma_1=\gamma$. The proof of this follows from
    \cite{TT61}, as well as \cite{BL67}.
%
    Let $k \geq 1$ and let our inductive assumption be that for all $c \in \{1,
    2, \ldots, k\}$,
    if
    $\gamma \in \C$,
    $\whit{\gamma}=+c$, and $\gamma$ is an interior
    boundary, then $\gamma$ is a Titus interior boundary.
    Now, let $\gamma \in \C$ such that
    $\whit{\gamma}=+(k+1)$, and $\gamma$ is an interior
    boundary.  Since $\whit{\gamma} \geq 2$, we can find a vertex~$v$ of $\gamma$
    such that the two direct splits at $v$, $\gamma_1$ and $\gamma_2$,
    have positive Whitney indices~$k_1$ and~$k_2$, respectively.  Necessarily,
    $k_1 \leq k$ and $k_2 \leq k$, by Lemma \ref{lem:WHSSD}.
    By our inductive assumption, there exist properly interior maps $F_1,F_2
    \colon D^2 \to \R^2$ such that $F_1|_{\partial D^2} = \gamma_2$ and
    $F_1|_{\partial \mathbb{D}^2} = \gamma_2$.
    Finally, glue these two curves together at $v$ by finding an arc
    interior to both. (We refer the
    reader to Titus' paper \cite{TT61} to see a comprehensive explanation of his
    trick of gluing two properly interior mappings together along an interior
    arc.)
    The resulting map $F \colon \mathbb{D}^2 \# \mathbb{D}^2 \to \R^2$, where $\#$ denotes the
    connected sum,  extends both $F_1$ and $F_2$ and
    represents the curve $\gamma$.  Moreover,~$F$ is a properly interior map.
    We conclude that \ref{EIB:i} $\Rightarrow$ \ref{EIB:ii}.

{\bf \ref{EIB:ii} $\Rightarrow$ \ref{EIB:i}:} Let $\gamma$ be a Titus interior boundary.
    By \cite[Lemma 1]{TitusNC}, we know that $\gamma$ is consistent.
    Furthermore, by
    \cite{TT61}, we have $\abs{\gamma^{-1}(x)} = wn(x, \gamma)$ for all $x$.
    Thus, the linear
    retraction of the disk $D^2$ induces a homotopy $H$ with $A(H) = W(\gamma)$.
    By \lemref{minHomotopy_and_windingAreas}, we have~$\sigma(\zeta) = W(\gamma)$.
    Thus, $\gamma$ is an interior
    boundary, and so \ref{EIB:ii} $\Rightarrow$ \ref{EIB:i}.

{\bf \ref{EIB:i} $\Rightarrow$ \ref{EIB:iii}:} Let $\gamma$ be an
interior boundary. By \thmref{mhd}, we have an optimal \SO\
decomposition $\Omega = (\gamma_i)_{i=1}^j$ of $\gamma$. Suppose, by
contradiction, that there exists an~$l\leq j$ such that $\gamma_l$ is negative \SO.
%
%
Let $F$ be any face contained in the interior $\intt(\gamma_l)$. We know by \obsref{wnDecomp} that
$wn(F,\gamma) = \sum_{i=1}^j wn(F,\gamma_i)$, and since~$\gamma$ is positive consistent $wn(F,\gamma)\geq 0$.
Thus there must exist a positive \SO\ curve~$\gamma_i \in \Omega$ with~$F \subseteq \intt(\gamma_i)$.
%
Consider the nullhomotopies $H_l$ and $H_i$ that are part of the canonical optimal homotopy $H_{\Omega}$.
Then~$H_l$ contracts $\gamma_l$ and is right sense-preserving, while~$H_i$
contracts~$H_i$ and is left sense-preserving. Thus by
\lemref{sense-preserving-monotone}, $H_l$ increases the winding number on~$F$
and $H_i$ decreases the winding number, which means $F$ is swept more than~$W(F)$ times, a contradiction.
    Thus, no negative \SO\ subcurve~$\gamma_l$ may
exist in~$\Omega$.
Since~$\whit{C} = 1$ for any positive \SO\ subcurve and~$\whit{\gamma} = \sum_{i=1}^k \whit{\gamma_i}$ by \lemref{WHSSD}, we we must have $k=j$.

{\bf \ref{EIB:iii} $\Rightarrow$ \ref{EIB:i}:} 
    Suppose $\gamma$ has a decomposition $\Omega$ into $k$ positive
    \SO\ subcurves. Then, the canonical homotopy $H_{\Omega}$
associated to $\Omega$ is left sense-preserving, since each minimum nullhomotopy of the subcurves is left sense-preserving. Since sense-preserving homotopies are optimal, see \lemref{sensePreservingOptimal}, we have $\sigma(\gamma) = W(\gamma)$. For any \SO\ decomposition $\Omega = (\gamma_i)_{i=1}^k$, we may conclude that $wn(x, \gamma) = \sum_{i=1}^k wn(x, \gamma_i)$ by \obsref{wnDecomp}.
    Thus, $wn(x, \gamma) = \sum_{i=1}^k wn(x, \gamma_i) \geq 0$ since
each $\gamma_i$ is positive consistent, as a positive \SO\ curve.

{\bf \ref{EIB:i} $\Leftrightarrow$ \ref{EIB:iv}:}
    If $\gamma$ has a well-behaved left sense-preserving nullhomotopy $H$
with exactly $k$ $\text{I}_a$-moves, then $H$ comes naturally with an associated \SO\ decomposition $\Omega$
of $\gamma$ with $|\Omega| = k$, and $\whit{\gamma} = k>0$ by \lemref{WHSSD}.
We now show that $\sigma(\gamma) = W(\gamma)$. Consider the reversal
$\overline{H}$ from the constant curve $\gamma_{p_0}(t) = p_0$ to
$\gamma$. Then, $\overline{H}$ is right sense-preserving and by \lemref{sense-preserving-monotone} the
function $a(i) = wn(x, \overline{H}(i, \cdot))$ is monotonically
increasing for any~$x \in \mathbb{R}^2$.  Since $wn(x, \gamma_{p_0}) =
0$ for all~$x \in \mathbb{R}^2$, we have that $wn(x, \gamma) \geq 0$
for all~$x \in \mathbb{R}^2$. Thus,~$\gamma$ is an interior boundary.
Conversely, if $\gamma$ is a positive interior boundary, then $\obs(\gamma)=0$ and by
\lemref{sensePreservingOptimal}, and since $\gamma$ is positive, $H$ is
left sense-preserving.
Again, by \lemref{WHSSD}, $\whit{\gamma} = j$,
where $j$ is the number of $\text{I}_a$-moves in any well-behaved nullhomotopy $H$ of
    $\gamma$. Hence, we must have $j = k$, as desired.
\qed
\end{proof}

\subsection{Equivalences of Self-Overlapping Curves}
In this section, we study different characterizations of \SO\ curves and
show their equivalence in \thmref{ESO}, which also shows that \SO\ curves are $1$-interior boundaries.

First we describe a geometric formulation of \SO ness,
inspired by the work of Blank and Marx \cite{BL67,Marx}.
%
Let $\gamma \in \C$ be \SO. Let $P:[0,1]
    \rightarrow \mathbb{R}^2$ be a simple path
so that $P(0)=q=\gamma(t_q)$ and $P(1)=p=\gamma(t_p)$ lie on $[\gamma]$ but are not vertices of $\gamma$.
Without loss of generality, assume $t_q<t_p$.
Let $\widetilde{P} := \gamma|_{[t_q, t_p]}$, and suppose that
(1)~$P \cap \widetilde{P} = \{p, q\}$,
(2)~$C=\widetilde{P} * \overline{P}$ is a simple closed curve, and
(3)~$C$ is positively oriented; see \figref{BlankCut} as well as \figref{NiceSO}.
Then we call $P$ a \myemph{Blank cut} of $\gamma$. By cutting along $P$,
$\gamma$ is split into two curves of strictly smaller complexity, $\gamma_1$ and $C$.
We call a sequence $(P_i)_{i=1}^k$ of Blank cuts a  \myemph{Blank cut~decomposition} if the final curve is a simple
positively oriented curve.

\begin{figure}[ht]
\centering
\includegraphics[width = .85\textwidth]{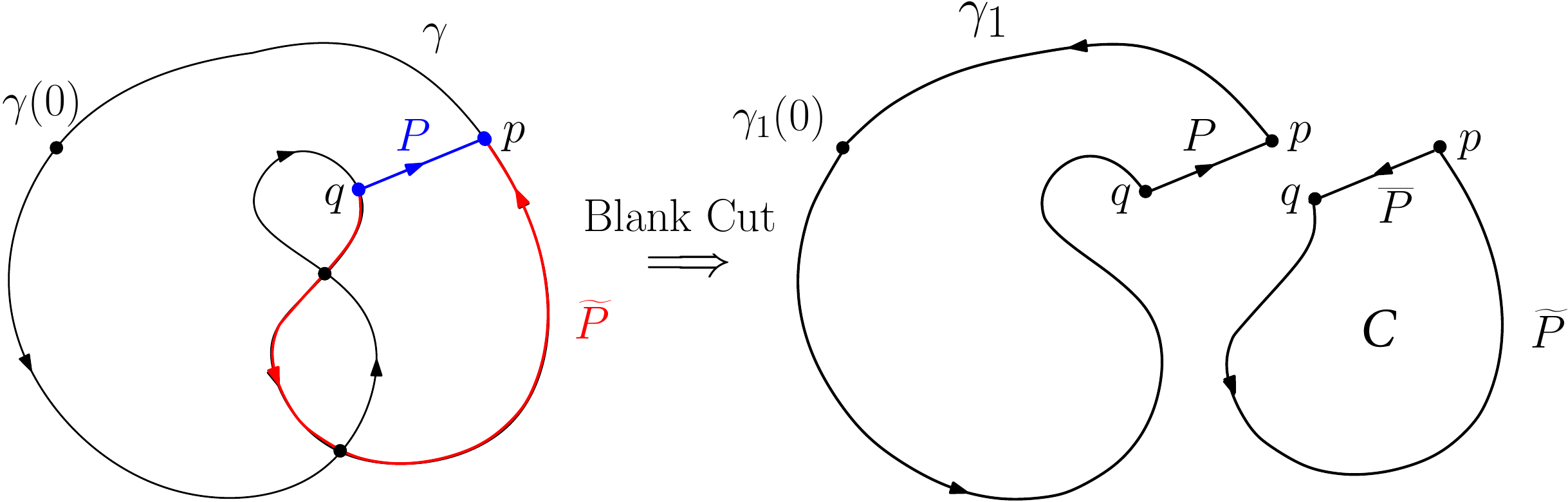}
\caption{A Blank cut on a small \SO\ curve.}
\label{fig:BlankCut}
\end{figure}


\begin{theorem}[Equivalent Characterizations of Self-Overlapping Curves]
\label{thm:ESO}
Let $\gamma \in \C$.
Then the following are equivalent:
\begin{enumerate}
\item (Analysis)
There is an immersion $F: D^2 \rightarrow \mathbb{R}^2$ so that $F|_{\partial D^2} = \gamma$.\label{ESO:i}
\item (Geometry)
$\gamma$ admits a Blank cut decomposition. \label{ESO:ii}
\item (Geometry/Topology) $\gamma$ is a $1$-interior boundary, i.e., \SO. 
\label{ESO:iii}
\item (Topology)
$\gamma$ admits a left-sense preserving nullhomotopy $H$ with exactly one $\text{I}_a$-move.\label{ESO:iv}
\item (Analysis) $\gamma$ is a Titus interior boundary with $\whit{\gamma} = 1$.
\label{ESO:v}
\end{enumerate}
\end{theorem}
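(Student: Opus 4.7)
The plan is to derive most of the equivalences as immediate consequences of \thmref{EIB} instantiated at $k=1$, and then bridge to the two remaining characterizations --- the immersion definition \ref{ESO:i} and the Blank cut decomposition \ref{ESO:ii}. Since self-overlapping and interior boundariness are invariant under cyclic relabeling of the basepoint, we may assume without loss of generality that $\gamma$ has a positive outer basepoint, so that \thmref{EIB} applies directly. At $k=1$, the Titus interior boundary condition \ref{EIB:ii} becomes exactly \ref{ESO:v}, the well-behaved left sense-preserving nullhomotopy with a single $\text{I}_a$-move becomes \ref{ESO:iv}, and the $1$-interior boundary condition \ref{EIB:i} is \ref{ESO:iii}. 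Hence \ref{ESO:iii} $\Leftrightarrow$ \ref{ESO:iv} $\Leftrightarrow$ \ref{ESO:v} is immediate.

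To bridge \ref{ESO:i} with \ref{ESO:iii}, I would invoke the \SO\ decomposition characterization \ref{EIB:iii} of \thmref{EIB}. For \ref{ESO:i} $\Rightarrow$ \ref{ESO:iii}, the orientation-preserving immersion $F \colon D^2 \to \mathbb{R}^2$ extending $\gamma$ induces a nullhomotopy by linear retraction of the disk whose swept area equals $W(\gamma)$, giving $\obs(\gamma) = 0$; positive consistency follows because winding numbers agree with preimage counts under $F$; and $\whit{\gamma} = 1$ is classical. Conversely, for \ref{ESO:iii} $\Rightarrow$ \ref{ESO:i}, \thmref{EIB} guarantees a \SO\ decomposition consisting of exactly one positive \SO\ subcurve, which in any single-element direct split decomposition must equal $\gamma$ itself; thus $\gamma$ is positive \SO.

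The remaining step is the equivalence \ref{ESO:i} $\Leftrightarrow$ \ref{ESO:ii}, which is classical and due to Blank \cite{BL67}. For \ref{ESO:ii} $\Rightarrow$ \ref{ESO:i}, I would induct on the length $k$ of the Blank cut decomposition. The base case $k=0$ is immediate, as a simple positively oriented Jordan curve bounds a disk. For the inductive step, the first cut $P_1$ splits $\gamma$ into a simple positive Jordan curve $C$, bounding a disk $D_C$, together with a curve $\gamma'$ of strictly smaller complexity that inherits a Blank cut decomposition of length $k-1$. By induction, $\gamma'$ extends to an immersion $F' \colon D^2 \to \mathbb{R}^2$, and the path $P_1$ appears as an arc in both $\partial D^2$ and $\partial D_C$. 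Gluing the two disks along this common arc --- the same connected-sum trick used in the proof of \thmref{EIB} --- produces an orientation-preserving immersion extending $\gamma$.

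The hard part is the converse \ref{ESO:i} $\Rightarrow$ \ref{ESO:ii}, the heart of Blank's thesis. The strategy would be to select a face of maximal depth under the immersion, argue that its bounding arc on $\gamma$ pairs with a simple interior path to form a valid Blank cut, and show that cutting along it strictly reduces the combinatorial complexity of $\gamma$ while preserving \SO ness, so that iteration yields the full decomposition. The main obstacle is making the existence of such a reducing cut rigorous; I would rely on Blank's original combinatorial argument \cite{BL67} rather than reprove it from scratch. With that in hand, the cycle of equivalences closes and all five characterizations coincide.
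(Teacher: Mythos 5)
Your proposal is sound in outline, and the first half (getting \ref{ESO:iii} $\Leftrightarrow$ \ref{ESO:iv} $\Leftrightarrow$ \ref{ESO:v} from \thmref{EIB} at $k=1$, and bridging to \ref{ESO:i} via the trivial one-element \SO\ decomposition of item \ref{EIB:iii}) is exactly what the paper does. Where you genuinely diverge is the Blank cut characterization. You route \ref{ESO:ii} through \ref{ESO:i}: the easy direction by induction and disk-gluing, and the hard direction (immersion $\Rightarrow$ existence of a reducing cut) deferred wholesale to Blank's thesis. The paper instead proves \ref{ESO:ii} $\Leftrightarrow$ \ref{ESO:iv} directly: a Blank cut is realized by a left sense-preserving homotopy deforming $\widetilde{P}$ to $P$ (ending with one $\text{I}_a$-move once the curve is simple), and conversely the regular portion $H'$ of a left sense-preserving nullhomotopy decomposes into $\text{II}_a$, $\text{II}_b$, and $\text{III}$ moves, each of which is itself a Blank cut (with every intermediate curve \SO\ because it inherits a left sense-preserving nullhomotopy with one $\text{I}_a$-move). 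This dictionary between homotopy moves and Blank cuts is precisely what lets the paper avoid re-invoking Blank's combinatorial existence argument --- the step you identify as "the heart of Blank's thesis" and leave as a black box. So your approach is acceptable if citing \cite{BL67} for that direction is permitted, but the paper's route is self-contained given \thmref{EIB} and is arguably the point of the theorem; if you want a complete proof in the paper's framework, replace your \ref{ESO:i} $\Rightarrow$ \ref{ESO:ii} sketch with the \ref{ESO:iv} $\Rightarrow$ \ref{ESO:ii} argument. One further caution: your opening WLOG ("cyclic relabeling gives a positive outer basepoint") is not quite automatic, since relocating the basepoint does not change which faces a given point of $[\gamma]$ is incident to; you would need to observe that a curve satisfying any of the five conditions does admit some point on $\partial F_{ext}$ whose bounded incident face has winding number $1$, so the basepoint can be moved there. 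The paper is equally silent on this, but it is worth a sentence.
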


\begin{proof}
By \propref{EIB:iii} in \thmref{EIB}, \SO\ curves are
$1$-interior boundaries, since any \SO\ curve $\gamma$ has the trivial \SO\
decomposition $\Omega = (\gamma)$. Thus, we have already established
$\ref{ESO:i} \Leftrightarrow \ref{ESO:iii} \Leftrightarrow \ref{ESO:iv} \Leftrightarrow\ref{ESO:v}$
in \thmref{EIB}. We now prove
$\ref{ESO:ii}\Leftrightarrow \ref{ESO:iv}$. Any Blank cut $P$ can be
performed by a left sense-preserving homotopy that deforms
$\widetilde{P}$ to $P$. Hence the Blank cut decomposition corresponds
to a left sense-preserving homotopy to a simple positively oriented
curve. Finally we perform a single $\text{I}_a$-move to complete a
left sense-preserving nullhomotopy of $\gamma$.
Conversely, let $\gamma$ have a left sense-preserving nullhomotopy
$H$.
From $\ref{ESO:iii} \Leftrightarrow \ref{ESO:iv}$
we know that every intermediary curve $\gamma_i = H(i, \cdot)$ 
is \SO\, since the subhomotopy $H_i = H|_{[i,1] \times [0,1]}$ is a left sense-preserving 
nullhomotopy of $\gamma_i$ with one $\text{I}_a$ move. As $H$ ends with a $\text{I}_a$ move, 
we may select a subhomotopy $H'$ such that
$\gamma \homotopyarrow{H'} C$, where $C$ is a simple \SO\ curve. Moreover,
we see that $H=H'+H''$, where the unique $\text{I}_a$-move
of $H$ occurs during $H''$. Thus, $H'$ is regular, i.e., consists of a sequence of homotopy
moves only of types $\text{II}_a$, $\text{II}_b$, or
$\text{III}$, which deform $\gamma$ to $C$.  Each of these
homotopy moves can be performed by a Blank cut, as shown
in \figref{BlankCut_Titus}. Since all of the intermediary curves are \SO,
this induces a Blank cut decomposition.
\qed
\end{proof}

\begin{figure}[htbp]
    \centering
    \parbox{0.40\textwidth}{
        \centering
        \includegraphics[width=.40\textwidth]{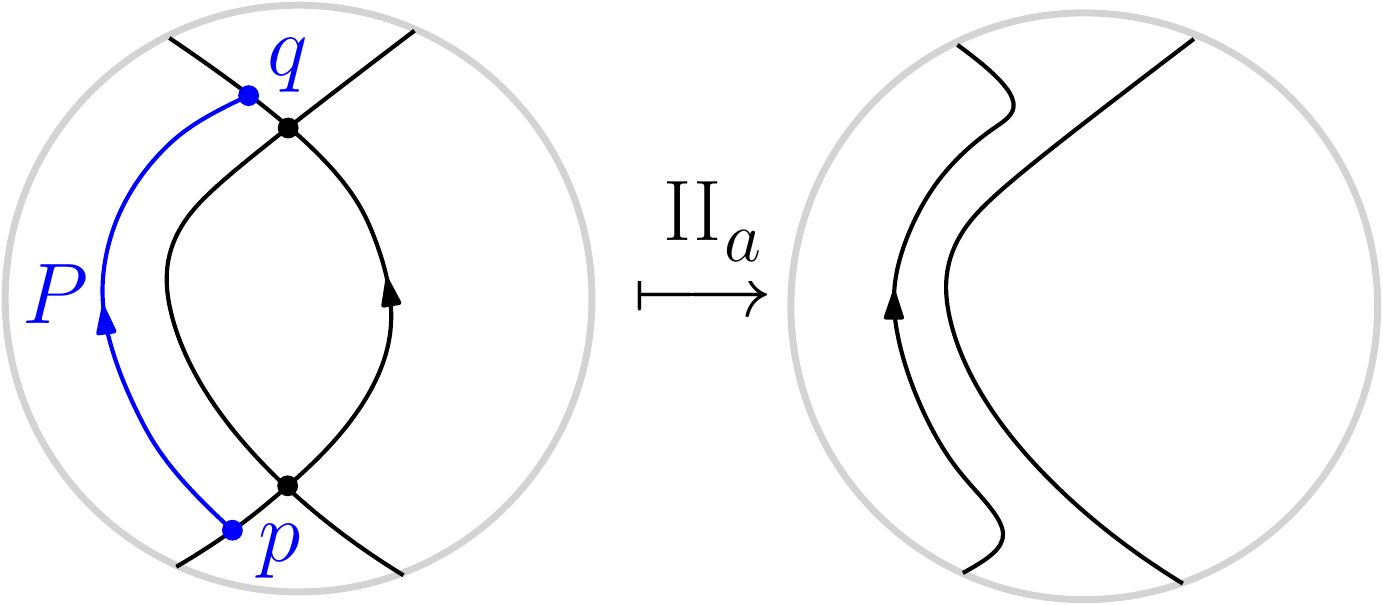}
    }
    \quad
    \begin{minipage}{0.40\textwidth}
        \centering
        \includegraphics[width=\textwidth]{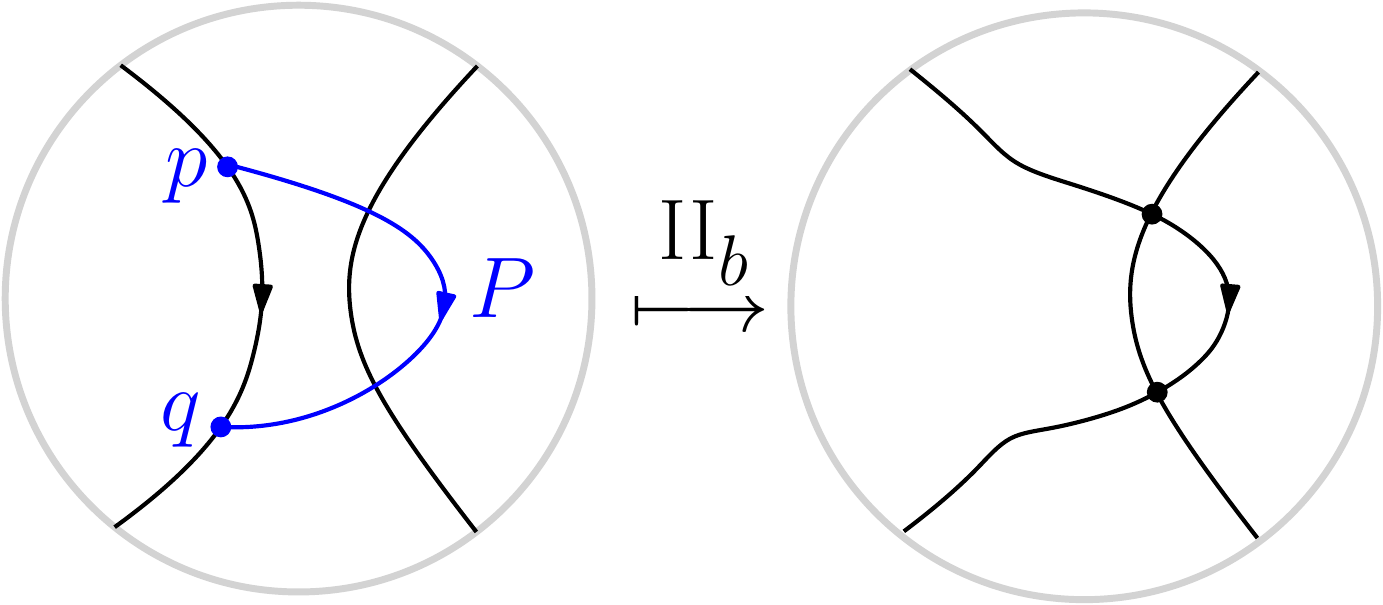}
    \end{minipage}\\[2ex]
    ~
    \begin{minipage}{0.40\textwidth}
        \centering
        \includegraphics[width=\textwidth]{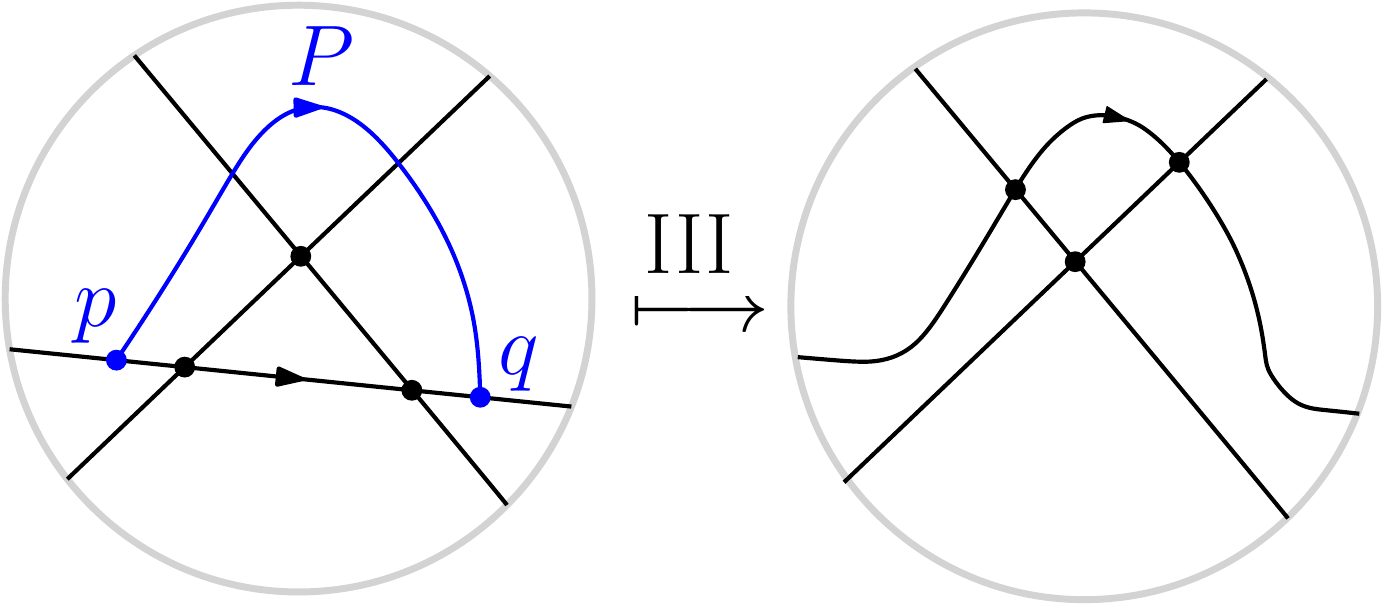}
    \end{minipage}
     \caption{Homotopy moves $\text{II}_a$, $\text{II}_b$, and $\text{III}$ each correspond to a Blank cut (shown in blue).}
    \label{fig:BlankCut_Titus}
\end{figure}

The following two lemmas provide useful properties of \SO\ curves, the
first of which was proved in \cite[Theorem 5]{TT61}.
\begin{lemma}[Empty Positively Oriented Loop]\label{lem:EmptyLoop}
Let $\gamma \in \C$ have a positive outer basepoint and
    an empty positively oriented loop. Then,~$\gamma$ is not \SO.
\end{lemma}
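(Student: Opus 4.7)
The plan is to prove this by contradiction: assume that $\gamma$ is \SO, and derive an obstruction from the interaction between the empty positive loop and the positive outer basepoint.

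First, I would record the structural consequences of emptiness. Since $v$ links no vertex $w\in V(\gamma)$, no edge of $\gamma\setminus\gamma_v$ can cross the Jordan curve $\gamma_v$, and hence no edge of $\gamma\setminus\gamma_v$ enters $\intt(\gamma_v)$. Thus $\gamma_v$ is an outwards loop whose interior is a single face $F_v$ of $G(\gamma)$, and $\whit{\gamma_v}=+1$.

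Next, I would invoke \thmref{ESO} to obtain a Blank cut decomposition $(P_1,\ldots,P_k)$ of $\gamma$ whose successive cuts reduce $\gamma$ through strictly complexity-decreasing curves $C_0=\gamma,C_1,\ldots,C_k$ to a simple positively oriented Jordan curve. Since $\gamma_v$ is simple and passes through no vertex of $\gamma$ other than $v$ itself, any Blank cut whose excised arc $\widetilde{P}$ lies entirely in the parameter range $[t_v,t_v^*]$ of $\gamma_v$ traverses no vertex and hence cannot strictly decrease complexity. So any Blank cut that eventually eliminates $v$ from the decomposition must use an arc $\widetilde{P}$ passing through exactly one of $t_v,t_v^*$---starting on a complement edge $e_1$ or $e_4$ at $v$ and ending on a loop edge $e_2$ or $e_3$ (and being simple rules out arcs that contain both passes of $v$).

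The main step, and also the main obstacle, is to show that no such cut can yield a positively oriented Jordan curve $C=\widetilde{P}*\overline{P}$. Around $v$ there are four sectors bounded by the loop edges $e_2,e_3$ and the complement edges $e_1,e_4$. The outwards property of $\gamma_v$, together with the requirement that $P$ lie in a single face of $\gamma$, forces $P$ to pass through a sector adjacent to both a loop edge and a complement edge. I would then use the positive outer basepoint condition---which pins down the CCW orientation of the outer boundary of $\gamma$ near $p_0$---to propagate a consistent orientation to the local picture at $v$ and conclude that the only admissible $C$ winds clockwise around its enclosed region, contradicting the positivity required of a Blank cut. This yields the sought contradiction.

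I expect the orientation argument in this last step to be the subtlest piece, where the global positive outer basepoint hypothesis must be converted into a local constraint at $v$; dropping this hypothesis allows counterexamples obtained by relocating the basepoint, so the argument genuinely needs to use it. A cleaner alternative would follow Titus~\cite{TT61} directly and phrase the obstruction in terms of the signed intersection sequence, where an empty positively oriented loop produces an unmatched subword that prevents the word-level reduction characterizing \SO ness, with the positive outer basepoint fixing the sign convention at the basepoint.
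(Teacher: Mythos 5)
The paper does not prove this lemma itself; it imports it from Titus \cite[Theorem 5]{TT61}, so the only question is whether your argument stands on its own. It does not: the entire content of the statement is concentrated in the step you yourself flag as ``the subtlest piece'' and then leave as a plan. You assert that every Blank cut whose arc $\widetilde{P}$ passes through exactly one of the two visits to $v$ must produce a clockwise curve $C=\widetilde{P}*\overline{P}$, and that this follows by ``propagating'' the orientation from the positive outer basepoint to $v$; but no mechanism for this propagation is given, and none is obvious --- $v$ can be separated from $p_0$ by arbitrarily many strands of $\gamma$, and the orientation of $C$ is governed by local data at $v$ (which of the two sectors between the loop edges is $\intt(\gamma_v)$, and on which sides of the two arcs of $\widetilde{P}$ the face carrying $P$ lies), not by anything visible at the basepoint. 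Until that computation is actually carried out --- e.g., by fixing the cyclic order of the four edges at the basepoint of a positively oriented outwards loop and checking both cases ($\widetilde{P}$ containing $t_v$ versus $t_v^*$) for every admissible placement of $P$ --- there is no proof, only a conjecture about where one might be found.

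Two further points in the reduction also need attention. First, you implicitly assume that when the decomposition finally reaches the cut eliminating $v$, the loop is still an empty positively oriented loop. That requires the earlier cuts never to route their replacement paths $P$ through $\intt(\gamma_v)$; the paper's definition of a Blank cut only demands $P\cap\widetilde{P}=\{p,q\}$ and says nothing about $P$ avoiding the rest of $[\gamma]$ or lying in a single face, so you must either adopt and justify a stronger definition or prove this as an invariant. Second, ``empty implies outwards'' does not follow from emptiness alone: the entire complement $\gamma\setminus\gamma_v$ could sit inside $\intt(\gamma_v)$ without ever crossing the loop; ruling this out is exactly where the outer-basepoint hypothesis is first used, and it should be said. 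These two issues are fixable, but the central orientation argument \emph{is} the lemma, and it is missing. If you do not want to carry it out in full, the honest route is the one the paper takes: cite \cite[Theorem 5]{TT61}.
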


We conclude this section with a simple yet powerful lemma. 
 
\begin{lemma}[Sense-Preserving Homotopies]\label{lem:RSP_homotopy}
Let $H$ be a regular homotopy with $\gamma \homotopyarrow{H} \gamma'$.
\begin{enumerate}
\item If $H$ is right sense-preserving and $\gamma$ is \SO, then $\gamma'$ is \SO.\label{RSP:i}
\item If $H$ is left sense-preserving and $\gamma$ is not \SO, then $\gamma'$ is not \SO.\label{RSP:ii}
\end{enumerate}
\end{lemma}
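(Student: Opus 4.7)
The plan is to observe that \ref{RSP:ii} is essentially the contrapositive of \ref{RSP:i} applied to the reversed homotopy. Since the time reversal $\overline{H}$ of a regular homotopy swaps left and right sense-preservation, the contrapositive of \ref{RSP:i} applied to $\overline{H}\colon \gamma' \to \gamma$ reads: if $\gamma$ is not \SO, then $\gamma'$ is not \SO, under the hypothesis that $H$ is left sense-preserving. This is exactly \ref{RSP:ii}, so only \ref{RSP:i} requires proof.

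For \ref{RSP:i}, I would verify that $\gamma'$ satisfies the three defining properties of a $1$-interior boundary, which by \propref{ESO:iii} in \thmref{ESO} implies $\gamma'$ is \SO. Positive consistency of $\gamma'$ is immediate from \lemref{sense-preserving-monotone}: since $H$ is right sense-preserving, $wn(x, H(t,\cdot))$ is monotonically increasing in $t$, so $wn(x, \gamma') \geq wn(x, \gamma) \geq 0$ for all $x \in \R^2$. The Whitney index is unchanged because $H$ is regular---composed of $\text{II}$- and $\text{III}$-moves only, neither of which alters the rotation number---so $\whit{\gamma'} = \whit{\gamma} = 1$.

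The main step, and the key obstacle, is to show $\obs(\gamma') = 0$. Let $H_1$ be an optimal nullhomotopy of $\gamma$ with $A(H_1) = \sigma(\gamma) = W(\gamma)$, guaranteed by \thmref{mhd} (using \SO ness of $\gamma$). The concatenation $\overline{H} + H_1$ is a nullhomotopy of $\gamma'$; the goal is to show it has area exactly $W(\gamma')$, for then \lemref{minHomotopy_and_windingAreas} forces $\sigma(\gamma') = W(\gamma')$, i.e.\ $\obs(\gamma') = 0$. This reduces to the area identity $A(\overline{H}) = W(\gamma') - W(\gamma)$. To establish it, observe that $\overline{H}$ is left sense-preserving, so by \lemref{sense-preserving-monotone} the function $t \mapsto wn(x, \overline{H}(t, \cdot))$ decreases monotonically from $wn(x, \gamma')$ to $wn(x, \gamma)$ for generic $x$. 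Each connected component of $\overline{H}^{-1}(x)$ corresponds to one transverse pass of the sweeping curve through $x$, which by monotonicity decreases the winding number at $x$ by exactly one. Hence $E_{\overline{H}}(x) = wn(x, \gamma') - wn(x, \gamma)$ almost everywhere, and integrating (using positive consistency of both $\gamma$ and $\gamma'$) yields the identity. The most technical point is making the passage from ``monotone winding-number change'' to ``sweep count equals net change'' fully rigorous, by reducing to a generic setting in which $\overline{H}^{-1}(x)$ is a disjoint union of arcs each contributing exactly one to both the sweep count and the net monotone decrement; once this is in place, all three defining conditions of a $1$-interior boundary hold for $\gamma'$, and \thmref{ESO} concludes.
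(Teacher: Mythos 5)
Your proposal is correct and follows essentially the same route as the paper: part \ref{RSP:ii} as the contrapositive of part \ref{RSP:i}, and part \ref{RSP:i} by reversing $H$ and concatenating $\overline{H}$ with an optimal nullhomotopy of $\gamma$ to obtain a left sense-preserving nullhomotopy of $\gamma'$ of area $W(\gamma')$, then invoking \thmref{ESO}. The only difference is presentational: where you re-derive the sweep-count identity $E_{\overline{H}}(x)=wn(x,\gamma')-wn(x,\gamma)$ by hand, the paper simply cites \lemref{sensePreservingOptimal} to conclude optimality of the concatenated homotopy.
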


\begin{proof}
To prove \ref{RSP:i}, assume $\gamma$ is \SO. Then it has a left sense-preserving nullhomotopy
  $H'$ by \thmref{ESO}. Let us reverse our given
  homotopy $H$ to obtain $\overline{H}$ by $\overline{H}(s,t) =
  H(1-s,t)$. Then we note that the concatenation $H'' = \overline{H} +
  H'$ is a left sense-preserving nullhomotopy for~$\gamma'$. Since
  sense-preserving homotopies are optimal, $\sigma(\gamma') =
  W(\gamma')$. Also, as $H''$ is regular, $W(\gamma')=W(\gamma)=1$.
Applying \thmref{ESO} again, we conclude that $\gamma'$ is \SO.
Part \ref{RSP:ii} follows by contrapositive with a single application of \ref{RSP:i}: if $\gamma'$ were \SO\ then $\gamma$
must be \SO\ as well.
\qed
\end{proof}

\subsection{Zero Obstinance Curves}\label{sec:idealCurves}

In this section, we classify curves $\gamma\in\C$ with \myemph{zero obstinance},
$\obs(\gamma):=\sigma(\gamma)-W(\gamma)=0$. See Figures \ref{fig:smallerCurveLattice}
and \ref{fig:Ideal} for examples of zero-obstinance curves.
We show that just as interior boundaries
can be decomposed into \SO\ curves, so too can zero-obstinance curves be decomposed into
interior boundaries.

\begin{figure}[htbp]
\centering
\includegraphics[width=0.60\textwidth]{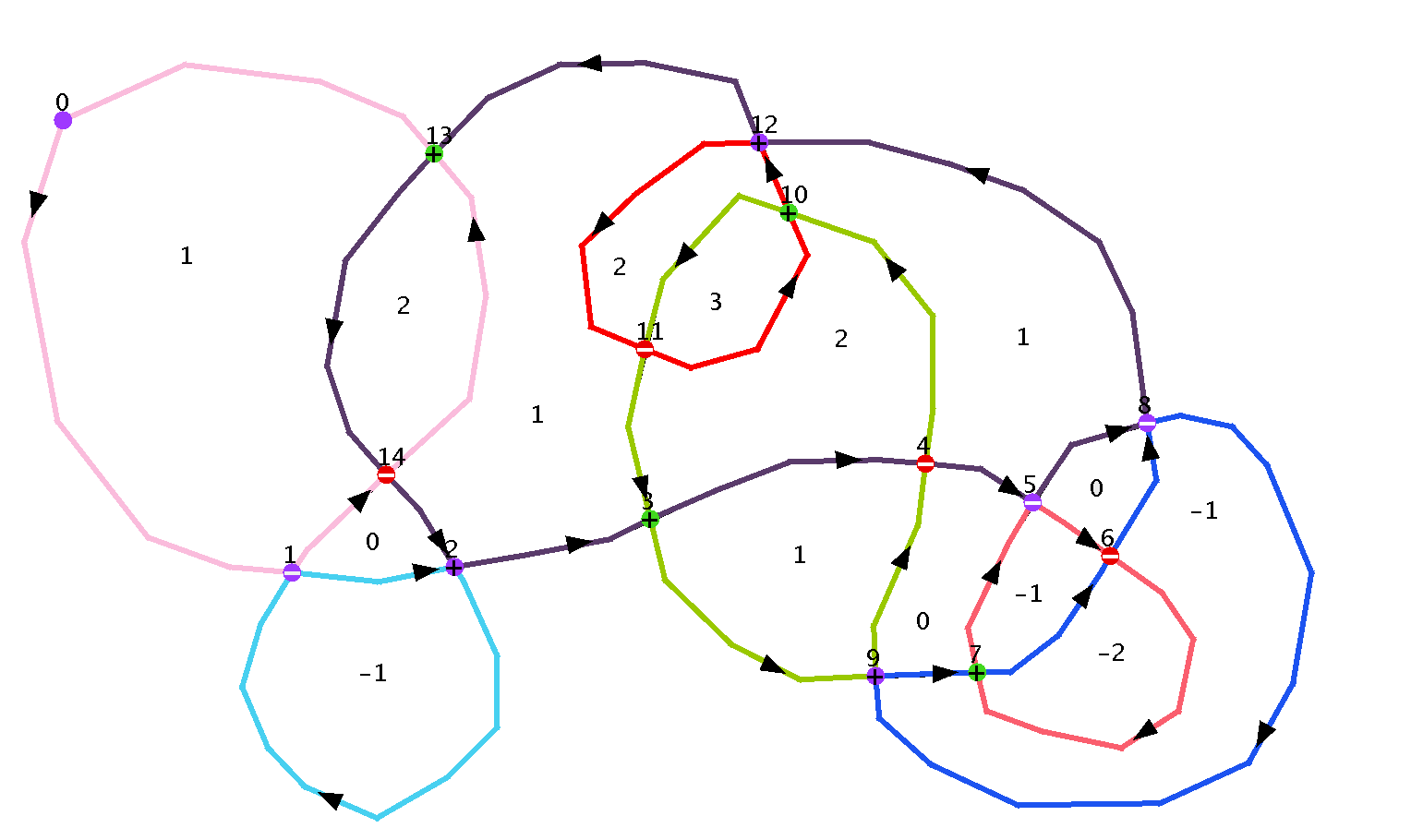}
\caption{A zero obstinance curve, with its minimum homotopy decomposition, and winding numbers shown. Each curve in the decomposition is \SO\ and shown in a different color. The vertices with labels $1,2,5,8,9$ are sign-changing. 
}
\label{fig:Ideal}
\end{figure}

If a curve $\gamma$ has zero obstinance, then there is a nullhomotopy
$H$ which sweeps each face $F$ on $G(\gamma)$ exactly $wn(F, \gamma)$
times. Note that such a homotopy $H$ is necessarily minimal by Lemma \ref{lem:minHomotopy_and_windingAreas}. Intuitively, this implies that the homotopy
$H$ should be locally sense-preserving. We expect it to sweep leftwards on
positive consistent regions and rightwards on negative consistent
regions. Hence, we might expect regions of the curve where the winding numbers change from positive to negative to be especially
problematic.
Indeed, let $v \in V(\gamma)$ be incident to the faces $\{F_1, F_2, F_3,
F_4\}$. We call $v$ \myemph{sign-changing} when, as a multiset,
$\{wn(\gamma, F_1), wn(\gamma, F_2), wn(\gamma, F_3), wn(\gamma,
F_4) \} = \{-1, 0, 0, 1\}$; see Figures \ref{fig:Ideal} and \ref{fig:SignChanging}.
%
%
\begin{figure}[ht]
	\centering
	\includegraphics[width=0.22\textwidth]{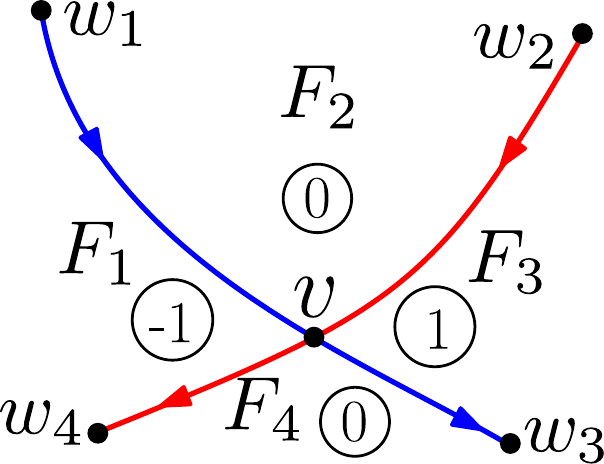}
	\caption{A sign-changing vertex $v$. The winding numbers of the faces incident to $v$, are up to cyclic reordering, -1, 0, 1, 0.}
	\label{fig:SignChanging}
\end{figure}

\begin{theorem}[Zero Obstinance Characterization]\label{thm:Ideal}
Let $\gamma \in \C$ and let $\mathscr{S}$ be the sign-changing
    vertices of $\gamma$. Then $\obs(\gamma)=0$ iff no two vertices in
    $\mathscr{S}$ are linked and any direct split subcurve decomposition
    $\Omega$ with vertex set $V(\Omega) = \mathscr{S} \cup \{p_0\}$
    contains only interior~boundaries.
\end{theorem}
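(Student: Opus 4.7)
Both directions hinge on the identity derived from \obsref{wnDecomp} and the triangle inequality: for any direct split decomposition $\Omega=(\gamma_i)_{i=1}^k$ of $\gamma$,
\[
  W(\gamma) \;=\; \sum_{F} A(F)\,\Bigl|\sum_{i} wn(F,\gamma_i)\Bigr| \;\le\; \sum_i W(\gamma_i),
\]
with equality iff $\Omega$ is \emph{face-consistent}, meaning for each face $F$ of $G(\gamma)$ the nonzero values $wn(F,\gamma_i)$ share a common sign. Combined with $\sigma(\gamma)\le\sum_i\sigma(\gamma_i)$, obtained by concatenating minimum nullhomotopies of the $\gamma_i$, and with \lemref{minHomotopy_and_windingAreas}, any face-consistent decomposition into interior boundaries forces $\obs(\gamma)=0$, since $W(\gamma)\le\sigma(\gamma)\le\sum_iW(\gamma_i)=W(\gamma)$.

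For direction $(\Leftarrow)$, assume $\mathscr{S}$ is non-linked and the decomposition $\Omega$ at $V(\Omega)=\mathscr{S}\cup\{p_0\}$ has interior-boundary pieces. It remains to prove face-consistency of $\Omega$. The key geometric claim is that two faces $F,F'$ of $\gamma$ with $wn(F,\gamma)>0>wn(F',\gamma)$ can neither share an edge of $\gamma$ (adjacent faces differ by exactly $\pm 1$) nor share a non-sign-changing vertex (where the four incident winding numbers follow the monotone pattern $\{a-1,a,a,a+1\}$ with $|a|\ge 1$); their only possible common incidence is at a vertex of $\mathscr{S}$. Since $\Omega$ cuts at every such vertex, no single $\gamma_i$ can bound faces on both the positive- and negative-winding sides of $\gamma$, so each $\gamma_i$ is a consistent interior boundary whose orientation matches $\sgn(wn(\cdot,\gamma))$ on its interior faces. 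This yields face-consistency of $\Omega$ and hence $\obs(\gamma)=0$.

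For direction $(\Rightarrow)$, assume $\obs(\gamma)=0$. By \thmref{mhd}, $\gamma$ admits a \SO\ decomposition $\Omega^*=(\gamma_i^*)$ with $\sigma(\gamma)=\sum_iW(\gamma_i^*)=W(\gamma)$, forcing face-consistency of $\Omega^*$. Merging maximal connected subtrees of $T_{\Omega^*}$ whose \SO\ pieces share a common orientation yields, via \thmref{EIB} property \ref{EIB:iii}, a coarser decomposition $\Omega^{**}$ whose pieces are interior boundaries of strictly alternating orientation along parent-child edges of $T_{\Omega^{**}}$. At each $v\in V(\Omega^{**})\setminus\{p_0\}$, face-consistency simultaneously rules out (i) overlap between the child's and parent's interiors near $v$ and (ii) any nonzero ancestor contribution to the four incident faces, forcing the pattern $\{-1,0,0,1\}$, so $v\in\mathscr{S}$. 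Conversely, a sign-changing vertex interior to some $\gamma_j^{**}\in\Omega^{**}$ would require its local winding pattern to include $-1$, but positive (or negative) consistency of $\gamma_j^{**}$ together with same-sign ancestor contributions (by face-consistency) makes this impossible. Hence $V(\Omega^{**})=\mathscr{S}\cup\{p_0\}$, and by uniqueness of the decomposition equivalence class at a fixed vertex set, $\Omega^{**}$ coincides with any decomposition $\Omega$ at $\mathscr{S}\cup\{p_0\}$ up to ordering, so every such $\Omega$ has interior-boundary pieces. Non-linkedness of $\mathscr{S}$ follows since these are basepoints of a well-defined direct-split decomposition.

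\textbf{Main obstacle.} The hardest step is the extraction in $(\Rightarrow)$ of the $\{-1,0,0,1\}$ winding pattern at orientation-change vertices of $\Omega^{**}$: face-consistency must be wielded twice at the same vertex, once to force disjoint local interiors of child and parent and once to force vanishing ancestor contributions to all four incident faces. Its dual in $(\Leftarrow)$, the global consistency of each $\gamma_i\in\Omega$, requires relating the local winding analysis at every vertex of $\gamma$ to the global partition of $\gamma$ into positive and negative regions carved by $\mathscr{S}$.
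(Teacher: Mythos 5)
Your overall architecture is sound and runs parallel to the paper's: both directions reduce to the ``face-consistency'' identity $W(\gamma)=\sum_i W(\gamma_i)$ (which the paper in fact asserts without comment in its $(\Leftarrow)$ direction), the forward direction starts from the optimal \SO\ decomposition of \thmref{mhd}, and the identification of the decomposition's basepoints with $\mathscr{S}$ rests on a local winding-number analysis at vertices. Where the paper shows directly that every sign-changing vertex is an anchor point of the minimum homotopy (a two-case analysis on the orientation of the piece carrying a chosen edge at $v$) and then passes to subhomotopies, you merge $T_{\Omega^*}$ into maximal same-orientation blocks; that reorganization is legitimate. However, there are two concrete gaps. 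First, in $(\Leftarrow)$ the step ``since $\Omega$ cuts at every sign-changing vertex, no single $\gamma_i$ can bound faces on both sides'' does not follow. The pieces are not disjoint from $\mathscr{S}$: each $v\in\mathscr{S}$ lies on the image of both the piece based at $v$ and its parent, and since the four faces at $v$ read $-1,0,1,0$ cyclically, a piece's two edges at $v$ can be incident to faces of both signs (precisely when the wedge between those two edges is one of the $0$-faces). Moreover, face-consistency is a statement about interiors, i.e.\ about where $wn(F,\gamma_i)\neq 0$, not about which faces $[\gamma_i]$ touches; deducing that $\intt(\gamma_i)$ lies in one sign region of $\gamma$ from consistency of $\gamma_i$ is essentially what needs to be proven, so as written the argument is close to circular.

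Second, in $(\Rightarrow)$ your proof that $\mathscr{S}\subseteq V(\Omega^{**})$ only treats a sign-changing vertex that is a \emph{self}-crossing of a single piece $\gamma_j^{**}$ (where the pattern $\{b-1,b,b,b+1\}$ with $b\ge 1$ for a positive consistent piece forces three faces of winding number at least $1$, contradicting $\{-1,0,0,1\}$). But a sign-changing vertex $u\notin V(\Omega^{**})$ may instead be a crossing between two distinct pieces, which happens exactly when $u$ links some basepoint of $\Omega^{**}$ --- and this is precisely the configuration that threatens the ``no two vertices of $\mathscr{S}$ are linked'' half of the conclusion, so it cannot be skipped. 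The fix is available with your own technique: a single strand of a consistent piece passing through $u$ other than as a basepoint contributes $\pm 1$ to the two faces on one side of that strand, so by face-consistency at least two of the four faces at $u$ satisfy $|wn(\cdot,\gamma)|\ge 1$, again contradicting $\{-1,0,0,1\}$; but this case must be stated for the linkedness claim to be established. With these two repairs the proposal matches the strength of the paper's argument.
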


\begin{proof}
Suppose $\obs(\gamma) = 0$. 
By definition, any zero obstinance curve with consistent winding numbers
must be an interior boundary and $\mathscr{S}=\emptyset$. Hence,
suppose $\gamma$ is inconsistent so that $\mathscr{S} \neq \emptyset$. We claim any sign-changing
vertex $v$ is anchor point of every well-behaved minimum homotopy $H$ of $\gamma$ of the form guaranteed
by \thmref{mhd}. 
Let us now proceed by contradiction. Suppose $v \in V(\gamma)$ is a sign-changing vertex with incident 
faces labeled as in \figref{SignChanging} such that $v$ is not an anchor point of a minimum
homotopy $H$ for $\gamma$. Write $\Gamma(H)$ as the \SO\ decomposition of $H$. 
As $\gamma$ has $\obs(\gamma) = 0$, we know that $W(\gamma) = \sigma(\gamma) = A(H)$. In particular,
the homotopy $H$ sweeps each face $F \in G(\gamma)$ precisely $wn(\gamma, F)$ times. Of course,
since our homotopy $H$ consists of a sequence of nullhomotopies of \SO\ subcurves, this means each face 
$F$ must lie in the interior of $wn(\gamma, F)$ distinct \SO\ subcurves $C \in \Gamma(H)$. 
In particular, if either face $F_2, F_4$ incident to $v$ is contained in the interior of any curve $C \in \Gamma(H)$,
we have a contradiction. Let us now examine the edge $e = (v, w_3)$. This edge
must lay on precisely one subcurve $C \in \Gamma(H)$ by our definition of a direct split subcurve decomposition. 
We now have two cases.

\textbf{Case 1}: $C$ is positively \SO. We now recall that for a positive \SO\ curve, the interior of the curve always 
lies, locally at each edge, to the left. Since $v$ is not an anchor point of $H$, it must be the case that $C$ also contains the
edge $e_1 = (w_1, e)$. As the face $F_2$ lies to the left of $e_1$, this implies $F_2 \subset \intt{\; C}$. 

\textbf{Case 2}: $C$ is negative \SO. Here, we use that the interior of a negative \SO\ curve lies locally to the right. 
In this case, we see that $F_4$, lying to the right of edge $e_1$, satisfies $F_4 \subset \intt{ \; C}$. 

We conclude that all 
sign-changing vertices are anchor points of $H$. This is only possible
if none of the sign-changing vertices link each other. Now, let $\Theta = (\gamma_i)_{i=1}^k $ be any direct split subcurve
decomposition with $V(\Theta) = \mathscr{S} \cup \{p_0\}$. We claim that each curve $\gamma_i \in \Theta$
is an interior boundary. It suffices to prove this claim for any decomposition $\Omega \sim \Theta$. Hence, we may 
select the unique representative $\Omega$ from the equivalence class of $\Theta$ such that the ordering of $\Omega$ is compatible
with the ordering of $\Gamma(\gamma)$. 
Since each sign-changing vertex is an anchor point of $H$, it follows that $\Gamma(H)$ is a refinement of the decomposition $\Omega$.
Thus, by \thmref{mhd}, there is a subhomotopy $H_i$ of $H$ which is a nullhomotopy of $\gamma_i$. As subhomotopies
of a minimum homotopy $H$, each $H_i$ must be minimum as well, $A(H_i) = \sigma(\gamma_i)$. Observe that
$$ \sum_{i=1}^k W(\gamma_i) = W(\gamma) = A(H) = \sum_{i=1}^k A(H_i) = \sum_{i=1}^k \sigma(\gamma_i).$$
So, we must have equality, $\sigma(\gamma_i) = W(\gamma_i)$, for every curve in the decomposition. By Lemma \ref{lem:minHomotopy_and_windingAreas},
this means each $\gamma_i \in \Omega$ has $\obs(\gamma_i) = 0$. Hence, if each $\gamma_i$ is consistent, they are all interior 
boundaries, by definition. Of course, if some $\gamma_i$ were inconsistent, then the winding numbers would change somewhere along the curve. Wherever 
the winding numbers of $\gamma_i$ change, we will see a sign-changing vertex $u \in V(\gamma_i)$. But since $u$ is not the basepoint of $\gamma_i$, this is a contradiction. Indeed,
by the fact that $\mathscr{S} \cup \{p_0\} = V(\Omega)$, 
no sign-changing vertex can be a crossing point on a curve $\gamma_i \in \Omega$. 

Conversely, suppose no sign-changing vertices link each other and that each decomposition $\Omega = (\gamma_i)_{i=1}^k$ of $\gamma$ with vertex set $V(\Omega) = \mathscr{S} \cup \{p_0\}$ contains only interior boundaries. Then let $H_{\Omega}$ be the homotopy associated to $\Omega$. Thus, we have
$W(\gamma) = \sum_{i=1}^k W(\gamma_i) = \sum_{i=1}^k \sigma(\gamma_i) = A(H)$
We conclude $H$ is optimal and $\sigma(\gamma) = W(\gamma)$. Thus, $\obs(\gamma)=0$.
\qed
\end{proof}

\section{Wraps and Irreducability}\label{sec:wraps}
In this section, we show (Theorems \ref{thm:WRB} and \ref{thm:WRB2}) that wrapping around a curve $\gamma$ until its
obstinance is reduced to zero results in an interior boundary. This
key result is used to prove sufficient
combinatorial conditions for a curve to be \SO\ based on the Whitney
index of the curve and its direct splits (Theorems \ref{thm:ISO} and \ref{thm:THSO}).

\subsection{Wraps}
Let us now define the construction of the wrap of a curve.
Let $\gamma \in \C$, and let $I$ be its
signed intersection sequence. Form $I'$ by incrementing each
label by one and removing the occurrences of $0$ corresponding to the basepoint.
If $\gamma$ has a positive outer basepoint $\gamma(0)$, then its \myemph{(positive) wrap} $\mathemph{Wr_+(\gamma)}$ is the unique (class) of curves with signed
intersection sequence  $0,1_+, I', 1_-, 0$.
This corresponds to gluing a simple positively oriented curve $\alpha$
to $\gamma$ at $\gamma(0)$, where
the interior $\text{int}(\alpha)\supseteq [\gamma]$;
the new basepoint $p_0 = Wr_+(\gamma)(0)$ is on $\alpha$. See \figref{wrap}. The \myemph{negative wrap}
$\mathemph{Wr_-(\gamma)}$ is
defined analogously if $\gamma$ has a negative outer basepoint.
We write
$Wr_+^k(\gamma)$ for the curve achieved from $\gamma$ by wrapping $k$ times.

\begin{figure}[htbp]
    \centering
    \includegraphics[height = .32\textwidth]{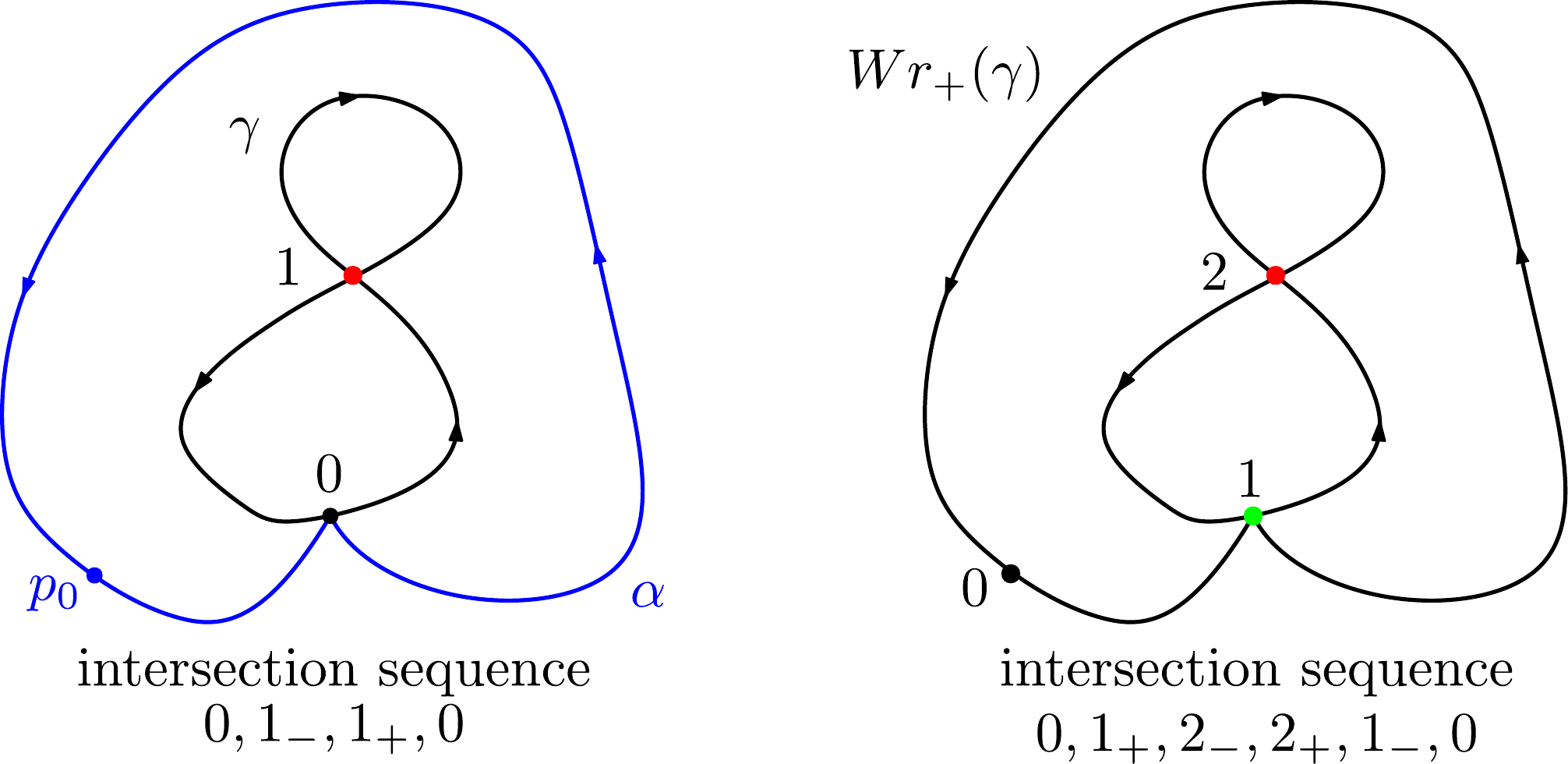} 
    \caption{Left: A curve $\gamma$ (shown in black) with positive outer basepoint, and the curve $\alpha$ (shown in blue).
    Right: The \emph{positive} wrap $Wr_+(\gamma)$. While $\gamma$ is not \SO, its wrap $\Wr_+(\gamma)$ is \SO.}
    \label{fig:wrap}
\end{figure}

To wrap a curve in the direction opposed to the sign of the basepoint, we must
be more careful. Without loss of generality, we describe the construction of $Wr_{-}(\gamma)$ when
$\gamma$ has a positive outer basepoint.
Perform a $\text{I}_b$-move to add a simple loop $\tilde{\gamma}$ of the opposite orientation tangent to the basepoint $\gamma(0)$. Let $\gamma'$
be the curve after the $\text{I}_b$-move, with a basepoint  chosen to lie on $\tilde{\gamma}$.
We then define $Wr_-(\gamma) = Wr_+(\gamma')$. See \figref{wrapOppo}.


\begin{figure}[htbp]
    \centering
    \includegraphics[height = .388\textwidth]{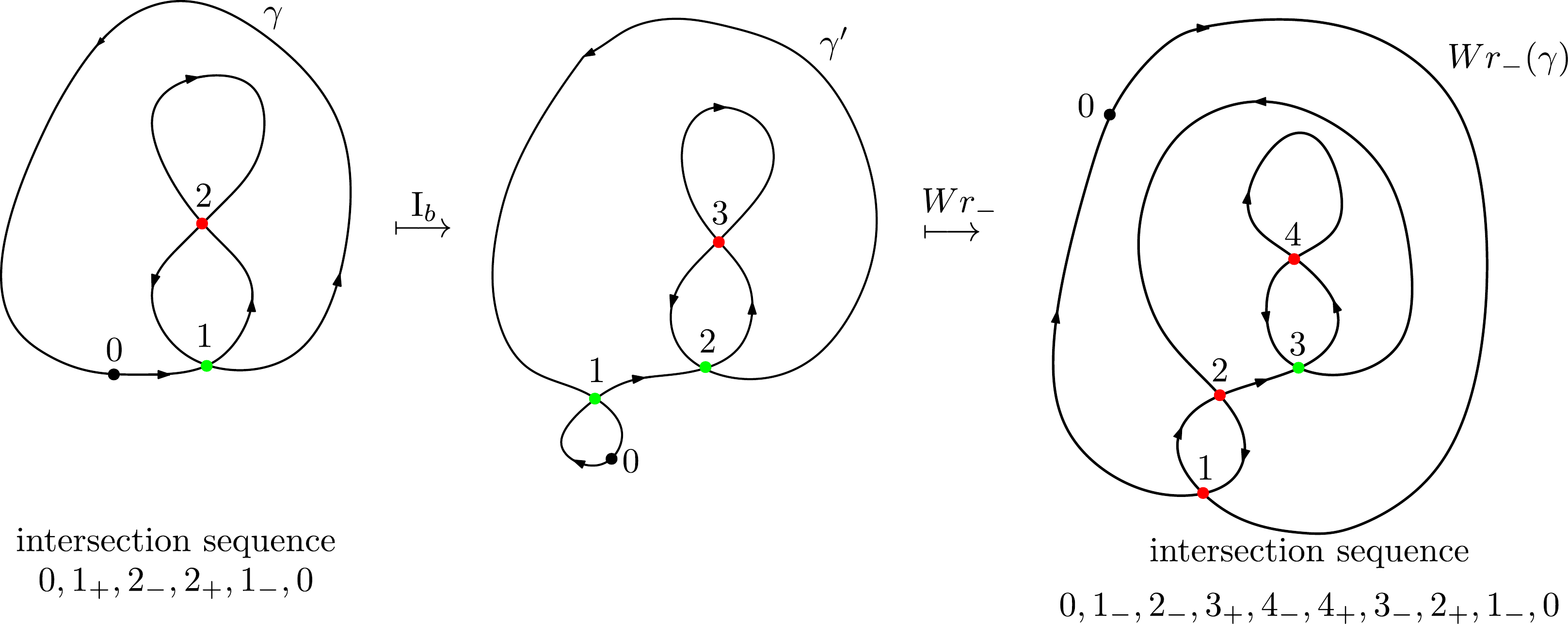}
    \caption{A curve $\gamma$ with positive outer basepoint and
its transformation into its \emph{negative} wrap $Wr_{-}(\gamma)$. First, we perform a $\text{I}_b$-move and then
wrap normally on $\gamma'$.}
    \label{fig:wrapOppo}
\end{figure}

Clearly one can always wrap any curve $\gamma\in\C$
a sufficient number of times to make $Wr_+^k(\gamma)$ positive
consistent. Indeed, setting $k$ to be the maximum depth across all
faces in $G(\gamma)$ suffices.
%
%
On the other hand, it is not at all obvious that wrapping
always turns a curve into an interior boundary.
We prove in \thmref{WRB}  that,
in fact, positively wrapping always transforms a curve $\gamma\in\C$
with positive outer basepoint into a positive interior boundary. Thus
one can think of wrapping as a rectifying operation with respect to
minimum homotopy, as it always eventually removes all obstinance.

\subsection{Simple Path Decompositions}\label{sec:SPD}

We now describe another type of decomposition for $\gamma \in \C$ that
we will need for proving \thmref{WRB2}.
First we prove a simple lemma which states
that a curve $\gamma\in\C$ with an outer basepoint has an outwards loop.

\begin{lemma}[Existence of an Outwards Loop]\label{lem:OL}
Let $\gamma \in \C$ have an outer basepoint. Then if $\gamma$ is non-simple, it has an outwards loop.
\end{lemma}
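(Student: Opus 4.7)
My plan is to proceed by induction on the complexity $|\gamma|$, exploiting the key observation that a loop vertex $v$ is a leaf of the containment partial order on $V(\gamma)$, so no vertex of $\gamma$ lies strictly inside $\mathrm{int}(\gamma_v)$. For the base case $|\gamma|=1$, the unique crossing $v$ gives a loop $\gamma_v$; the two tail edges at $v$ cannot meet $[\gamma_v]$ away from $v$ without producing an additional vertex of $\gamma$, so they lie in a single component of $\mathbb{R}^2 \setminus [\gamma_v]$. Since $p_0 \in \overline{F_{ext}}$ lies in the unbounded component, this common component must be the exterior of $\gamma_v$, making $\gamma_v$ outwards.

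For the inductive step, I would first dispose of the easy sub-case where $\gamma$ admits an empty loop $\gamma_v$: then $[\gamma]\setminus[\gamma_v]$ meets $[\gamma_v]$ only at $v$ and is connected (the two arcs of the indirect split are joined at $p_0$), and since it contains $p_0\in\overline{F_{ext}}$ it lies entirely in the exterior of $\gamma_v$, so the tails at $v$ point outside and $\gamma_v$ is outwards. Otherwise every loop of $\gamma$ has a linking vertex, and I would select a loop $\gamma_v$ whose interior $\mathrm{int}(\gamma_v)$ has minimum area among all loops of $\gamma$, then seek a contradiction under the assumption that $\gamma_v$ is inwards. By the leaf property of $v$, the portions of $[\gamma]\setminus[\gamma_v]$ lying in $\overline{\mathrm{int}(\gamma_v)}$ form pairwise non-crossing simple chords whose endpoints are either $v$ itself (both $e_1$ and $e_4$, under the inwards hypothesis) or linking vertices of $v$. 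Using planarity of this chord arrangement I would select an innermost chord, which together with an arc of $[\gamma_v]$ bounds a strict sub-disk $D\subsetneq\overline{\mathrm{int}(\gamma_v)}$; I would then use the signed intersection sequence on $[t_v, t_v^*]$ to locate a vertex $w'$ on $\partial D$ whose direct split $\gamma_{w'}$ is a loop of $\gamma$ with $\mathrm{int}(\gamma_{w'})\subsetneq\mathrm{int}(\gamma_v)$, contradicting minimality.

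The main obstacle will be this final extraction step: the natural candidate $\partial D$ is a simple closed curve in $\mathbb{R}^2$ but is not in general traced by a single continuous subcurve of $\gamma$, since $\partial D$ switches between pieces of $[\gamma_v]$ and $[\gamma\setminus\gamma_v]$ at its corner vertices (and $\gamma$ never switches strands at a crossing). Producing a genuine direct split $\gamma_{w'}$ rather than just an arbitrary closed subimage of $[\gamma]$ will require combining the chord geometry with the parameterization of $\gamma$, via a case analysis on how the linking vertices of $v$ are interleaved with $v$'s two visits within $[t_v, t_v^*]$, to pick a corner vertex $w'$ whose direct-split interval of $\gamma$ actually parameterizes a simple subloop inside $D$.
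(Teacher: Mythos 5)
Your argument does not go through as written, and the central problem is the ``key observation'' you lean on throughout: that a loop vertex $v$ being a leaf of the containment order implies no vertex of $\gamma$ lies strictly inside $\mathrm{int}(\gamma_v)$. Leafness in the $\subset$ order only says that no vertex $w$ has \emph{both} of its visits inside the parameter interval $[t_v,t_v^*]$ (otherwise $\gamma_v$ would not be simple); it says nothing about where vertices sit geometrically. A strand of $\gamma\setminus\gamma_v$ can enter the Jordan domain $\mathrm{int}(\gamma_v)$ through a linking vertex, cross another strand (or itself) inside the disk, and exit again; the resulting vertex lies strictly inside $\mathrm{int}(\gamma_v)$ while being $\separate$ from or linked with $v$. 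Consequently your ``pairwise non-crossing simple chords'' claim is false in general, the innermost-chord sub-disk $D$ need not exist in the form you describe, and the extraction of a genuine direct split from $\partial D$ --- which you yourself flag as the main unresolved obstacle --- is exactly where the proof is missing. (Your base case and the empty-loop sub-case are fine, but note that your ``induction'' never actually invokes the inductive hypothesis; it is really a direct extremal argument.)

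For comparison, the paper's proof is a four-line argument with no extremal machinery: take $v$ to be the \emph{first} self-intersection of $\gamma$, i.e.\ the vertex whose second visit time $t_v^*$ is smallest. Then $\gamma_v$ is a loop, and if it were inwards, the initial arc $\gamma|_{[0,t_v]}$ --- which starts at the outer basepoint, hence outside $\mathrm{int}(\gamma_v)$, and ends along an edge lying inside $\mathrm{int}(\gamma_v)$ --- would have to cross $[\gamma_v]$ at some point $q$; that $q$ is then revisited at a time in $(t_v,t_v^*)$, contradicting the minimality of $t_v^*$. If you want to salvage your minimal-area route, the fix is not innermost chords but the first return: let $s_1>t_v^*$ be the first time $\gamma$ returns to $[\gamma_v]$ (it must, since the basepoint is outside). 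If $\gamma|_{(t_v^*,s_1)}$ self-intersects, its first self-crossing yields a loop whose image, hence interior, lies strictly inside $\mathrm{int}(\gamma_v)$; if it is simple, it concatenates with the sub-arc $\gamma|_{[r_1,t_v^*]}$ of $[\gamma_v]$ (where $\gamma(r_1)=\gamma(s_1)$) to form a genuine direct split that is a loop of strictly smaller interior area. Either way minimality is contradicted --- but this is considerably more work than the paper's choice of extremal loop.
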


\begin{proof}
Let $v$ be the first self-intersection of $\gamma$. Then $\gamma_v$ is a loop.
Write $\gamma^{-1}(v) = \{ t, t^*\}$, where $t < t_*$. 
Since $ \gamma(0)$ lies outside of $\intt(\gamma_v)$, as an outer basepoint,
we note that if $\gamma_v$ were inwards, the path
$P = \gamma_{[0,t_1]}$ would cross $[\gamma_v]$ to get from outside
the simple curve
to inside it.  This is then a contradiction, for if the crossing
occurred at a point $q$ on $[\gamma_v]$, then $q$ would be the first
self-intersection of $\gamma$. Indeed, we would reach $q$ a second time
before we reach $v$ a second time. Thus, $\gamma_v$ is outwards.
\qed
\end{proof}

Now, let $t_1<t_1^* \in [0,1]$ be the smallest value so that
$\gamma_1=\gamma_{[t_1, t_1^*]}$ is a loop. We call this
the \myemph{first loop} of $\gamma$.
Since $t_1^*$ is the first time
that $\gamma$ self-intersects, we know that such a loop
is outwards by the argument in \lemref{OL}.
The sub-intersection sequence from $r= \gamma(0)$
until the second occurrence of $p_1=\gamma(t_1)= \gamma(t_1^*)$ is of the following form:
$r \cdots p_1 \cdots p_1$, where no intersection point occurs twice
before the second occurrence of $p_1$.
Let $P_1$ be the path from $r$ until the first occurrence of $p_1$, i.e., 
$\gamma_{[0, t_1^*]} = P_1 * \gamma_1$.
We continue by recursively defining 
$$t_i^* = \sup \{t \in [0,1] \; | \;
t>t_{i-1}^* \text{ and } \gamma_{[t_{i-1}^*, t]} \text{ is injective}\},$$
for $i>1$.
Then $\gamma(t_i^*) = p_i$ must be a
basepoint of a loop and there exists a $t_i<t_i^*$ such that
$p_i=\gamma(t_i)=\gamma(t_i^*)$. We define $P_i=\gamma_{[t_{i-1}^*,t_i]}$
and $\gamma_i=\gamma_{[t_i,t^*_i]}$.
Let $k$ be the largest value such that $t^*_k<1$, 
and set $P_{k+1}=\gamma_{[t_k^*,1]}$.
Then $\gamma = P_1 * \gamma_1 * \cdots * P_k
*\gamma_k * P_{k+1}$, where $*$ denotes concatenation.
The \myemph{simple path decomposition} of $\gamma$ is the
sequence $(P_i,\gamma_i)_{i=1}^k$; see \figref{SPD}.

\begin{figure}[ht]
\center
\includegraphics[height=0.31\textwidth]{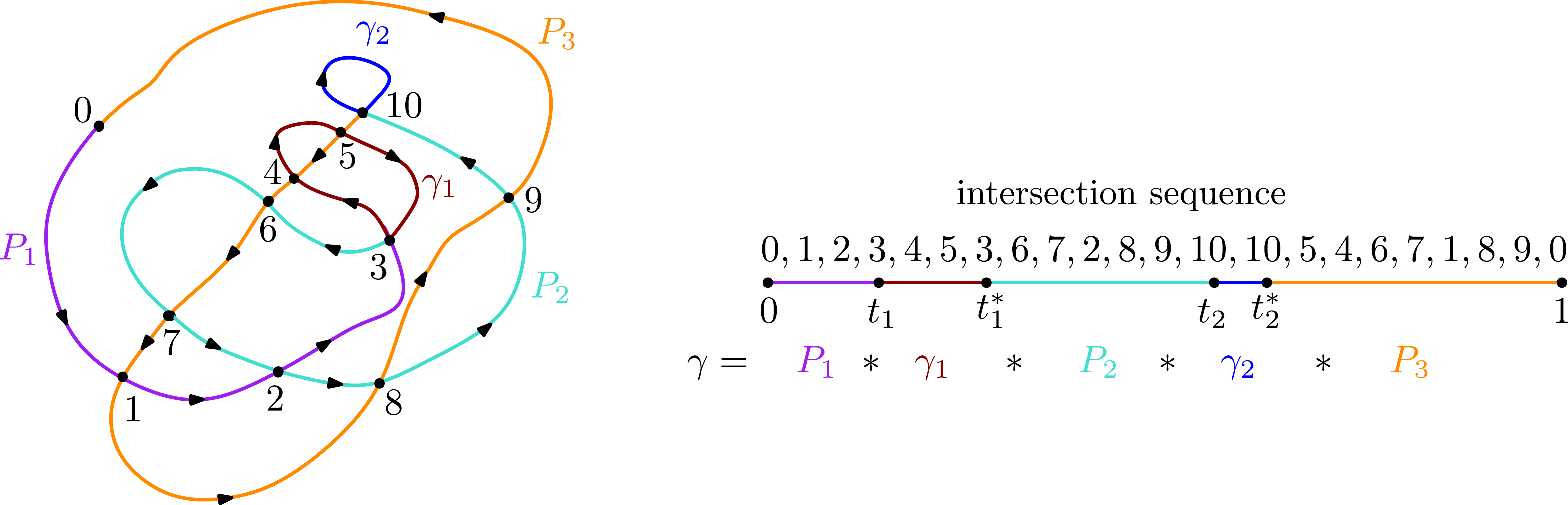}
\caption{A strongly irreducible curve with its simple path decomposition shown.
}
\label{fig:SPD}
\end{figure}

\subsection{Wrapping Resolves Obstinance}
\label{sec:wrappingResolvesObstinance}
We are now equipped to prove our second main result on wraps.


\begin{theorem}[Wrapping Resolves Obstinance]\label{thm:WRB}
Let $\gamma \in \C$ have positive outer basepoint. Then there is a positive integer $k$
so that $\obs(\Wr_+^k(\gamma)) = 0$. Moreover, $\Wr_+^k(\gamma)$ is a positive interior boundary.
\end{theorem}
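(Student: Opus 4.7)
The plan is to verify, for sufficiently large $k$, the three characterizing conditions of a positive interior boundary from \thmref{EIB}: positive consistency, positive Whitney index, and zero obstinance. Once these three conditions are established, the ``moreover'' clause of the theorem follows by definition.

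I would first address positive consistency and positive Whitney index, which are straightforward. Each application of $\Wr_+$ adds a simple positively oriented loop $\alpha$ whose interior contains $[\gamma]$; this increases by $+1$ the winding number of every face inside $\alpha$ and introduces a new annular face with winding number $1$. Iterating, after $k$ wraps every face $F$ inherited from $\gamma$ satisfies $wn(F,\Wr_+^k(\gamma)) = wn(F,\gamma) + k$, and the $k$ new annular faces carry winding numbers $1,2,\dots,k$. Choosing $k \geq K_1 := \max(0,-\min_F wn(F,\gamma))$ yields positive consistency. By \lemref{WHSSD} applied to the natural decomposition of $\Wr_+^k(\gamma)$ into $\gamma$ plus $k$ simple positive loops, $\whit{\Wr_+^k(\gamma)} = \whit{\gamma}+k$, which is positive for $k \geq K_2 := \max(0, 1-\whit{\gamma})$. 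Taking $k \geq \max(K_1,K_2)$ handles both.

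The heart of the proof is showing $\obs(\Wr_+^k(\gamma)) = 0$ for $k$ sufficiently large. My approach is to invoke \thmref{EIB}(\ref{EIB:iv}) and construct a well-behaved left sense-preserving nullhomotopy $H$ of $\Wr_+^k(\gamma)$ with exactly $\whit{\Wr_+^k(\gamma)}$ many $\text{I}_a$-moves. The construction peels off the wraps from the outside inward: the outermost wrap $\alpha_k$ is a simple positive loop whose interior lies to its left, so pushing it inward via a sequence of $\text{II}_b$ and $\text{III}$ moves ending with a single $\text{I}_a$-move is left sense-preserving by \lemref{sense-preserving-monotone}. Iterating inward for $\alpha_{k-1},\dots,\alpha_1$ removes all wraps, contributing one $\text{I}_a$-move each, after which the residual is (homotopic to) $\gamma$ and is completed with $\whit{\gamma}$ additional $\text{I}_a$-moves.

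The main obstacle is orchestrating the wrap contractions so that the final contraction of $\gamma$ also remains left sense-preserving, especially when $\gamma$'s optimal \SO{} decomposition (guaranteed by \thmref{mhd}) contains negatively oriented pieces. The key insight I would exploit is that for $k$ sufficiently large, every face of $\Wr_+^k(\gamma)$ carries a strictly positive winding number, providing enough ``positive buffer'' that the nullhomotopy can always proceed leftward, including across regions where $\gamma$ alone would have required a right sense-preserving move. I expect this step to require careful combinatorial control over how the wraps interact with the direct-split structure inherited from $\gamma$. Once the well-behaved left sense-preserving nullhomotopy with the correct $\text{I}_a$-count is produced, \thmref{EIB} immediately yields that $\Wr_+^k(\gamma)$ is a positive interior boundary, and in particular $\obs(\Wr_+^k(\gamma)) = 0$.
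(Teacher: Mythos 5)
Your first two steps (positive consistency and Whitney index for large $k$) are fine, but the heart of your argument has a genuine gap. Your construction peels the wraps off from the outside in, spending one $\text{I}_a$-move per wrap, and then says the residual curve is (homotopic to) $\gamma$, to be finished off by a left sense-preserving nullhomotopy of $\gamma$ itself. By \thmref{EIB}\ref{EIB:iv}, such a nullhomotopy of $\gamma$ exists if and only if $\gamma$ is already a positive interior boundary --- which is exactly the hypothesis the theorem does not make. So your plan proves the statement only in the trivial case. The ``positive buffer'' you invoke to repair this is precisely the intuition the paper warns against: positive consistency (even strictly positive winding numbers everywhere) does not imply zero obstinance. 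The curve $\gamma_v$/$\gamma_C$ in Figures \ref{fig:smallerCurveLattice} and \ref{fig:CurveClasses} is positive consistent with $\whit{\gamma}=1$ yet not \SO; wrapping it once more makes all winding numbers $\geq 2$ without, by itself, telling you anything about $\sigma$. Large winding numbers cannot license a leftward sweep across a negatively oriented loop of $\gamma$: contracting such a loop leftward \emph{increases} the area swept beyond $W$, and \lemref{sensePreservingOptimal} only applies to consistent curves, with sense fixed by orientation, not by a ``buffer.'' There is also a counting problem: if $\whit{\gamma}\leq 0$ your residual would need a non-positive number of $\text{I}_a$-moves.

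The missing idea in the paper's proof is that the wraps must be \emph{spent}, not contracted independently: each wrap is paired with a negatively oriented loop of the current curve via \emph{balanced loop deletion}. Concretely, one takes the first loop in the simple path decomposition; if it is negatively oriented, the concatenation of (part of) the outer wrap, the simple path to that loop's basepoint, and (part of) the negative loop bounds a simple positively oriented closed region, so a single Blank cut --- a left sense-preserving \emph{regular} homotopy with no $\text{I}_a$-move --- deletes the wrap and the negative loop simultaneously. Iterating (contracting positive first loops directly, cancelling negative ones against wraps) and observing that deleting these free subcurves never changes the signs of the remaining vertices shows that $k=l+1$ wraps suffice, where $l$ is the number of negative vertices of $\gamma$; the final remaining curve is a leftover wrap, contracted by a last $\text{I}_a$-move. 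Without this pairing mechanism, your outer-to-inner peeling reduces the problem back to $\gamma$ and the argument cannot close.
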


\begin{proof}
Let $l$ be the number of negative vertices in $V(\gamma)$. Set $k = l+1$.
We claim that $\Wr_+^k(\gamma)$ is an interior boundary. We will show this by iteratively
constructing a left sense-preserving nullhomotopy $H$ for $\gamma$. By \propref{EIB:iv} of \thmref{EIB} it then follows that $\gamma$ is a positive interior boundary and $\obs(\gamma)=0$.

\begin{figure}[thb]
\center
\includegraphics[width=0.4\textwidth]{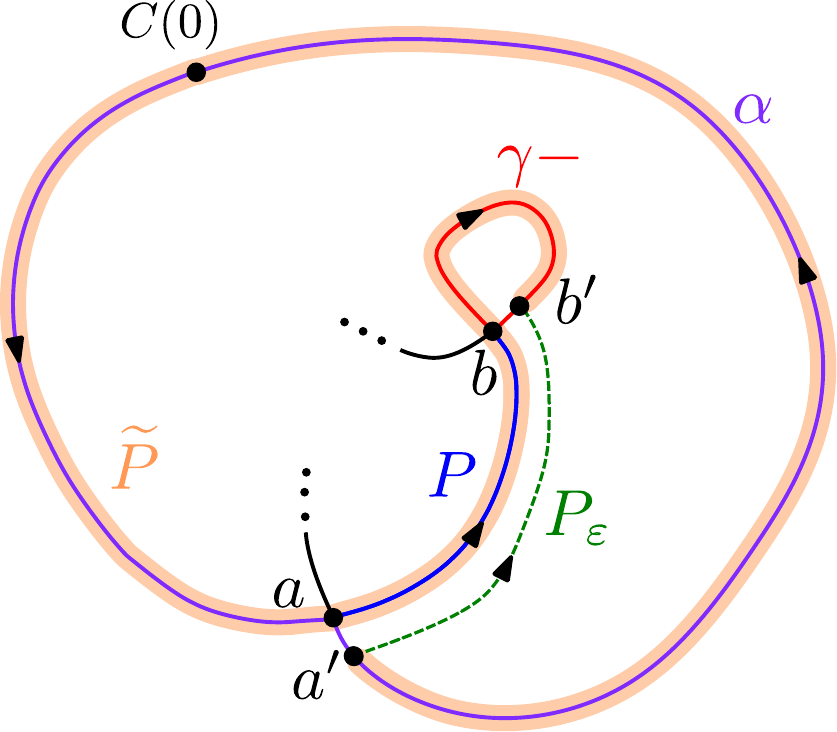}
\caption{The combinatorial structure necessary to apply balanced loop deletion: a wrapped curve, with outer wrap $\alpha$
and a negatively oriented loop $\gamma_-$ as first loop in the simple path decomposition. }
\label{fig:BLD}
\end{figure}

We first introduce a trick that we call \myemph{balanced loop deletion}.
See \figref{BLD}, where all of the following objects are shown.
Suppose that $C \in \C$ is a curve that is positively wrapped, $C = Wr_+(C')$ for some curve $C' \in \C$,
and suppose  that the first loop $\gamma_-$ (shown in red) in the simple path decomposition of $C$ is negatively oriented.
Let $b=C(t_b)=C(t_b^*)$, with $t_b < t_b^*$, be the basepoint of $\gamma_-$.
Balanced loop deletion performs a left sense-preserving homotopy $H$
 so that $C \homotopyarrow{H} C \setminus \left(\alpha \cup \gamma_- \right)$,
where $\alpha$ (shown in purple) is the positive outer wrap on $C$.
%

%
Let $P$ (shown in blue) be the simple subpath of~$C$  from $a=C(t_a)=C(t_a^*)$ to $b$, where $a$ is
the unique outer intersection point on $[C]$, i.e., the basepoint of the wrap $\alpha$, and  $t_a < t_a^*$.
For $\varepsilon>0$ sufficiently small, let $a' = C(t_a^* + \varepsilon)$ and $b' = C(t_b^* - \varepsilon)$  and let
$P_\varepsilon$ (shown in dashed green) be a simple path between $a'$ and $b'$ that is
$\varepsilon$-close to $P$ in the Hausdorff distance.
%
%
Let $\widetilde{P}=C|_{[t_a^* + \varepsilon,t_b^* - \varepsilon]}$ (shown in thick beige) be
the simple subpath of~$C$ from $a'$ to $b'$.
Then $\widetilde{P}$ is the concatenation of
(i) the path from $a'$ to $a$ along $\alpha$,
(ii) the path~$P$ from $a$ to $b$, and
(iii) the path from $b$ to $b'$ along $\gamma_-$.
The path $\widetilde{P}$ is simple because each of these subpaths are simple
and none of them intersect each other since $b$ is the first
self-intersection point of the curve.
We now make a crucial observation: $\widetilde{P}*P_{\varepsilon}$
is a simple, positively oriented, closed curve. It follows that
we can perform a Blank cut along $P_\varepsilon$ that replaces
$\widetilde{P}$ on $C$ with the path $P_{\epsilon}$.
The effect of this cut on $C$ is that both the outer wrap~$\alpha$ and the
negatively oriented loop $\gamma_-$ are deleted, and the path $P$ is replaced by
$P_{\varepsilon}$.
%
%
This Blank cut can be performed by a left sense-preserving
homotopy, so we have established the existence of left
sense-preserving balanced loop deletion.

Now we construct a left sense-preserving nullhomotopy $H$ of
$\Wr_+^k(\gamma)$ by iteratively concatenating several left
sense-preserving subhomotopies, so $H= \sum_i H_i$. We proceed
inductively as follows. Suppose $H_1, \dots, H_{i-1}$ have been
defined and $\gamma_i$ is the current curve. Consider the first loop
$C_i$ in the simple path decomposition of $\gamma_i$. If $C_i$ is
positively oriented we let $H_i$ be the left sense-preserving nullhomotopy
that contracts this loop. Otherwise $C_i$ is negatively oriented
and we let $H_i$ be the homotopy performing balanced loop deletion.

We claim that we always have a wrap available to perform this balanced
loop deletion. Each homotopy $H_j$ for
$j=1\ldots i-1$ deletes one or two free subcurves of~$\gamma_{j}$. Therefore the signs of the remaining intersection points
are not affected.
%
%
Observe that if a vertex $v$ is the basepoint of an outwards loop
$\gamma_v$, then
$\sgn(v) = 1$ iff $\gamma_v$ is positively
oriented and $\sgn(v) = -1$ iff $\gamma_v$ is negatively
oriented; see \figref{LoopsSigns}.
\begin{figure}[ht]
\centering
\includegraphics[height  = .2\textwidth]{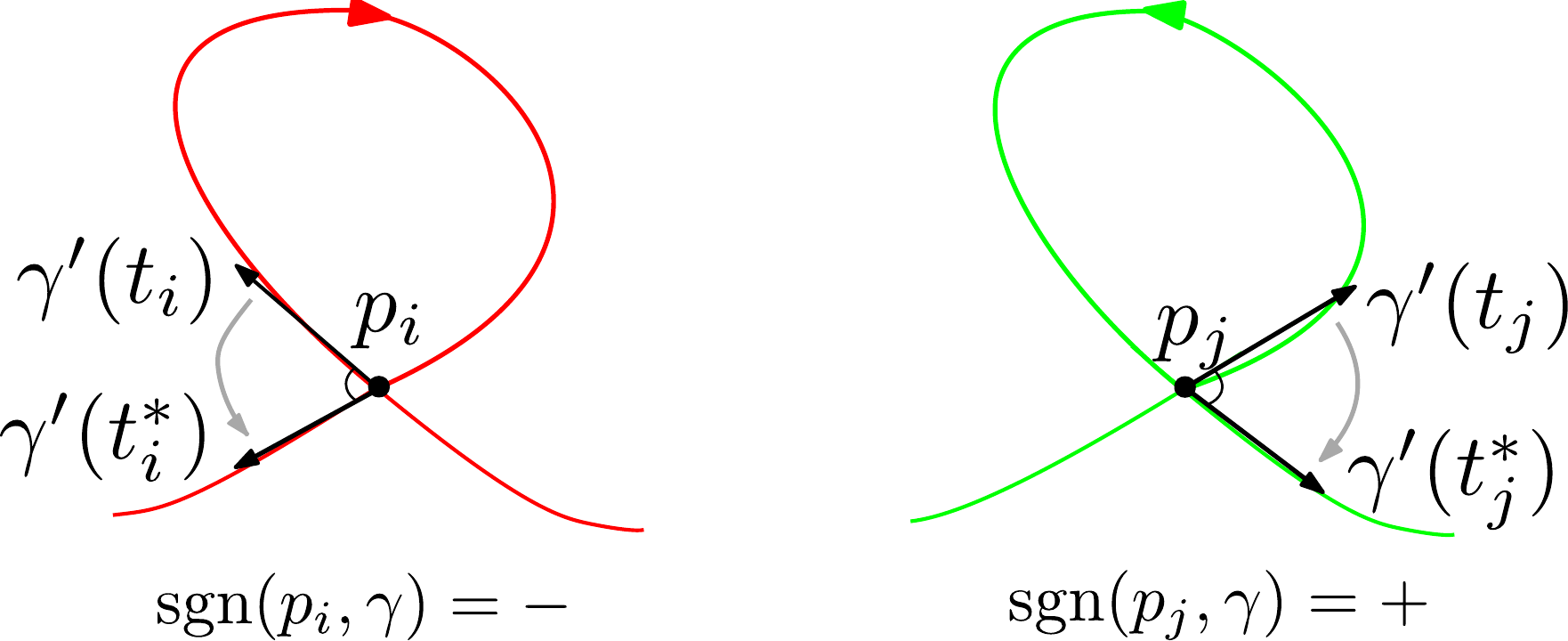}
\caption{Two outwards loops; negatively oriented (left) and positively oriented (right).}
%
\label{fig:LoopsSigns}
\end{figure}
By definition of $l$, we have $n_i \leq l$, where $n_i$ is
the number of negative vertices on $\gamma_i$.
Therefore there can be at most $l$ distinct integers $i_1,\ldots,i_l$
such that the first loop on $\gamma_{i_\nu}$  is negatively oriented,
by \lemref{OL} the first loop is always outwards.
Since $k=l+1$, we always have a wrap available on~$Wr_+^k(\gamma)$.

The process of constructing homotopies $H_i$ never gets stuck, and
$\abs{\gamma_{i+1}} < \abs{\gamma_i}$.  Therefore we must eventually
reach a point when the current curve $\gamma_{m}$ has $\abs{\gamma_m}
= 0$. We now show that this final curve $\gamma_m$ is positively
oriented.
Note that if $\gamma$ is simple then $\Wr_+^k(\gamma)$ trivially is a positive
$k$-interior boundary. So, assume $\gamma$ is not simple.

By the definition of wraps, the intersection sequence of $\Wr_+^k(\gamma)$ has the form
$0,1+,2+,\ldots,k+,I',k-,\ldots,1-,0$, where $I'$ is obtained from the signed intersection sequence of $\gamma$ by incrementing each label by $k$ and removing occurrences of $0$.
The basepoint of the first loop on $\gamma_1 = \Wr_+^k(\gamma)$ must
be a vertex from $\gamma$. Then $H_1$ modifies the intersection sequence by
removing two labels from $I'$ corresponding to this loop, and if the loop is negatively oriented then balanced loop deletion also removes
a pair $a+\ldots a-$ for the wrap. The same modification happens for each homotopy $H_i$ until $I'$ is empty, and the homotopies after that contract wraps $a+\ldots a-$ which are all positively oriented loops.
We know that $\gamma$ has at most $l=k-1$ negative vertices, hence
there can only be $k-1$ balanced loop deletions, but there are $k$
wraps. Thus, $\gamma_m$ must be a
wrap
that $Wr_+^k(\gamma)$ added
to $\gamma$, so $\gamma_m$ is a positively oriented loop
which can be contracted to its basepoint using a final left sense-preserving homotopy
$H_m$.
This shows that $H = \sum_{i=1}^m H_i$ is a left sense-preserving nullhomotopy of $\Wr_+^k(\gamma)$ as desired.
\qed
\end{proof}

\begin{figure}[htbp]
    \centering
    \includegraphics[height = .32\textwidth]{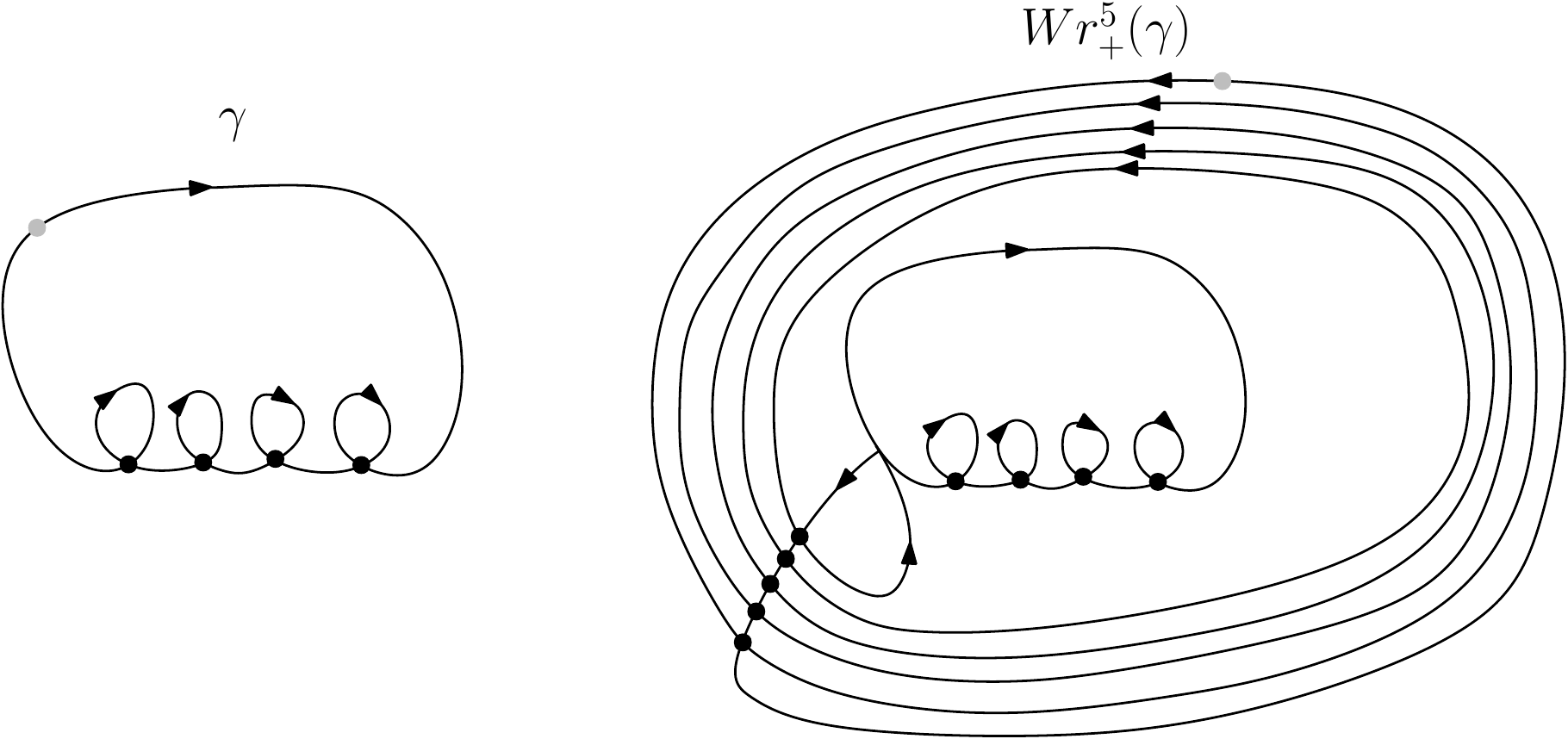} 
    \caption{A curve $\gamma$ that requires $k=l=5$ wraps to resolve obstinance, where $l$ is the number of negative vertices in $V(\gamma)$.}
    \label{fig:wrapAlmostTight}
\end{figure}

The example in \figref{wrapAlmostTight} shows that the number of wraps used in  \thmref{WRB} is nearly tight.
%
We now
show that wrapping resolves obstinance in either direction
of wrapping.

\begin{theorem}[Wrapping Resolves Obstinance (General)]\label{thm:WRB2}
Let $\gamma \in \C$ with outer basepoint and set $n = \abs{\gamma}$. Then there are constants $k_-, k_+ \leq n+1$ so that
$\obs(Wr_-^{k_-}(\gamma)) = \obs(\Wr_+^{k_+}(\gamma)) = 0$, 
$\Wr_-^{k_-}(\gamma)$ is a negative interior boundary, and $\Wr_+^{k_+}(\gamma)$ is a positive interior boundary.
\end{theorem}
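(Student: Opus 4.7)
The plan is to reduce \thmref{WRB2} to \thmref{WRB} via orientation symmetry together with the $\text{I}_b$-extension built into the definition of the opposite-direction wrap, handling four cases based on the sign of the outer basepoint and the direction of the wrap. \thmref{WRB} directly establishes the first matched case: if $\gamma$ has positive outer basepoint, then $Wr_+^{k_+}(\gamma)$ is a positive interior boundary for $k_+ = l+1$, where $l$ is the number of negative vertices of $\gamma$; since $l\le n$, this gives $k_+\le n+1$. Applying \thmref{WRB} to the reversed curve $\overline{\gamma}$, whose outer basepoint orientation and all vertex signs are flipped, yields the symmetric matched case: if $\gamma$ has negative outer basepoint, then $Wr_-^{k_-}(\gamma)$ is a negative interior boundary for $k_- \le p+1 \le n+1$, where $p$ is the number of positive vertices of $\gamma$.

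The two remaining mixed cases are handled by the $\text{I}_b$-extension. For $\gamma$ with positive outer basepoint, the definition prepends a $\text{I}_b$-move creating a small opposite-orientation loop $\tilde{\gamma}$ at $\gamma(0)$ and relocating the basepoint onto $\tilde{\gamma}$, giving a new curve $\gamma'$ with $|\gamma'|=n+1$ and flipped outer basepoint orientation. The wrap then proceeds in the direction matched to $\gamma'$, so that $Wr_-^{k_-}(\gamma)$ is obtained by applying the matched construction $k_-$ times to $\gamma'$. The already-established symmetric matched case then applies to $\gamma'$, and the last case is handled identically after swapping signs.

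The principal obstacle is the accounting: a naive application to $\gamma'$ would only yield $k_- \le (n+1)+1 = n+2$, one larger than the claimed bound. To close this gap, I would adapt the construction in the proof of \thmref{WRB} rather than invoking the statement as a black box. The key observation is that the $\text{I}_b$-loop $\tilde{\gamma}$ together with the first outer wrap $\alpha_1$ provides exactly the configuration needed for a single balanced loop deletion at the very start of the nullhomotopy: $\tilde{\gamma}$ plays the role of the negatively-oriented first loop in the simple path decomposition, and $\alpha_1$ plays the role of the outer wrap available for deletion. This initial move eliminates both the $\text{I}_b$-vertex and one wrap in one step, so the extra vertex never contributes to the bookkeeping.

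After this first step, the remaining curve has exactly the vertices of $\gamma$ together with $k_--1$ outer wraps, and the iterative balanced-loop-deletion argument of \thmref{WRB} applies verbatim to these: at most $l$ (or $p$, for the other mixed case) further balanced loop deletions are needed, plus one final wrap to contract, for a total of $k_-\le l+1\le n+1$ (resp.\ $p+1\le n+1$). Verifying that $\tilde{\gamma}$ is indeed an outwards loop in the simple path decomposition of $Wr_\mp^{k_\mp}(\gamma)$, with the correct orientation for the first balanced loop deletion to apply, is the one technical point to be checked carefully; it follows from the explicit construction of $\tilde{\gamma}$ tangent to $\gamma(0)$ and from \lemref{OL} applied to the wrapped curve.
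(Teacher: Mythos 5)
Your reduction---the matched case directly from Theorem \ref{thm:WRB}, reversal symmetry for the negative-basepoint case, and the $\text{I}_b$-extension for the two mixed cases---is exactly the argument the paper gives. Your additional bookkeeping to recover the bound $k_\pm \le n+1$ (absorbing the $\text{I}_b$-vertex into an initial balanced loop deletion against the first wrap) goes beyond the paper, which simply invokes existence via $Wr_+(\overline{\gamma'}) \cong Wr_-(\gamma)$ without verifying the constant; that extra care is sound and fills a detail the published proof leaves implicit.
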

\begin{proof}
If $\gamma(0)$ is a positive basepoint, then $k_+$ exists directly
by \thmref{WRB}. To prove the existence of $k_-$, consider the
intermediary curve $\gamma'$ obtained from $\gamma$ by performing a
$\text{I}_b$-move on the outer edge to create a negatively oriented
loop that is entirely outer to the curve. Additionally, let us set the
basepoint on $\gamma'$ on the new negatively oriented loop $\gamma_-$. Then
$\gamma'$ has a negative outer basepoint. Here is the key observation:
$Wr_+(\overline{\gamma'}) \cong Wr_-(\gamma)$. Thus, by the existence
of $\tilde{k}_+$ for $\overline{\gamma'}$, we have $k_-$ for $\gamma$.

If $\gamma(0)$ is a negative outer basepoint, then since $k_-$ and $k_+$ exist for the reversal $\overline{\gamma}$, we are done,
as $\overline{Wr_{-}^{k_+}(\gamma)} \cong Wr_+^{k_+}(\overline{\gamma})$ and $\overline{Wr_+^{k_-}(\gamma)} \cong Wr_-^{k_-}(\overline{\gamma})$.
\qed
\end{proof}

Let us make a simple observation: once $Wr_{\pm}^k(\gamma)$ is an
interior boundary, so too is $Wr_{\pm}^j(\gamma)$ for any integer
$j \geq k$. This holds because we can simply add the extra $j-k$
wraps to the \SO\ decomposition $\Omega$ of $Wr_+^k(\gamma)$.
Consequently, \thmref{WRB2} implies that interior boundaries are the
equilibrium point for plane curves with respect to the action of
wrapping. No matter where we begin, we will always eventually land and
stick within the set of interior boundaries.

\subsection{Irreducible and Strongly Irreducible Curves}\label{sec:special}

We are now ready to apply \thmref{WRB2} to prove sufficient combinatorial conditions for a
curve $\gamma$ to be \SO\ based on
$\whit{\gamma}$ and properties of its direct splits.
If $\gamma\in\C$ has
no proper positive \SO\ direct splits, we call $\gamma$ \myemph{irreducible}.
A special case of irreducibility is of particular interest to us:
If $\whit{\gamma_v} \leq 0$
for all proper direct splits, we call $\gamma$
\myemph{strongly irreducible}.
%
See \figref{SO_curve}, \figref{decomps}, and $\gamma_{SO}$ in \figref{CurveClasses} for examples of
strongly irreducible curves. Note that a strongly irreducible curve is irreducible
since any positive \SO\ curve $\gamma$ has $\whit{\gamma} = 1$.

We need one simple lemma before proving irreducible curves are \SO.

\begin{lemma}[Existence of a Direct Split]\label{lem:FSD_DS}
Let $\gamma \in \C$ and $\Omega$ be a free subcurve decomposition of $\gamma$, with $|\Omega| \geq 2$. Then $\Omega$ contains a proper direct split. 
\end{lemma}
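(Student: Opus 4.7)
The plan is to locate an ``outermost'' subcurve in $\Omega$ and argue that it must be a proper direct split of $\gamma$ itself, not merely of an intermediate curve in the decomposition. I will work with the rooted tree $T_\Omega$ on $V(\Omega)$ induced by the containment partial order $\prec$, whose root is the basepoint $p_0$. Since $|\Omega|\geq 2$, the tree $T_\Omega$ has at least two vertices, and therefore admits at least one leaf $v\neq p_0$.

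For such a leaf $v$, I will show that the subcurve $\gamma_v\in\Omega$ with basepoint $v$ coincides with the direct split of $\gamma$ at $v$, that is, the curve with image $[\gamma|_{[t_v,t_v^*]}]$ and basepoint $v$. The key observation is that, because $v$ is a leaf of $T_\Omega$, no other vertex $v_j\in V(\Omega)$ satisfies $v_j\subset v$; equivalently, the image $[\gamma|_{[t_v,t_v^*]}]$ contains no basepoint from $V(\Omega)\setminus\{v\}$. Invoking the observation recorded immediately after the definition of $\sim$---that any two equivalent decompositions contain the same set of subcurves---I may reorder $\Omega$ within its $\sim$-equivalence class so that $\gamma_v$ is processed first. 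In that reordering, $\gamma_v$ is by definition a direct split of $C_0=\gamma$. Since the set of subcurves is invariant under $\sim$, the same $\gamma_v$ is a direct split of $\gamma$ in the original ordering as well. Finally, because $v\neq p_0$, this direct split is proper, completing the proof.

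The main obstacle I anticipate is rigorously justifying the reordering step: one must verify that processing a leaf first, and then continuing by always processing leaves of the remaining subtree, yields a valid direct split decomposition in the sense of Section~\ref{sec:directsplits}. This reduces to checking that the valid processing orders of $T_\Omega$ are precisely the topological orders from leaves toward the root. That in turn follows from two facts already present in the excerpt: (i) no two vertices of $V(\Omega)$ are linked, so the combinatorial relation between any two of them is either $\subset$ or $\separate$; and (ii) removing a subcurve $\gamma_{v_j}$ with $v_j\subset v$ does not disturb $v$ or the complementary direct split at $v$, while removing $\gamma_v$ first would erase such a $v_j$ from the current curve. Once this is in place, the lemma follows as sketched.
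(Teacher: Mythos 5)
Your proposal is correct and takes essentially the same route as the paper: both arguments pick a leaf $v \neq p_0$ of $T_\Omega$ and observe that the corresponding subcurve is a proper direct split of $\gamma$ itself. The paper states this in two sentences; your extra care about reordering within the $\sim$-class is a valid (if not strictly necessary) elaboration of why the leaf subcurve is a direct split of $\gamma$ rather than only of some intermediate curve $C_{i-1}$.
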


\begin{proof}
A leaf $v_i$ in the tree $T_{\Omega}$ necessarily corresponds to the basepoint of a direct split $\gamma_i$ in the decomposition $\Omega$. 
Since $|\Omega| \geq 2$, this direct split $\gamma_i$ must be proper. 
\qed
\end{proof}

\begin{theorem}[Irreducible Curves are Self-Overlapping]\label{thm:ISO}
Assume $\gamma$ has $\whit{\gamma} = 1$ and positive outer
basepoint. If $\gamma$ is irreducible, then it is \SO. \end{theorem}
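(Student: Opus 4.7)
The plan is to enlarge $\gamma$ by positive wrapping until it becomes a positive interior boundary, and then exploit the irreducibility hypothesis to force the guaranteed \SO\ decomposition of that wrap to display $\gamma$ itself as a \SO\ piece.

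First, since $\gamma$ has a positive outer basepoint, \thmref{WRB} yields a minimum integer $k\geq 0$ such that $\Wr_+^k(\gamma)$ is a positive interior boundary of zero obstinance. If $k=0$, then $\gamma$ is itself a positive interior boundary with $\whit{\gamma}=1$, and the equivalence $\ref{ESO:iii}\Leftrightarrow\ref{ESO:i}$ in \thmref{ESO} immediately gives that $\gamma$ is \SO. So from now on assume $k\geq 1$. By \lemref{WHSSD} we have $\whit{\Wr_+^k(\gamma)}=\whit{\gamma}+k=1+k$, so \thmref{EIB}(\ref{EIB:iii}) supplies a \SO\ decomposition $\Omega=(\gamma_i)_{i=1}^{k+1}$ of $\Wr_+^k(\gamma)$ in which every $\gamma_i$ is positive \SO.

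Next I would catalogue the direct splits of $\Wr_+^k(\gamma)$ via its signed intersection sequence $0,1_+,2_+,\ldots,k_+,I',k_-,\ldots,1_-,0$, where $I'$ is the signed intersection sequence of $\gamma$ with labels shifted by $k$. The non-basepoint vertices are the $k$ wrap tangent points $v_1,\ldots,v_k$ (with $v_j$ carrying label $j$) together with the vertices $u\in V(\gamma)$ (whose labels appear inside $I'$). Inspecting the subsequence between the two occurrences of each label yields: the direct split at $v_j$ has image $[\Wr_+^{k-j}(\gamma)]$ (one strips off the outer $j$ wraps), while the direct split at $u\in V(\gamma)$ has the same image and orientation as the direct split $\gamma_u$ of $\gamma$ itself.

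Now inspect the first entry $\gamma_1$ of $\Omega$. If $\gamma_1$ is the direct split at some $u\in V(\gamma)$, then $\gamma_u$ is a proper positive \SO\ direct split of $\gamma$, contradicting irreducibility. Hence $\gamma_1$ must be the direct split at some $v_j$, so $\gamma_1$ has image $[\Wr_+^{k-j}(\gamma)]$. Since $\gamma_1$ is \SO, $\whit{\gamma_1}=1$; but \lemref{WHSSD} gives $\whit{\Wr_+^{k-j}(\gamma)}=1+(k-j)$, forcing $j=k$. Consequently $\gamma_1$ has the same image and orientation as $\gamma$, and because \SO ness depends only on the image and orientation (the immersion $F\colon D^2\to\R^2$ can be precomposed with any rotation of $D^2$), $\gamma$ is \SO.

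The main obstacle I anticipate is not conceptual but combinatorial bookkeeping: verifying precisely that the direct split of $\Wr_+^k(\gamma)$ at a wrap tangent vertex $v_j$ coincides image-wise with $\Wr_+^{k-j}(\gamma)$, and that image-equality (up to a basepoint shift) is enough to transfer \SO ness back to $\gamma$ with its original basepoint. Once that identification is pinned down, the Whitney-index computation and the irreducibility dichotomy close the argument.
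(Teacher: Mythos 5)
Your argument is correct and follows essentially the same route as the paper: wrap via \thmref{WRB}, invoke the positive \SO\ decomposition from \thmref{EIB}(\ref{EIB:iii}), rule out every candidate direct split of $\Wr_+^k(\gamma)$ except the one at the original basepoint by the Whitney-index computation at the wrap vertices and by irreducibility at the vertices of $\gamma$. The only cosmetic difference is that you extract a proper direct split as the first entry $\gamma_1$ of the decomposition, whereas the paper invokes \lemref{FSD_DS} (a leaf of $T_\Omega$); both are valid, and your worry about a basepoint shift is moot since the surviving split is based at $\gamma(0)$ itself.
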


\begin{proof}
Apply \thmref{WRB} to find a $k \in \Z$ such that $Wr_+^k(\gamma)$ is a positive interior boundary.
We know from \propref{EIB:iii} of \thmref{EIB} that there is a \SO\ decomposition
 $\Omega$ of $Wr_+^k(\gamma)$ into positive \SO\ subcurves.
By \lemref{FSD_DS} we know that $\Omega$ must have a \SO\ direct split of
$Wr_+^k(\gamma)$, and we will show that $\gamma$ is the only direct split of $Wr_+^k(\gamma)$
that can be \SO. 

Let $w_i$ be the vertex created by the $i^{th}$ wrap.  The
intersection sequence of $Wr_+^k(\gamma)$ therefore has the prefix
$w_k,w_{k-1},\ldots,w_1$.
Then the direct split $Wr_+^k(\gamma)_{w_i}$ at
$w_i$ on $Wr_+^k(\gamma)$ has $\whit{Wr_+^k(\gamma)_{w_i}}=1+(i-1) = i$ by \lemref{WHSSD}, and
is therefore not \SO\ for $i\geq 2$.
And any direct split $Wr_+^k(\gamma)$ at a vertex of $\gamma$
which is also a proper direct split on $\gamma$ cannot be \SO\ since $\gamma$ is irreducible.
Note that by our notation $w_1$ is the vertex corresponding to the
original basepoint $\gamma(0)$. And this is the only vertex at which the
direct split $Wr_+^k(\gamma)_{w_1}=\gamma$ could potentially be \SO. Thus, it follows
with \lemref{FSD_DS} that $\gamma$ is \SO.
\qed
\end{proof}

\begin{corollary}
Let $\gamma$ have $\whit{\gamma} = 1$ and positive outer
basepoint. Then if $\gamma$ is not \SO, it has a positive \SO\ direct
split $\gamma_v$.
\end{corollary}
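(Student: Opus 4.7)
The plan is essentially a one-line argument by contrapositive of Theorem \ref{thm:ISO}, so the proof proposal is short and the main ``work'' is just unfolding definitions carefully.

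First I would observe that the hypothesis of the corollary matches the hypothesis of Theorem \ref{thm:ISO} exactly (Whitney index $1$, positive outer basepoint), while the conclusion that $\gamma$ is not \SO\ is the negation of the conclusion of Theorem \ref{thm:ISO}. Hence Theorem \ref{thm:ISO} forces $\gamma$ to fail the remaining hypothesis, namely irreducibility. Unfolding the definition of irreducible given just before Theorem \ref{thm:ISO}, ``$\gamma$ is not irreducible'' means that $\gamma$ has at least one proper direct split $\gamma_v$ which is positive \SO, yielding exactly the vertex $v$ claimed by the corollary.

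The only subtlety worth calling out in the write-up is that the direct split must be \emph{proper}: the improper direct split is $\gamma$ itself, and by assumption $\gamma$ is not \SO, so indeed the witness $\gamma_v$ is a proper direct split at some interior vertex $v \in V(\gamma) \setminus \{p_0\}$. I would not anticipate any obstacle here, since everything needed is already packaged inside Theorem \ref{thm:ISO} and the definition of irreducibility; the corollary is essentially a restatement of Theorem \ref{thm:ISO} in contrapositive form, included for emphasis and ease of later citation.
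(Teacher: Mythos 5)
Your proposal is correct and is exactly the argument the paper intends: the corollary is the contrapositive of Theorem \ref{thm:ISO} combined with the definition of irreducibility, and your remark about the witness being a \emph{proper} direct split (since $\gamma$ itself is assumed not \SO) is the only detail worth noting. No issues.
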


We now have a nice corollary: conditions on the Whitney indices of a curve and its subcurves alone can
be sufficient for \SO ness.

\begin{theorem}[Strongly Irreducible Curves are Self-Overlapping]\label{thm:THSO}
Assume $\gamma$ has $\whit{\gamma} = 1$ and positive outer basepoint. If $\gamma$ is strongly irreducible, then it is \SO.
\end{theorem}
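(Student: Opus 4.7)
The plan is to deduce this theorem as an immediate consequence of Theorem \ref{thm:ISO} by observing that strong irreducibility implies irreducibility under the stated hypotheses. First I would unpack the two definitions from Section \ref{sec:special}: $\gamma$ strongly irreducible means every proper direct split $\gamma_v$ satisfies $\whit{\gamma_v}\leq 0$, whereas $\gamma$ irreducible means no proper direct split $\gamma_v$ is positive \SO.

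Next I would invoke the standard fact, noted in the Related Work and also obtainable from the equivalence $\ref{ESO:iii} \Leftrightarrow \ref{ESO:iv}$ of Theorem \ref{thm:ESO} together with Lemma \ref{lem:WHSSD} (a well-behaved left sense-preserving nullhomotopy of a positive \SO\ curve has exactly one $\text{I}_a$-move, so its trivial \SO\ decomposition has Whitney index sum $1$), that every positive \SO\ curve $C$ satisfies $\whit{C}=1$. Consequently, since every proper direct split of a strongly irreducible $\gamma$ has $\whit{\gamma_v}\leq 0<1$, no such $\gamma_v$ can be positive \SO, so $\gamma$ is irreducible.

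Having verified that $\gamma$ is irreducible, has $\whit{\gamma}=1$, and has a positive outer basepoint, I would simply apply Theorem \ref{thm:ISO} to conclude that $\gamma$ is \SO. All of the substantive work—wrapping until obstinance vanishes (Theorem \ref{thm:WRB}) and ruling out any direct split of $Wr_+^k(\gamma)$ other than $\gamma$ itself as a candidate positive \SO\ piece—has already been carried out in the proof of Theorem \ref{thm:ISO}, so there is no real obstacle here; the theorem is essentially a definitional strengthening of the hypothesis of Theorem \ref{thm:ISO}, and the only fact one needs beyond unpacking definitions is the Whitney-index constraint $\whit{C}=1$ for positive \SO\ curves.
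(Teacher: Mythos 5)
Your proposal matches the paper's argument exactly: the paper also derives Theorem \ref{thm:THSO} as an immediate corollary of Theorem \ref{thm:ISO}, noting that a strongly irreducible curve is irreducible because every positive \SO\ curve has Whitney index $1$. Your justification of that Whitney-index fact via Theorem \ref{thm:ESO} and Lemma \ref{lem:WHSSD} is a valid internal derivation of what the paper simply cites as well known.
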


Note that strongly irreducible curves are a proper subset of irreducible curves, see $\gamma_{I}$ in \figref{smallerCurveLattice}.
And \thmref{THSO} is false without the basepoint assumption, see \figref{THnotSO}.

\begin{figure}[ht]
\centering
\includegraphics[height = .27\textwidth]{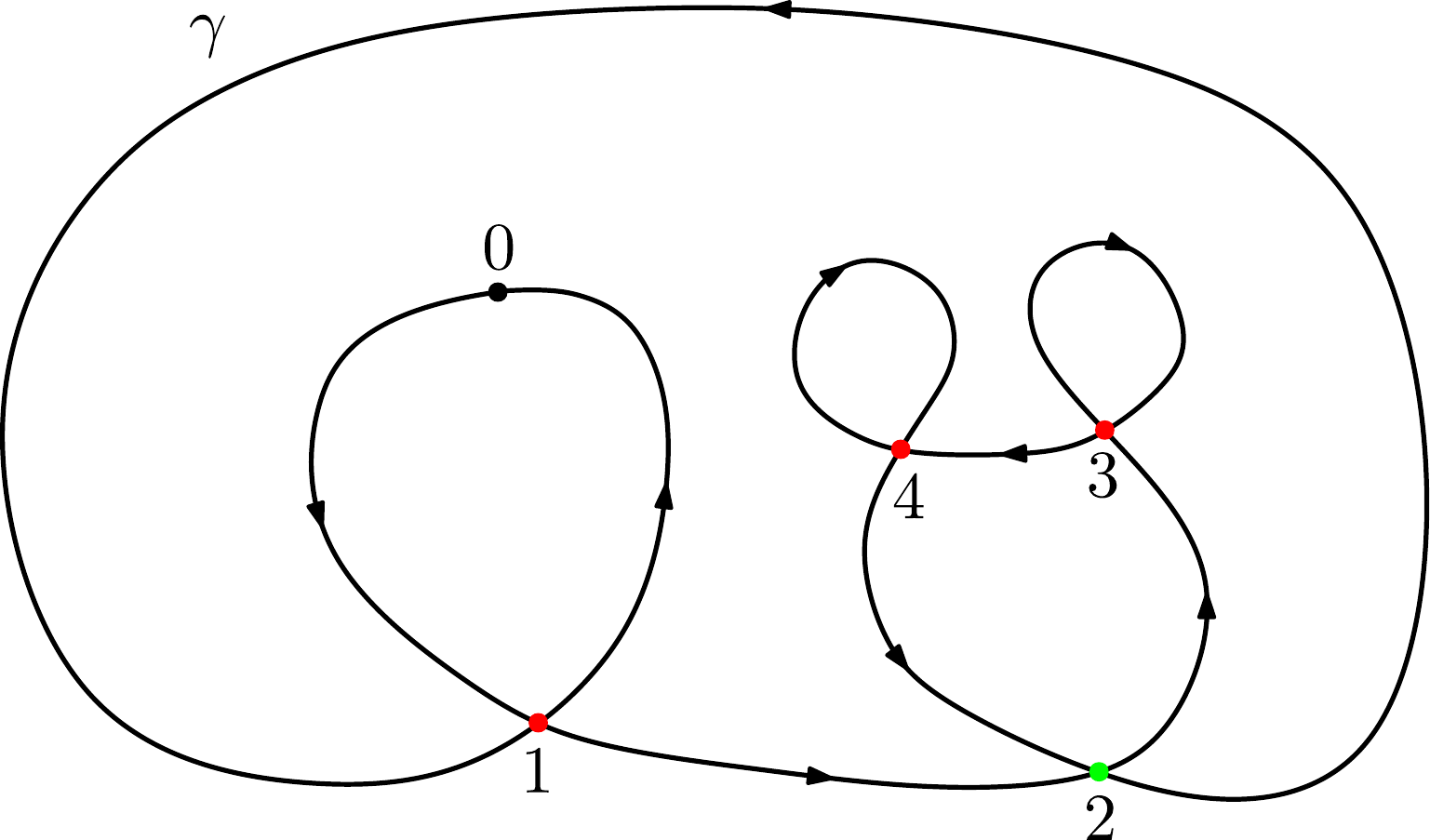}
\caption{This curve $\gamma$ does not have an outer basepoint. It is not \SO,
yet $\gamma$
is strongly irreducible due to the empty positively oriented loop on the indirect split $\gamma_{1^*}$.}
\label{fig:THnotSO}
\end{figure}

One can decide whether a piecewise linear curve $\gamma$ is (strongly) irreducible by
checking the required condition for each direct split.
Let $N$ be the number of line segments of $\gamma$ and $n=|\gamma|=|V(\gamma)|\in O(N^2)$. 
Then irreducibility can be tested in $O(nN^3)$ time, using Shor and Van
Wyk's algorithm to test for \SO ness in $O(N^3)$
time \cite{SVW92}.
Strong irreducibility can be decided in $O(n^2)$ time by
applying \corref{computeWI} to each direct split of $\gamma$.



\subsection{Global Balanced Loop Insertion}

We now introduce an operation called balanced loop insertion, which is complementary to the
balanced loop deletion applied in the proof of \thmref{WRB}. We show in \thmref{GBLI} that any curve $\gamma$ with positive outer basepoint
and $\whit{\gamma} = 1$ can be transformed into a \SO\ curve, more specifically, a strongly irreducible \SO\ curve,
through a sequence of balanced loop insertions. This result is a nice parallel to \thmref{WRB}.

Let $\gamma \in \C$ have a positive outer basepoint. Then given
any edge $e$ from $G(\gamma)$, we define \myemph{balanced loop insertion} on
$\gamma$ with respect to $e$ as follows: First, perform a
$\text{I}_b$-move to insert a negatively oriented loop, on the right
side of the edge $e$, and smooth the resulting curve $\gamma'$ until
it is normal and regular. Then, wrap around $\gamma'$ to create
$\gamma''= Wr_+(\gamma')$.
See \figref{BLI}.
%
\begin{figure}[ht]
\centering
\includegraphics[height = .36\textwidth]{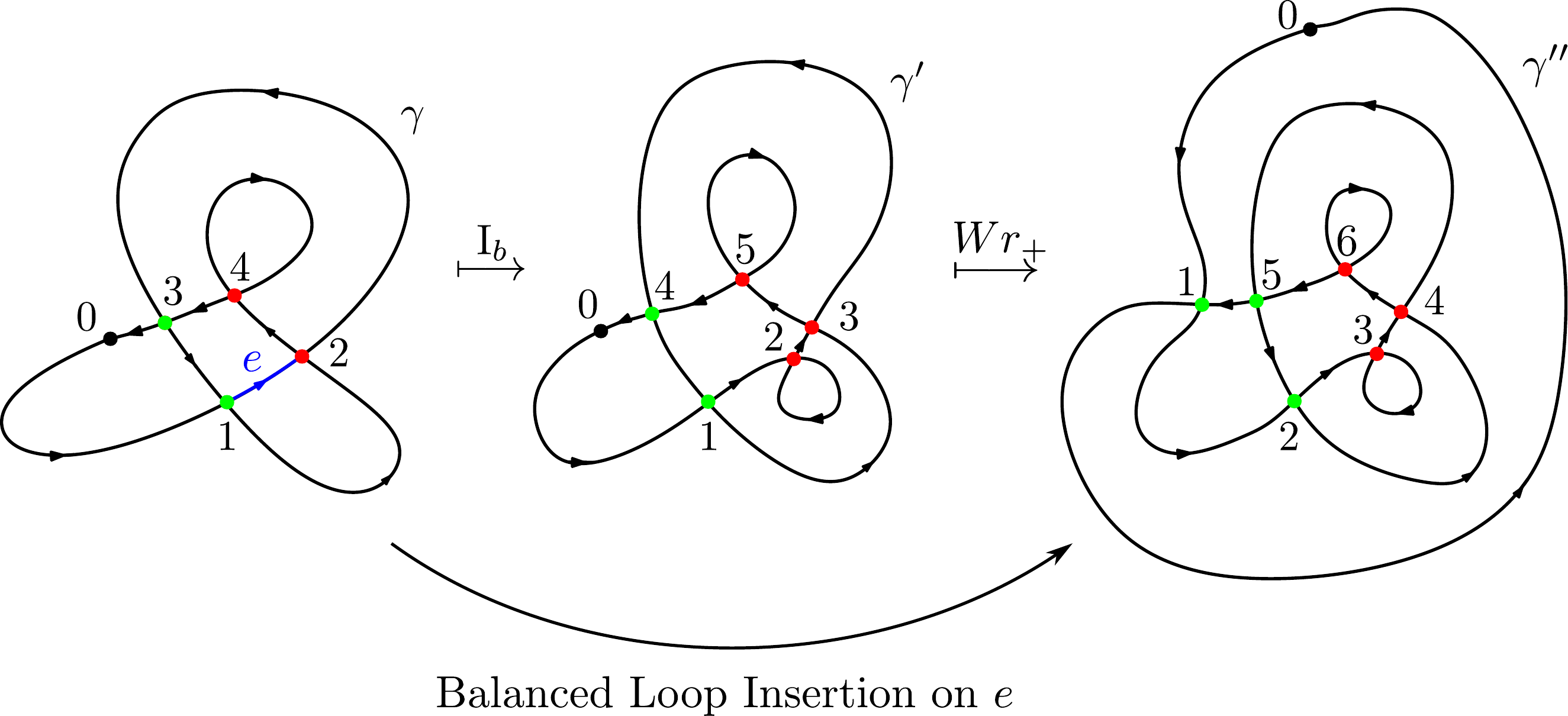}
\caption{Balanced loop insertion on a \SO\ curve $\gamma$ with respect to an edge $e$. Note that the curve produced, $\gamma''$, is also \SO.}
\label{fig:BLI}
\end{figure}
%
The following is an interesting property of balanced loop insertion.

\begin{lemma}
Let $\gamma$ have $\whit{\gamma} = 1$ and positive outer basepoint. If $\gamma$ is strongly irreducible,
and $\gamma'$ is obtained from $\gamma$ by balanced loop insertion, then $\gamma$ is strongly irreducible as well.
\end{lemma}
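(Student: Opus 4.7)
Write $\tilde{\gamma}$ for the curve obtained from $\gamma$ by the $\mathrm{I}_b$-move that inserts the empty negatively oriented loop on the right of $e$, so that the balanced loop insertion outputs $\gamma' = Wr_+(\tilde{\gamma})$. The plan is to enumerate the proper direct splits of $\gamma'$ by the vertex they are taken at, and in each case read off the Whitney index from \lemref{WHSSD}. Note first that $\whit{\gamma'} = \whit{\tilde{\gamma}} + 1 = (\whit{\gamma} - 1) + 1 = 1$, and $\gamma'$ inherits a positive outer basepoint from the outer wrap $\alpha$.

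Let $u$ denote the new vertex produced by the $\mathrm{I}_b$-move (basepoint of the empty negative loop) and let $\tilde{w}$ denote the wrap-attachment vertex, which equals $\gamma(0) = \tilde{\gamma}(0)$. Writing the signed intersection sequence of $\gamma'$ in the form $0, \tilde{w}_+, I', \tilde{w}_-, 0$, the new basepoint sits on $\alpha$ and is not a crossing. The vertices of $\gamma'$ partition as $V(\gamma') = (V(\gamma) \setminus \{p_0\}) \cup \{u, \tilde{w}\}$ together with the basepoint, and I handle each class:
\begin{itemize}
\item \emph{Direct split at $\tilde{w}$:} this is exactly $\tilde{\gamma}$. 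Since the inserted loop is itself a direct split of $\tilde{\gamma}$ of Whitney index $-1$, \lemref{WHSSD} gives $\whit{\tilde{\gamma}} = \whit{\gamma} - 1 = 0$.
\item \emph{Direct split at $u$:} the empty negative loop, with Whitney index $-1$.
\item \emph{Direct split at $v \in V(\gamma) \setminus \{p_0\}$:} the sub-sequence of $I'$ between the two occurrences of $v$ matches the direct split $\tilde{\gamma}_v$ of $\tilde{\gamma}$ at $v$, since the wrap contributes only the outermost pair $\tilde{w}_+ \ldots \tilde{w}_-$ surrounding $I'$.
\end{itemize}

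For the third case I split on whether the edge $e$ lies on $[\gamma_v]$ or not. If $e \notin [\gamma_v]$, then the $\mathrm{I}_b$-move does not modify the portion of the curve between the two visits to $v$, so $\gamma'_v$ is combinatorially equivalent to $\gamma_v$ and inherits $\whit{\gamma'_v} = \whit{\gamma_v} \leq 0$ from the strong irreducibility of $\gamma$. If $e \in [\gamma_v]$, then the inserted loop lies in $\gamma'_v$ as a direct split, and removing it yields a curve combinatorially equivalent to $\gamma_v$; applying \lemref{WHSSD} again gives $\whit{\gamma'_v} = \whit{\gamma_v} + (-1) \leq -1$. In every case $\whit{\gamma'_v} \leq 0$, so every proper direct split of $\gamma'$ has non-positive Whitney index and $\gamma'$ is strongly irreducible.

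\textbf{Main obstacle.} The only nontrivial bookkeeping is verifying the combinatorial identification in the third bullet: that the direct split at a vertex $v \in V(\gamma) \setminus \{p_0\}$ within $\gamma' = Wr_+(\tilde{\gamma})$ coincides with the direct split at $v$ in $\tilde{\gamma}$, and then that the added loop appears as a direct split of $\gamma'_v$ precisely when $e$ lies on $\gamma_v$. Once this is pinned down from the signed intersection sequence, \lemref{WHSSD} does the rest.
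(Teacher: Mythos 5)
Your proof is correct and follows essentially the same route as the paper's: enumerate the proper direct splits of $\gamma'$ by vertex, note that a split at an old vertex $v$ is either unchanged or acquires one negatively oriented loop (so its Whitney index can only stay the same or drop by one via \lemref{WHSSD}), and check that the new loop vertex gives index $-1$ and the wrap vertex gives index $\whit{\gamma}-1=0$. Your explicit case split on whether $e$ lies on $[\gamma_v]$ just makes precise the paper's one-line assertion that "the only possible change to the direct split was that we added a negatively oriented loop."
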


\begin{proof}
Let $i: V(\gamma) \rightarrow V(\gamma')$ be the inclusion map, sending vertices on $\gamma$ to the corresponding
vertices on $\gamma'$. Then the only possible change to the direct split was that we added a negatively oriented
loop, so
$\whit{\gamma'_{i(v)} } \leq \whit{\gamma_v} \leq 0$.
The only other direct splits we need to check are those of the vertices $u, v$ created by balanced loop insertion.
We note immediately that $\whit{\gamma'_u} = -1$ where $u$ is the basepoint of the new negatively oriented loop.
Meanwhile, if $v$ is the vertex created by the wrap, then $\whit{\gamma'_{v}} = \whit{\gamma} -1 = 0 $.
Hence, $\gamma'$ is strongly irreducible.
\qed
\end{proof}


While strong irreducibility is preserved by balanced loop insertion, and
consequently, so too is \SO ness, if $\gamma$ has a positive outer basepoint
and $\whit{\gamma} = 1$, we need a stronger operation to transform a
non-\SO\ curve into a \SO\ curve.

Let $\gamma\in \C$. \myemph{Global balanced loop insertion}, denoted by $M: \C \rightarrow \C$, 
applies balanced loop insertion simultaneously once on every edge of $\gamma$.
%
Since there are $2\abs{\gamma} + 1$ edges on $G(\gamma)$, the operator $M(\cdot)$ applies balanced loop insertion $2\abs{\gamma}+1$ times.  
Equivalently, $M(\gamma)$ can be obtained by performing a $\text{I}_b$-move to the right of every edge of $\gamma$, adding a
new negatively oriented loop, and then wrapping the curve $2\abs{\gamma}+1$
times.
See \figref{GBLI}  for an example of global balanced loop insertion.
%

\begin{figure}[ht]
\centering
\includegraphics[height=.33\textwidth]{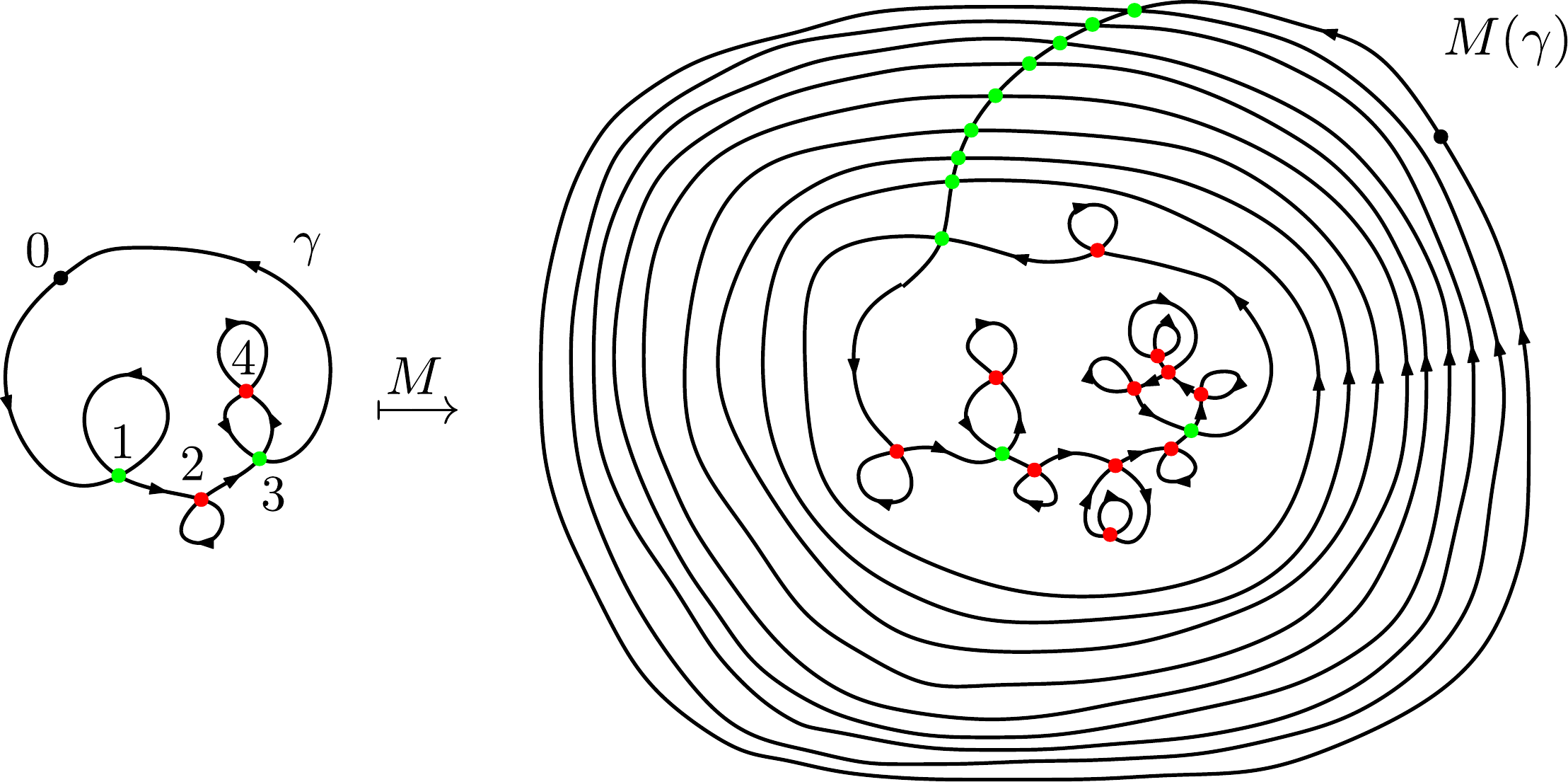}
\caption{Global balanced loop insertion applied to a  curve $\gamma$. Since $\gamma$ has empty positively oriented loops it is not \SO (by \lemref{EmptyLoop}).
The curve $M(\gamma)$ is strongly irreducible and \SO\ by \thmref{GBLI}. }
\label{fig:GBLI}
\end{figure}

We now show that this operation transforms any curve $\gamma$ with
positive outer basepoint and $\whit{\gamma} = 1$ into a strongly irreducible curve
and hence a \SO\ curve by \thmref{THSO}.

\begin{theorem}\label{thm:GBLI}
Let $\gamma$ have positive outer basepoint and $\whit{\gamma} = 1$. Then $M(\gamma)$ is strongly irreducible and \SO.
\end{theorem}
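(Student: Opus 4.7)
The plan is to verify that $M(\gamma)$ is strongly irreducible and then invoke \thmref{THSO}. Two preliminary observations make the latter applicable: $M(\gamma)$ has a positive outer basepoint since its basepoint lies on the outermost positive wrap, and by \lemref{WHSSD} we have $\whit{M(\gamma)} = \whit{\gamma} + (2\abs{\gamma}+1)(-1) + (2\abs{\gamma}+1)(+1) = 1$, with each inserted negative loop contributing $-1$ and each positive wrap contributing $+1$.

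To verify strong irreducibility I partition the non-basepoint vertices of $M(\gamma)$ into three classes. For the $2\abs{\gamma}+1$ wrap vertices $v_1,\dots,v_{2\abs{\gamma}+1}$ (numbered outermost to innermost), the direct split at $v_j$ is $\Wr_+^{2\abs{\gamma}+1-j}$ applied to the intermediate curve $\gamma'$ obtained from $\gamma$ after the $\text{I}_b$-moves; since $\whit{\gamma'} = 1 - (2\abs{\gamma}+1) = -2\abs{\gamma}$, \lemref{WHSSD} gives Whitney index $-2\abs{\gamma} + (2\abs{\gamma}+1-j) = 1-j \leq 0$ for $j \geq 1$. The $2\abs{\gamma}+1$ basepoints of the inserted negative loops each have a simple negative loop as their direct split, with Whitney index $-1$.

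The hard part will be the third class: an original self-intersection $p_i \in V(\gamma)$. The wraps lie outside $\gamma$ and so contribute nothing to $(M(\gamma))_{p_i}$, while every negative loop inserted on an edge of $G(\gamma)$ appearing within $\gamma_{p_i}$ has its basepoint $\subset p_i$. Applying \lemref{WHSSD} to the decomposition of $(M(\gamma))_{p_i}$ that separates $\gamma_{p_i}$ from those loops yields $\whit{(M(\gamma))_{p_i}} = \whit{\gamma_{p_i}} - N$, where $N \geq 2\abs{\gamma_{p_i}}+1$ is the number of such edges. To bound $\whit{\gamma_{p_i}}$ from above, I would iteratively peel off simple loops from $\gamma_{p_i}$ via direct splits, producing a decomposition into at most $\abs{\gamma_{p_i}}+1$ simple loops (each loop's basepoint being a distinct vertex of $V(\gamma_{p_i})$); since each simple loop has Whitney index $\pm 1$, \lemref{WHSSD} gives $\whit{\gamma_{p_i}} \leq \abs{\gamma_{p_i}}+1$. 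Combining, $\whit{(M(\gamma))_{p_i}} \leq (\abs{\gamma_{p_i}}+1) - (2\abs{\gamma_{p_i}}+1) = -\abs{\gamma_{p_i}} \leq 0$. This completes the strong irreducibility check, and \thmref{THSO} finishes the argument.
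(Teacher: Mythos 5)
Your proof is correct and follows essentially the same route as the paper's: the same three-way classification of the vertices of $M(\gamma)$ (wrap basepoints, inserted-loop basepoints, original vertices), the same count of at least $2\abs{\gamma_{p_i}}+1$ inserted loops on each direct split owing to possible subdivision of its edges in $G(\gamma)$, and the same final bounds $1-j$, $-1$, and $-\abs{\gamma_{p_i}}$. The only substantive variations are technical: you obtain the upper bound $\whit{\gamma_{p_i}} \leq \abs{\gamma_{p_i}}+1$ by peeling off simple loops and applying \lemref{WHSSD}, where the paper instead invokes the Titus--Whitney identity $\whit{\gamma}=\sum_{v}\sgn(v)$; and you compute the Whitney index of each wrap-vertex direct split directly as an iterated wrap of the post-$\text{I}_b$ curve $\gamma'$, where the paper passes through the complementary indirect split. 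Both substitutions are valid and arguably more self-contained, since they rely only on machinery already established in the paper.
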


\begin{proof}
Let $\gamma$ be a curve with positive outer basepoint. We will utilize
 the identity $\whit{\gamma} = \sum_{v \in V(\gamma)} \sgn(v)$ shown by
 Titus and Whitney \cite{TitusNC,WH37}; note the inclusion of the
 basepoint here.
It
 follows immediately that $\whit{\gamma} \leq \abs{\gamma} + 1$ for
 $\gamma \in \C$ with outer basepoint. While a direct split $\gamma_v$
 may not have an outer basepoint, we can instead consider a curve
 $\gamma_v'$ with the same image and orientation as $\gamma_v$, and
 hence the same number of intersection points, and an outer
 basepoint. We then have $\whit{\gamma_v}
 = \whit{\gamma_v'} \leq \abs{\gamma_v'}+1= \abs{\gamma_v} +1$. On the
 other hand, since every edge of $\gamma_v$ received at least one
 negatively oriented loop, as edges of $\gamma_v$ may be further
 subdivided on $\gamma$, we note that we inserted at least
 $2\abs{\gamma_v} + 1$ negatively oriented loops on the direct split
 $\gamma_v$.
Thus, if we write $i : V(\gamma) \rightarrow V(M(\gamma))$
for the natural inclusion map and set $\tilde{v} = i(v)$, then we see $ \whit{M(\gamma)_{\tilde{v}}} \leq -\abs{\gamma_v} \leq 0$.
 Hence, all the vertices on $M(\gamma)$ that
 came from $\gamma$ will not yield direct splits of Whitney index 1 or
 greater.

Now, the only other vertices to consider are the basepoints of the new
 negatively oriented loops and the basepoints of the wraps. Clearly,
 for any vertex $u$ of the former kind, we have
 $\whit{M(\gamma)_u}=-1$ for the direct split at $u$.
We now address the basepoints of the wraps. Let
$M(\gamma)_i$ be the direct split and $M(\gamma)_{i^*}$ be the indirect split 
at the $i$-th vertex of $M(\gamma)$. 
%
By definition $M(\gamma)$ contains $2|\gamma|+1$ outer wraps, which implies
$\whit{M(\gamma)_{i^*}} = i$  for all $i \in \{1, \dots, 2\abs{\gamma}+1 \}$.
 And since direct splits
 and indirect splits are complementary, it follows from \lemref{WHSSD} that
$\whit{M(\gamma)}=\whit{M(\gamma)_i}+\whit{M(\gamma)_{i^*}}$ and hence
 $\whit{M(\gamma)_i} = 1-i \leq 0$ for any $i \in \{1, \ldots, 2\abs{\gamma}+1\}$. Thus, $M(\gamma)$ is indeed strongly irreducible.
\qed
\end{proof}


\section{Discussion}
We introduced new curve classes (zero-obstinance, irreducable, and
strongly irreducable curves; see \figref{smallerCurveLattice}), which
help us understand \SO\ curves and interior boundaries. We proved
combinatorial results and showed that wrapping a curve resolves
obstinance. These new mathematical foundations for \SO\ curves and
interior boundaries could pave the way for related algorithmic
questions.  For example, is it possible to decide whether a curve is
\SO\ in $o(N^3)$ time? How fast can one decide \SO ness of a
curve on the
sphere?
Can one decide irreducability in $o(n^2)$ time, even in the presence of a large number of linked subcurves?

\section{Acknowledgments}
We thank Brittany Terese Fasy for fruitful discussions, in particular
in the early stages of the research, as well as for proof-reading.
This paper is based on the Honors thesis of the first
author \cite{Parker}. Much of this research was enabled by the use of
a computer program \cite{Program} which can determine whether a plane curve is \SO,
compute it's minimum homotopy area, and display the \SO\ decomposition associated
with a minimum homotopy. \figref{Ideal} and \figref{SPD} were created with this program.

\bibliographystyle{plainurl}
\bibliography{bibliography}

\appendix

\section{Whitney Indices \& Loop Decompositions}\label{appendix:whit}

In this section are interested in a refinement of \SO\ decompositions. Let $\Omega =(\gamma_i)_{i=1}^n$ 
be a \SO\ decomposition. If each $\gamma_i$ is simple,  i.e. a loop,
then we call $\Omega$ a \emph{loop decomposition}. See \figref{SSD} for an example.

\begin{figure}[h]
        \centering
        \includegraphics[height=.35\textwidth]{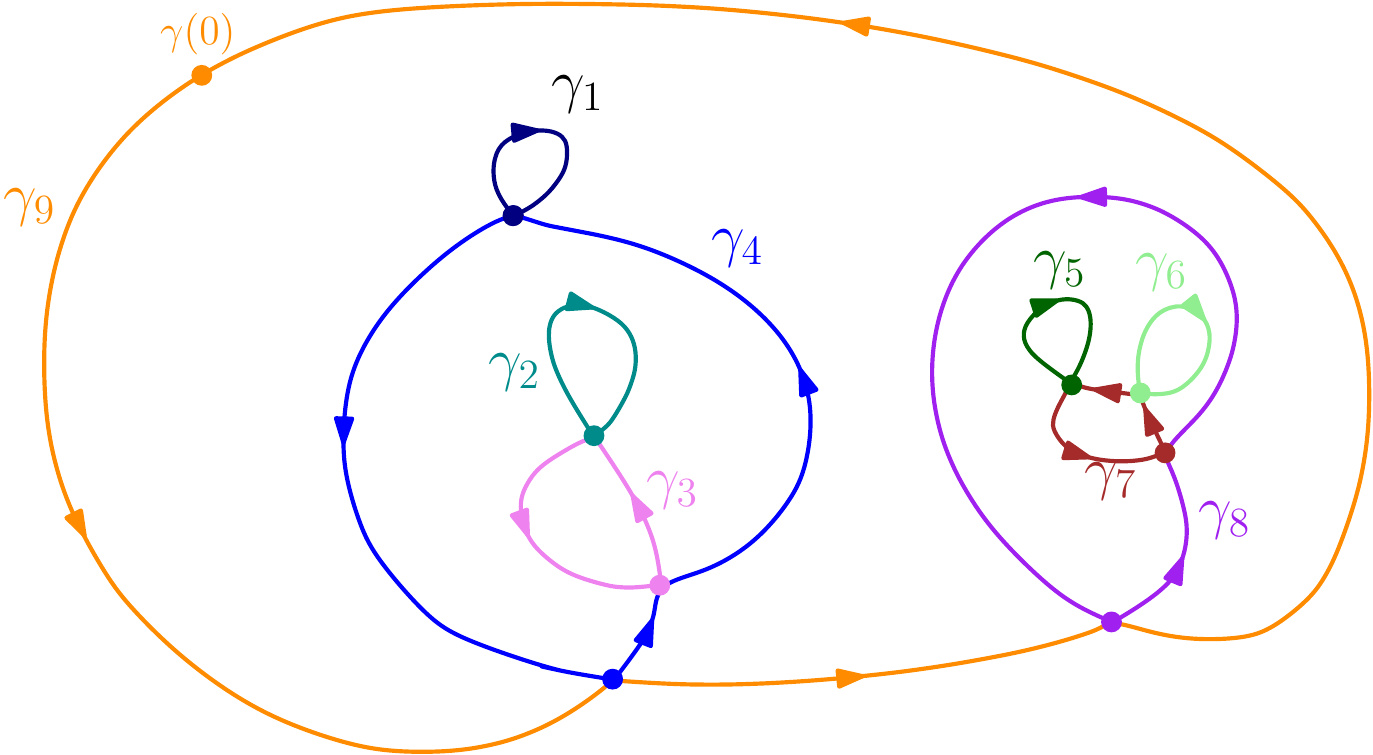}
        \caption{A loop decomposition of a curve $\gamma$.}
        \label{fig:SSD} 
\end{figure}

We now recall a construction of Seifert \cite{Gauss,Seifert}. He
introduced a decomposition of plane curves by performing so-called
``uncrossing moves", which essentially split a curve $\gamma$ at a
vertex $v$ into the two (nearly) disjoint pieces $\gamma_v$ and
$\gamma_{v^*}$. See \figref{UncrossingMove}. If we cut the two curves
at the vertex $v$, and smooth them,
we obtain two completely disjoint plane curves. Iterating this process
across many vertices of $\gamma$, one can achieve a decomposition of
the original curve $\gamma$ into a set of Jordan curves.

\begin{figure}[ht]
	\centering
	\includegraphics[width=.42\textwidth]{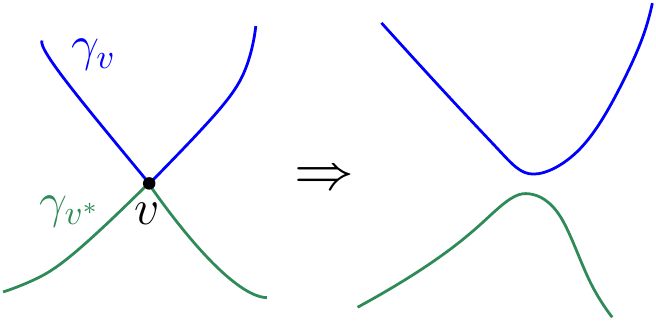}
	\caption{An uncrossing move applied to a vertex $v$, splitting the curve into two pieces.}
	\label{fig:UncrossingMove}
\end{figure}

Suppose $\gamma$ is decomposed into Jordan curves $\{C_i\}_{i=1}^k$
using these uncrossing moves. Then Seifert and Gauss \cite{Gauss,Seifert} showed that
$\whit{\gamma} = \sum_{i=1}^k \whit{C_i}$. Using our terminology and replacing
the Jordan curves with loops in a loop decomposition, we obtain the
equivalent fact:

\begin{lemma}[Compute Whitney Index]\label{lem:SSDsign}
Let $\gamma \in \C$, let $\Omega$ be a \decomp\ of $\gamma$, and let $n_+$ be the
    number of positively oriented loops and~$n_-$ be the number of negatively oriented loops in $\Omega$. Then $\whit{\gamma} = n_+-n_-$.
\end{lemma}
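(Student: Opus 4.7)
The plan is to reduce \lemref{SSDsign} directly to the Seifert--Gauss formula $\whit{\gamma} = \sum_{i=1}^k \whit{C_i}$ quoted just above the lemma, by establishing that any loop decomposition of $\gamma$ can be realized as a sequence of Seifert uncrossing moves whose resulting Jordan curves are exactly the loops of $\Omega$. Once this correspondence is in hand, the lemma drops out because each loop, being a simple closed curve, has Whitney index $+1$ when positively oriented and $-1$ when negatively oriented.

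Concretely, I would argue by induction on $n = |\Omega|$. For the base case $n = 1$, the curve $\gamma$ is itself a Jordan curve, so no uncrossing is needed and $\whit{\gamma} = \pm 1$ agrees with $n_+ - n_-$ by inspection. For the inductive step, let $\gamma_1$ be the first loop in $\Omega$, based at some vertex $v \in V(\gamma)$. Because $\gamma_1$ is simple, the two strands of $\gamma$ through $v$ separate cleanly: one belongs to $\gamma_1$ and the other to the complementary subcurve $C_1 = \gamma \setminus \gamma_1$. Performing a Seifert uncrossing at $v$ therefore partitions $\gamma$ into $\gamma_1$ (already a Jordan curve) and $C_1$ (a curve of strictly smaller complexity). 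The tail $(\gamma_2, \ldots, \gamma_n)$ of $\Omega$ is a loop decomposition of $C_1$, so by the inductive hypothesis there is a sequence of uncrossings of $C_1$ producing Jordan curves with the same images and orientations as $\gamma_2, \ldots, \gamma_n$. Concatenating this with the initial uncrossing at $v$ yields a Seifert uncrossing decomposition of $\gamma$ into the Jordan curves $\gamma_1, \ldots, \gamma_n$.

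Applying the Seifert--Gauss formula then gives $\whit{\gamma} = \sum_{i=1}^n \whit{\gamma_i}$. Since each $\gamma_i$ is a Jordan curve, $\whit{\gamma_i} = +1$ precisely when $\gamma_i$ is positively oriented and $\whit{\gamma_i} = -1$ when it is negatively oriented, so the sum collapses to $n_+ - n_-$. The only subtle point is confirming that the orientation of $\gamma_i$ as a loop in $\Omega$ matches the orientation it inherits as a Jordan curve after the uncrossing at its basepoint; this is immediate because Seifert uncrossing is a purely local rerouting that preserves the direction of traversal on both resulting strands, and the orientation of a direct split is by definition inherited from $\gamma$. I expect the verification of this orientation bookkeeping, together with a careful statement of the inductive correspondence between loop decompositions and uncrossing sequences, to be the main (though mild) obstacle.
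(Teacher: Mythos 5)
Your proposal is correct and takes essentially the same route as the paper: the paper derives the lemma by appealing to the Seifert--Gauss additivity formula for uncrossing decompositions and simply asserting that loop decompositions are ``the equivalent fact.'' You supply explicitly, via induction on $|\Omega|$, the correspondence between a loop decomposition and a sequence of uncrossing moves that the paper takes for granted, which is a sound (and slightly more careful) rendering of the same argument.
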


As a consequence of this lemma, we can prove linearity of Whitney indices across a direct split decomposition. 

\WHSSD*

 \begin{proof}
The key is to piece together \decomp's of each subcurve $C \in \Omega$. Write $\Omega = (C_i)_{i=1}^k$ and take $\Psi_i$ to be a loop decomposition of $C_i$. Then $\Psi = (\Psi_1, \dots, \Psi_k) $ is a loop
 decomposition of $\gamma$. Since $\whit{C_i} = p_i - n_i$ where $p_i$ is the number of positively oriented
 subcurves in $\Psi_i$ and $n_i$ the number of negatively oriented subcurves, we conclude that
 \[\whit{\gamma} = \sum_{i=1}^k p_i - \sum_{i=1}^k n_i = \sum_{i=1}^k (p_i - n_i) = \sum_{i=1}^k \whit{C_i}\; .\]
 \qed
\end{proof}


\section{Lattice}
\label{appendix:lattice}
We introduce two more classes of curves.
We say a face $F$ is \myemph{good} when its depth is equal to its winding
number. If a curve $\gamma\in\C$ is positive consistent and all faces
on $G(\gamma)$ are good then we call the curve \myemph{good}.
%
%
%
We call a curve \myemph{basic} if all of its \SO\ decompositions are \decomp s. That is, the only
\SO\ decompositions are decompositions into loops.
By \thmref{Ideal}, basic curves with zero obstinance can be decomposed into good curves.

We define {\sc Simple, Basic-Zero-Obstinance, Zero-Obstinance} as
classes of those curves  that have the property described by the class
name. The classes {\sc SO$^+$, Interior-Boundary$^+$, Consistent$^+$, Good$^+$} consist of the curves  with
the {\em positive} property described by the class name (positive \SO, positive interior boundary, positive consistent, and good curves that are positive consistent).
\figref{CurveClasses} and \thmref{CurveClasses} show the relationship between
these curve~classes.
The curves in \figref{CurveClasses} show that the inclusions
in parts ~\ref{CC:1},~\ref{CC:2}, \ref{CC:3} of \thmref{CurveClasses} are proper.
We first state a lemma that we need in the proof of the theorem.

\begin{lemma}[Negatively Oriented Loop]\label{lem:negloop}
  Let $\gamma \in \C$ be a non-simple, positive \SO\ curve with positive outer basepoint. Then $\gamma$ has a negatively oriented loop $\gamma_v$.
\end{lemma}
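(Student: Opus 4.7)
The plan is to proceed by strong induction on $|\gamma|$ and analyze the first loop of the simple path decomposition of $\gamma$. Let $\gamma_1 = \gamma_{v_1}$ denote this first loop, where $v_1$ is the vertex that minimizes the second-visit time $t_{v_1}^*$. The same argument used in the proof of \lemref{OL} will show that $\gamma_1$ is outwards, and the minimality of $t_{v_1}^*$ will additionally forbid any vertex from being contained in $v_1$ (any such vertex would have a strictly smaller second-visit time), so $\gamma_1$ is simple, i.e., a loop of $\gamma$. By the sign--orientation correspondence for outwards loops, $\gamma_1$ is negatively oriented if and only if $\sgn(v_1) = -1$, and in that subcase we are done.

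If instead $\gamma_1$ is positively oriented, then \lemref{EmptyLoop} forces $\gamma_1$ to be non-empty---otherwise $\gamma$ would fail to be \SO---so $v_1$ is linked with at least one vertex. I then plan to invoke the Blank cut decomposition guaranteed by \thmref{ESO} and select a Blank cut that excises a simple positive Jordan piece from $\gamma$, yielding a smaller \SO\ curve $\gamma'$ still equipped with a positive outer basepoint. If $\gamma'$ is simple, a direct examination of the excised Jordan piece will reveal a negative loop of $\gamma$. Otherwise, the inductive hypothesis applied to $\gamma'$ produces a negatively oriented loop $\gamma'_u$, and a vertex-tracking argument will pull this back to a simple negatively oriented direct split $\gamma_u$ of $\gamma$.

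The base case will be $|\gamma|=2$: Whitney's identity (as invoked in the proof of \thmref{GBLI}, with the positive outer basepoint contributing $+1$) rules out $|\gamma|=1$ on parity grounds, and for $|\gamma|=2$ a short case analysis over the three possible combinatorial relations between the two self-intersection vertices (linked, separate, or contained), combined with \lemref{EmptyLoop}, produces a negatively oriented loop in each configuration. The hard part will be the inductive correspondence of loops between $\gamma$ and $\gamma'$: Blank cuts can locally alter linking and containment relations among vertices, so careful bookkeeping is needed to ensure that the negatively oriented loop of $\gamma'$ remains a simple direct split of the original $\gamma$ with orientation preserved.
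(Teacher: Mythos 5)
There is a genuine gap at the heart of your induction: the step that ``pulls back'' the negatively oriented loop from the post-cut curve $\gamma'$ to $\gamma$. A Blank cut does not produce a subcurve of $\gamma$; it replaces the subpath $\widetilde{P}\subseteq[\gamma]$ by a new arc $P$ that is not contained in $[\gamma]$ (and may even create vertices of $\gamma'$ that do not exist on $\gamma$). A negatively oriented loop $\gamma'_u$ of $\gamma'$ may run along part of $P$, or its basepoint $u$ may be a new vertex, in which case there is nothing to pull back; and even when $u\in V(\gamma)$ and $[\gamma'_u]\subseteq[\gamma]$, the direct split $\gamma_u$ of the \emph{original} curve can differ from $\gamma'_u$ (the excised piece $\widetilde{P}$ may lie between the two visits of $u$ on $\gamma$), so $\gamma_u$ need not be simple. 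You flag this as ``careful bookkeeping,'' but it is the entire difficulty, and no idea is offered to resolve it. A secondary issue is the claim that $\gamma'$ ``still'' has a positive outer basepoint: the cut may excise the basepoint into the Jordan piece $C$, so you must re-choose a basepoint on the outer boundary of $\gamma'$, and since loops are defined via \emph{direct} splits, which depend on the basepoint, this also needs an argument. The base-case and first-loop portions of your plan are fine.

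For comparison, the paper avoids induction and cutting altogether: it takes a loop $\gamma_{v_0}$ that is \emph{innermost} with respect to containment of interiors, so that no other loop lies inside $\intt(\gamma_{v_0})$ and every strand of $\gamma$ entering that interior is a simple boundary-to-boundary arc. If $\gamma_{v_0}$ were positively oriented, one sweeps those arcs out of $\intt(\gamma_{v_0})$ by a regular \emph{right} sense-preserving homotopy; by \lemref{RSP_homotopy} the result is still \SO, yet it has an empty positively oriented loop, contradicting \lemref{EmptyLoop}. If you want to salvage your outline, replacing ``first loop of the simple path decomposition'' by ``innermost loop'' and the Blank-cut recursion by this sweeping argument is the natural repair.
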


\begin{proof}
Let $\mathcal{V} = \{ \, v \in V(\gamma) \, | \, \gamma_v \text{ is a loop}\, \}$. 
By \lemref{OL}, we know $\mathcal{V} \neq \emptyset$. 
Define the relation $v \prec w$ for $v,w \in
 \mathcal{V}$ whenever $\intt( v) \subseteq \intt( w)$. It is
  straightforward to verify that $(\mathcal{V}, \prec)$ is a poset. Of
 course, since $|\mathcal{V}| $ is finite, we can choose $v_0 \in
 \mathcal{V}$ minimal with respect to $\prec$.  Now, suppose that
 $\gamma_{v_0}$ were a positive loop. We show this is a contradiction
 to complete the proof. Indeed, by minimality of $v_0$, there are no
 loops of $\gamma$ completely contained inside
 $\intt{\,\gamma_{v_0}}$. This means every time a strand of $\gamma$
 crosses from outside to inside $\gamma_w$, the strand does not cross
 itself inside of $\gamma$. Topologically, then, $\overline{\intt( \gamma_{v_0} )}$
 looks like a disk with finitely many simple arcs
 traveling from boundary to boundary.  By way of a (regular) right
 sense-preserving homotopy, we can sweep each such arc until it no
 longer intersects $\gamma_{v_0}$. As we need only sweep finitely many
 such arcs, let us denote $\gamma'$ as the result of this process. By
 \lemref{RSP_homotopy}, we know $\gamma'$ is \SO. On the other hand,
 $\gamma'$ has an empty positive loop, namely the one we just emptied,
 which contradicts \lemref{EmptyLoop}.

\qed
\end{proof}

\begin{theorem}[Curve Classes]
\label{thm:CurveClasses}
If $\gamma \in \C$ has a positive outer basepoint, then:
\begin{enumerate}
\item {\sc Simple} $\subset$ {\sc SO$^+$} $\subset$ {\sc Interior-Boundary$^+$} $\subset$ {\sc Consistent$^+$}\label{CC:1}
\item {\sc Simple} $\subset$ {\sc Good$^+$} $\subset$ {\sc Basic-Zero-Obstinance} $\subset$ {\sc Zero-Obstinance}\label{CC:2}
\item {\sc Good$^+$} $\subset$ {\sc Interior-Boundary$^+$} $\subset$ {\sc Zero-Obstinance}\label{CC:3}
\item {\sc Consistent$^+$} $\cap$ {\sc Zero-Obstinance} $=$ {\sc Interior-Boundary$^+$}\label{CC:4}
\item {\sc Consistent$^+$} $\cap$ {\sc Basic-Zero-Obstinance} $=$ {\sc Good$^+$}\label{CC:5}
\item {\sc SO$^+$} $\cap$ {\sc Good$^+$} $=$ {\sc Simple}\label{CC:6}
\end{enumerate}
\end{theorem}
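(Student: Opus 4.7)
The plan is to take the six claims in turn, using the equivalences established in Theorems~\ref{thm:ESO}, \ref{thm:EIB}, and \ref{thm:Ideal}, the minimum homotopy decomposition of Theorem~\ref{thm:mhd}, and Lemma~\ref{lem:negloop}. Parts~\ref{CC:1}--\ref{CC:3} are chains of inclusions, while parts~\ref{CC:4}--\ref{CC:6} identify intersections whose nontrivial direction is always $\subseteq$.

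For~\ref{CC:1}, a positively oriented Jordan curve bounds a closed disk whose inclusion into $\mathbb{R}^2$ serves as the required immersion, so {\sc Simple} $\subseteq$ {\sc SO}$^+$; the containment {\sc SO}$^+$ $\subseteq$ {\sc Interior-Boundary}$^+$ is characterization~\ref{ESO:iii} of Theorem~\ref{thm:ESO}; and positive interior boundaries are positive consistent by definition. For~\ref{CC:2}, a simple positively oriented curve has its unique bounded face with depth equal to winding number equal to $1$, so it is good; if $\gamma$ is good then $W(\gamma) = D(\gamma)$, and Lemmas~\ref{lem:minHomotopy_and_windingAreas} and~\ref{lem:minHomotopy_and_depth} sandwich $W(\gamma)\leq\sigma(\gamma)\leq D(\gamma)=W(\gamma)$, giving $\obs(\gamma)=0$. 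To lift goodness to basicness, I would show via Theorem~\ref{thm:mhd} that any \SO\ decomposition $\Omega$ of $\gamma$ must satisfy $\sum_{C\in\Omega}W(C)=W(\gamma)$, and then apply the face-by-face analysis behind~\ref{CC:6} below to each $C\in\Omega$ to conclude that a non-simple $C$ would contradict either the optimality of $\Omega$ or the goodness of $\gamma$ at a face inside a negative loop of $C$. The inclusion {\sc Basic-Zero-Obstinance} $\subseteq$ {\sc Zero-Obstinance} is immediate. Part~\ref{CC:3} follows by combining~\ref{CC:2} with~\ref{CC:4} below.

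For the equalities, the $\supseteq$ directions follow from the inclusions, so I focus on $\subseteq$. For~\ref{CC:4}, given $\gamma$ positive consistent with $\obs(\gamma)=0$, Theorem~\ref{thm:mhd} yields an optimal \SO\ decomposition $\Omega$; the argument from~\ref{EIB:i}$\,\Rightarrow\,$\ref{EIB:iii} of Theorem~\ref{thm:EIB}, pitting a left-sense-preserving subhomotopy against a right-sense-preserving one on a face inside a hypothetical negatively oriented subcurve, rules out any such subcurve, so $\Omega$ consists entirely of positive \SO\ curves, whence~\ref{EIB:iii} of Theorem~\ref{thm:EIB} and Lemma~\ref{lem:WHSSD} identify $\gamma$ as a positive interior boundary. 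For~\ref{CC:5}, take $\gamma\in$~{\sc Consistent}$^+\cap$~{\sc Basic-Zero-Obstinance}: by basicness every \SO\ decomposition $\Omega$ of $\gamma$ is a loop decomposition, and by the preceding argument each loop is positively oriented; Observation~\ref{obs:wnDecomp} then gives that each face $F$ lies inside exactly $w(F,\gamma)$ loops of $\Omega$. Building a simple path from $F$ to $F_{\text{ext}}$ that crosses each enclosing loop transversally exactly once yields $D(F,\gamma)\leq w(F,\gamma)$, and since the reverse inequality always holds $\gamma$ is good.

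The main obstacle will be~\ref{CC:6}, which I would prove by contrapositive: I assume $\gamma$ is positive \SO\ and non-simple, and show that $\gamma$ cannot be good. By Lemma~\ref{lem:negloop}, $\gamma$ contains a negatively oriented loop $\gamma_v$ as a direct split. Let $F$ be an interior face of the Jordan curve $\gamma_v$; across any edge of $\gamma_v$ only $\gamma_v$ contributes to the change in winding, with $w(\cdot,\gamma_v)$ jumping from $-1$ inside to $0$ outside, so every inside-to-outside crossing of $\gamma_v$ increases $w(\cdot,\gamma)$ by exactly $+1$. Any simple path $P$ from $F$ to $F_{\text{ext}}$ must cross $\gamma_v$ with net inside-to-outside count one (Jordan), so $P$ has at least one increasing crossing. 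With $k_\pm$ the numbers of increasing and decreasing crossings of $P$ with $\gamma$, the net $w$-change along $P$ gives $k_+-k_-=-w(F,\gamma)$, and combined with $k_+\geq 1$ this produces
\[
D(F,\gamma)\;=\;k_++k_-\;=\;2k_++w(F,\gamma)\;\geq\;w(F,\gamma)+2,
\]
contradicting goodness. The delicate step will be justifying that a minimizing path can indeed be represented as a transverse simple path crossing $\gamma_v$ only at edge interiors; I would handle this by a standard genericity and perturbation argument.
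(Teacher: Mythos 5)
Your proposal is correct and follows essentially the same route as the paper: the same chain of citations for the inclusions in parts~\ref{CC:1}--\ref{CC:3}, the optimal-decomposition / sense-preserving argument for part~\ref{CC:4}, the positive-loop-decomposition path bound for part~\ref{CC:5}, and the key observation (via Lemma~\ref{lem:negloop}) that a face inside a negatively oriented loop has depth strictly exceeding its winding number, which drives both part~\ref{CC:2} and part~\ref{CC:6}. One harmless misstep: Theorem~\ref{thm:mhd} does not give $\sum_{C\in\Omega}W(C)=W(\gamma)$ for \emph{every} \SO\ decomposition $\Omega$ (only for optimal ones), but you never need that claim---your second branch, which locates a face inside a negatively oriented loop of a non-simple piece $C$ and contradicts goodness there, is exactly the paper's argument and carries the step on its own.
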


\begin{proof}
By definition, a simple curve with a positive outer basepoint is positive \SO, and $+k$-boundaries are positive consistent. By \thmref{ESO}, a positive \SO\ curve is a $1$-boundary, which proves \ref{CC:1}.
A simple curve is trivially good.
By \lemref{minHomotopy_and_windingAreas} and \lemref{minHomotopy_and_depth},
good curves have zero obstinance.
We now show that good curves are basic by contradiction.
    Suppose~$\gamma$ admitted a \SO\ decomposition $\Omega= (\gamma_i)_{i=1}^k$
with a non-simple \SO\ curve $\gamma_j$. Then~$\gamma_j$ must contain a
    negatively oriented loop by \lemref{negloop}.
But we could then create a finer decomposition of~$\gamma$
by decomposing~$\gamma_j$ into loops. Precisely, let
$\Psi$ be a loop decomposition of $\gamma_j$ and consider $\Gamma = (\gamma_1, \dots, \gamma_{j-1}, \Psi, \gamma_{j+1}, \dots, \gamma_k)$.
Then $\Gamma$ is a free subcurve decomposition of $\gamma$ refining $\Omega$. Let $C$ be
a negatively oriented loop in~$\Psi$ and take any face $F$
contained in the interior of $C$. Now, take a path $P$ from~$F$ to the
exterior face on~$G(\gamma)$ such that the depth is monotonically
decreasing along the path~$P$. Since $F$ is contained inside $C$, the path
$P$ must cross $C$ to reach $F_{ext}.$ However, when $P$ crosses past $C$,
we see the depth either decrease by 1 or remain unchanged, while the winding number increases by 1, 
since $C$ is negatively oriented. 
We learn that $wn(F,\gamma)<D(F,\gamma)$, which is a contradiction.
This proves~\ref{CC:2}, with the last inclusion being trivial.
Since a good curve is basic and positive consistent, it is a positive interior boundary by \propref{EIB:iii} of \thmref{EIB}. And interior boundaries have zero obstinance by definition, which proves \ref{CC:3}.

By definition, interior boundaries are consistent and have zero obstinance,
    which proves~\ref{CC:4}.
%
If a curve $\gamma$ is basic and positive consistent, it follows that $\gamma$ admits
a \decomp\ $\Omega$ with only positively oriented subcurves. Since by \obsref{wnDecomp}
$wn(F,\gamma)=\sum_{\gamma_i \in \Omega} wn(F,\gamma_i)$ and each $wn(F,\gamma_i)\in\{0,1\}$,
we must have $wn(F,\gamma_i) \geq D(F,\gamma_i)$. The bound
$wn(F, \gamma) \leq D(F, \gamma)$ holds generally. Thus $\gamma$ is good, which together with \ref{CC:1} and \ref{CC:2} proves \ref{CC:5}.
By \lemref{negloop}, a good curve that is \SO\ may not have a negatively oriented loop
and hence must be
simple, which together with \ref{CC:1} and \ref{CC:2} proves~\ref{CC:6}.
\qed
\end{proof}

\begin{figure}[bhpt]
\center
\includegraphics[width=1\textwidth]{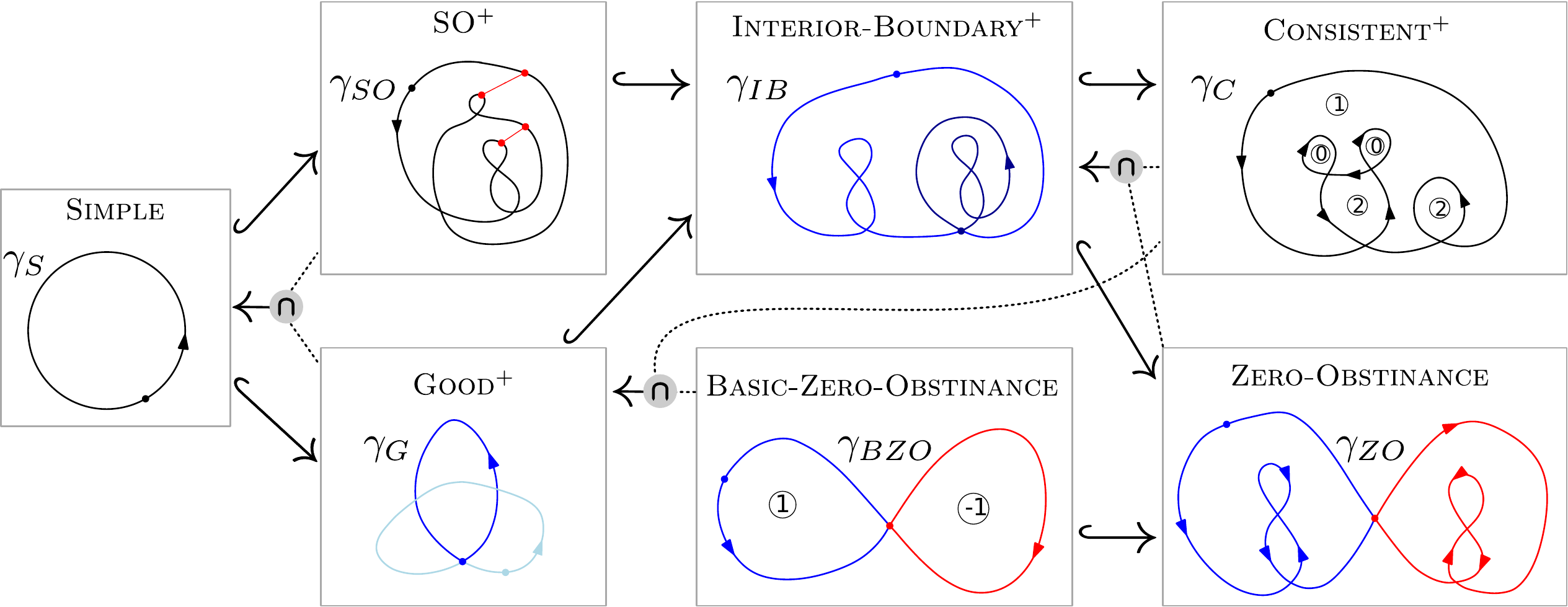}
\caption{Lattice of curve classes.  Solid arrows are inclusions. Two dashed
    lines meet to form an inclusion.
    A member curve of each class is displayed, along
with appropriate information to justify membership in
its curve class: A set of blank cuts of~$\gamma_{SO}$
are shown in red, a \SO\ decomposition of~$\gamma_{IB}$ in blue, the
winding numbers of $\gamma_{C}$, the \decomp s of $\gamma_{G}$ and $\gamma_{BZO}$,
and the \SO\ decomposition of~$\gamma_{ZO}$.
The inclusions are proper:
$\gamma_{BZO}$ is not consistent and consequently not good;
$\gamma_{IB}$ is not good, and since $\whit{\gamma_{IB}} = +2$ it is not \SO;
$\gamma_C$ is not an interior boundary because $\whit{\gamma_C}=1$ but since it has
an empty positively oriented loop it is not \SO;
$\gamma_{ZO}$ is not an interior boundary because it is not consistent.
See \thmref{CurveClasses}.}
\label{fig:CurveClasses}
\end{figure}

\end{document}